\def\VersionLong{}
\def\VersionFinal{}
\newif\ifsupresstikzfig
	\let\VersionWithComments\undefined
	\let\WithReply\undefined
	\newcommand{\LongVersion}[1]{#1}
	\newcommand{\ShortVersion}[1]{}
	\newcommand{\LongVersion}[1]{}
	\newcommand{\ShortVersion}[1]{#1}
\crefname{line}{\text{line}}{\text{lines}} 
\Crefname{line}{\text{Line}}{\text{Lines}} 
\crefname{item}{\text{item}}{\text{items}} 
\crefname{example}{\text{Example}}{\text{Examples}} 
\crefname{assumption}{\text{Assumption}}{\text{Assumptions}} 
\crefname{algorithm}{\text{Algorithm}}{\text{Algorithms}}
\tikzstyle{every node}=[initial text=]
\tikzstyle{location}=[rectangle, rounded corners, minimum size=12pt, draw=black, fill=blue!10, inner sep=2pt]
\tikzstyle{final}=[double]
\tikzstyle{accepting}=[final]
\tikzstyle{defproblem} = [
\newcommand{\defProblem}[3]
{%
\smallskip
\noindent\begin{tikzpicture}%
\draw node[defproblem]{%
\textbf{#1 problem:}\\%
\textsc{Input}: #2\\%
\textsc{Problem}: #3};%
\end{tikzpicture}}%
\newcommand{\gennote}[3]{\todo[linecolor=#2,backgroundcolor=#2!25,bordercolor=#2]{#3: #1}}
\newcommand{\js}[1]{\gennote{#1}{blue}{JS}}
\newcommand{\mw}[1]{\gennote{#1}{orange}{MW}}
\newcommand{\ks}[1]{{\gennote{#1}{purple}{KS}}}
\newcommand{\instructions}[1]{{\gennote{\bfseries #1}{red}{Instructions}}}
\newcommand{\reviewer}[2]{{\gennote{``#2''}{purple}{Reviewer #1}}}
 	\definecolor{colorok}{RGB}{80,80,150}
	\definecolor{colorok}{RGB}{0,0,0}
\newcommand{\eg}{\textcolor{colorok}{e.\,g.,}\xspace}
\newcommand{\ie}{\textcolor{colorok}{i.\,e.,}\xspace}
\definecolor{coloract}{named}{black}
\definecolor{colorclock}{named}{black}
\definecolor{colorconst}{named}{black}
\definecolor{colordisc}{named}{black}
\definecolor{colorloc}{rgb}{0.4, 0.4, 0.65}
\definecolor{colorparam}{named}{black}
\newif\iftikzgnuplot
\newcommand{\powerset}[1]{\mathcal{P} ({#1})}
\newcommand{\subseteqfin}{\mathrel{\subseteq_{\mathrm{fin}}}}
\newcommand{\setdiff}{\triangle}
\newcommand{\N}{\mathbb{N}}
\newcommand{\AP}{\mathbf{AP}}
\newcommand{\Next}{\mathcal{X}}
\newcommand{\Until}[1][]{\mathrel{\mathcal{U}_{#1}}}
\newcommand{\Glb}{\Box}
\newcommand{\Evt}{\Diamond}
\newcommand{\Pmax}{\mathrel{\mathbb{P}_{\mathrm{max=?}}}}
\newcommand{\Mealy}{\hat{\mathcal{M}}}
\newcommand{\SUT}{\mathcal{M}}
\newcommand{\MDP}{\hat{\mathcal{M}}}
\newcommand{\MDPi}[1]{\hat{\mathcal{M}}^{#1}}
\newcommand{\compose}[2]{{#1}_{#2}}
\newcommand{\Lg}{\mathcal{L}}
\newcommand{\targetLg}{\mathcal{L}_{\mathrm{tgt}}}
\newcommand{\Alphabet}{\Sigma}
\newcommand{\A}{\mathcal{A}}
\newcommand{\hypothesisA}{\mathcal{A}_{\mathrm{cnd}}}
\newcommand{\emptyword}{\varepsilon}
\newcommand{\word}{w}
\newcommand{\cex}{\mathit{cex}}
\newcommand{\Spec}{\varphi}
\newcommand{\strategy}{s}
\newcommand{\trace}{\sigma}
\newcommand{\traceWithoutSuffix}{\sigma^{-}}
\newcommand{\otraceSingle}{o}
\newcommand{\INPUT}{\Sigma^{\mathrm{in}}}
\newcommand{\OUTPUT}{\Sigma^{\mathrm{out}}}
\newcommand{\action}{a}
\newcommand{\observation}{b}
\newcommand{\Dist}[1][]{\mathrel{\mathit{Dist}({#1})}}
\newcommand{\Prob}{\mathbb{P}}
\newcommand{\prefix}{p}
\newcommand{\PrefixSet}{P}
\newcommand{\ExtPrefixSet}{(\PrefixSet \cdot \Alphabet)}
\newcommand{\suffix}{s}
\newcommand{\SuffixSet}{S}
\newcommand{\TracePool}{\mathcal{S}}
\newcommand{\COMPATIBLE}{\mathbf{compatible}}
\newcommand{\EQROW}{\mathbf{eqRow}}
\newcommand{\ourTool}{ProbBBC}
\newcommand{\ourToolOnlyClassic}{ProbBBC w/o str.\ comp.}
\newcommand{\baselineMethod}{\textsf{ProbBlackReach}}
\newcommand{\Lstar}{\ensuremath{\mathrm{L^{*}}}}
\newcommand{\LstarMDP}{\ensuremath{\Lstar_{\mathrm{MDP}}}}
\newcommand{\Aalergia}{\textsc{Aalergia}}
\newcommand{\SlotMachine}{\textsf{Slot}}
\newcommand{\SlotMachineWithSuppressedOutputs}{\textsf{SlotLimitedObs}}
\newcommand{\MQTT}{\textsf{MQTT}}
\newcommand{\TCP}{\textsf{TCP}}
\newcommand{\FirstGridWorld}{\textsf{GridWorldSmall}}
\newcommand{\SecondGridWorld}{\textsf{GridWorldLarge}}
\newcommand{\SharedCoin}{\textsf{SharedCoin}}
\newcommand{\RandomGridWorld}{\textsf{RandomGridWorld}}
\newcommand\FIN{\mathit{fin}}
\newcommand{\goodCell}{\cellcolor{green!25}\bf}
\tikzstyle{rqanswer} = [
\newcommand{\rqanswer}[2]{\LongVersion{\smallskip}\noindent%
\begin{tikzpicture}%
\draw node[rqanswer]{\textbf{Answer to {#1}}:{ #2}};%
\end{tikzpicture}\vspace{0em}}
\begin{document}

\title{Probabilistic Black-Box Checking via Active MDP Learning}
\thanks{\ifdefined\VersionLong%
This is the author version of the paper of the same name accepted to International Conference on Embedded Software (EMSOFT), 2023%
\else%
This article appears as part of the ESWEEK-TECS special issue and was presented in the International Conference on Embedded Software (EMSOFT), 2023\fi}

\author{Junya Shijubo}
\email{shijubo@fos.kuis.kyoto-u.ac.jp}
\orcid{0000-0002-2853-1159}
\author{Masaki Waga}
\email{mwaga@fos.kuis.kyoto-u.ac.jp}
\orcid{0000-0001-9360-7490}
\author{Kohei Suenaga}
\orcid{0000-0002-7466-8789}
\email{ksuenaga@gmail.com}

\affiliation{%
  \institution{Kyoto University}
  \streetaddress{Yoshida-honmachi, Sakyo-ku}
  \city{Kyoto}
  \postcode{606-8501}
  \country{Japan}
}

\renewcommand{\shortauthors}{Junya Shijubo, Masaki Waga, and Kohei Suenaga}

\begin{abstract}  
We introduce a novel methodology for testing stochastic black-box systems, frequently encountered in embedded systems. Our approach enhances the established \emph{black-box checking (BBC)} technique to address stochastic behavior. Traditional BBC primarily involves iteratively identifying an input that breaches the system's specifications by executing the following three phases: the \emph{learning} phase to construct an automaton approximating the black box's behavior, the \emph{synthesis} phase to identify a candidate counterexample from the learned automaton, and the \emph{validation} phase to validate the obtained candidate counterexample and the learned automaton against the original black-box system. Our method, ProbBBC, refines the conventional BBC approach by (1) employing an active Markov Decision Process (MDP) learning method during the learning phase, (2) incorporating probabilistic model checking in the synthesis phase, and (3) applying statistical hypothesis testing in the validation phase. ProbBBC uniquely integrates these techniques rather than merely substituting each method in the traditional BBC; for instance, the statistical hypothesis testing and the MDP learning procedure exchange information regarding the black-box system's observation with one another. The experiment results suggest that ProbBBC outperforms an existing method, especially for systems with limited observation.
\end{abstract}

\begin{CCSXML}
<ccs2012>
   <concept>
       <concept_id>10003752.10003790.10011192</concept_id>
       <concept_desc>Theory of computation~Verification by model checking</concept_desc>
       <concept_significance>500</concept_significance>
       </concept>
   <concept>
       <concept_id>10011007.10011074.10011099.10011692</concept_id>
       <concept_desc>Software and its engineering~Formal software verification</concept_desc>
       <concept_significance>300</concept_significance>
       </concept>
   <concept>
       <concept_id>10010520.10010553.10010562</concept_id>
       <concept_desc>Computer systems organization~Embedded systems</concept_desc>
       <concept_significance>300</concept_significance>
       </concept>
 </ccs2012>
\end{CCSXML}

\ccsdesc[500]{Theory of computation~Verification by model checking}
\ccsdesc[300]{Software and its engineering~Formal software verification}
\ccsdesc[300]{Computer systems organization~Embedded systems}

\keywords{testing, stochastic systems, automata learning, probabilistic model checking}

\received{20 February 2007}
\received[revised]{12 March 2009}
\received[accepted]{5 June 2009}

\maketitle

\ifdefined\VersionWithComments%
	\textcolor{red}{\textbf{This is the version with comments. To disable comments, comment out line~3 in the \LaTeX{} source.}}
\fi

\instructions{(EMSOFT 2023) A blind review process will be enforced. Authors must not reveal their identity directly or indirectly. Papers should not exceed 20 pages in ACM Transactions one-column format for the initial submission. References, appendix, etc. are all counted towards the 20-page limit. If the paper is recommended for revision after the first round of reviews, then the revised manuscript should not exceed 25 pages in the same format.}

\js{hello}
\mw{hello}
\ks{hello}

\section{Introduction}\label{section:introduction}

Embedded systems (ESs), including cyber-physical systems (CPSs) and IoT systems, are often safety-critical systems, making their safety assurance crucial.
One of the characteristics of such systems is that their behavior is governed by stochastic disturbances (\eg{} due to communication error or uncertainty in physical environments), not only by the external inputs (\eg{} from a controller or other agents).
Another notable characteristic is that they are often black-box systems due to their proprietary or machine-learned components.  These characteristics make testing and verification of such systems challenging.

\emph{Black-box checking (BBC)}~\cite{DBLP:conf/forte/PeledVY99} is a technique to apply model checking to black-box systems via model learning. 
Given a black-box system $\SUT$ under test (SUT) and a temporal logic formula $\Spec$, it tries to find an input sequence witnessing $\SUT \not\models \Spec$ by iterating the following three phases:
\begin{description}
\item[Learning phase] learns an automaton $\Mealy$ that approximates the behavior of $\SUT$ by using an active learning procedure such as \Lstar{}~\cite{DBLP:journals/iandc/Angluin87}.
\item[Synthesis phase] tries to synthesize an execution trace $\trace$ \LongVersion{witnessing}\ShortVersion{showing} $\Mealy \not\models \Spec$ by model checking.
\item[Validation phase] validates whether the found $\trace$ also witnesses $\SUT \not\models \Spec$.  If $\trace$ does not witness $\SUT \not\models \Spec$, $\trace$ is an evidence of $\SUT \neq \Mealy$, and BBC goes back to the learning phase using $\trace$ to refine $\Mealy$. When no witness for $\Mealy \not\models \Spec$ is found in the synthesis phase, this phase tries to find an evidence of $\SUT \neq \Mealy$, typically through random testing.
\end{description}

There have been various extensions of BBC~\cite{DBLP:conf/tacas/GrocePY02,DBLP:conf/forte/ElkindGPQ06,DBLP:journals/isse/MeijerP19}, including the ones focusing on CPSs~\cite{DBLP:conf/hybrid/Waga20,DBLP:conf/rv/ShijuboWS21}.
However, these techniques are limited to \emph{deterministic} systems because the learned automaton is \emph{deterministic}, \eg{} a DFA or a Mealy machine.

\emph{Markov decision processes (MDP)}, an extension of Markov chains so that external inputs control the transition distribution, are widely used to formalize systems with external inputs and probabilistic branching, 
By fixing a \emph{strategy} $\strategy$, a function to decide the input to be fed to the system, a Markov chain $\compose{\MDP}{\strategy}$ can be constructed from an MDP $\MDP$.

Given an MDP $\MDP$ modeling a system and a temporal logic formula $\Spec$ representing the specification, \emph{(quantitative) probabilistic model checking (PMC)}~\cite{DBLP:reference/mc/BaierAFK18} computes the maximum (or minimum) satisfaction probability $p_{\compose{\MDP}{\strategy},\Spec}$ of $\Spec$ on $\MDP$ as well as a strategy $\strategy$ to realize it. Such probability can be used as a safety measure of the system under verification. 
Although PMC returns the exact probability, it is challenging to apply it to real-world ESs because they are rarely white-box, and modeling the system under verification as an MDP is a non-trivial task.

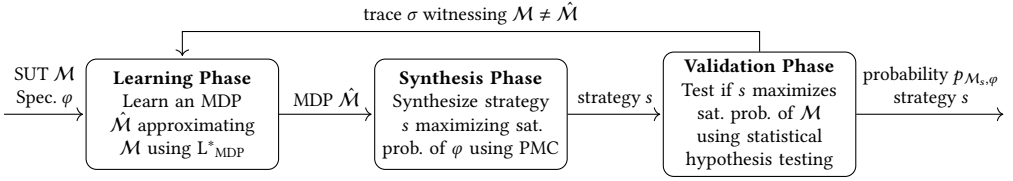
\begin{figure}[tbp]
 \centering
 \begin{tikzpicture}[shorten >=1pt,align=center,node distance=1.75cm,scale=0.725,every node/.style={transform shape}]
  \tikzset{phase/.style={
    rectangle,
    draw, rounded corners,
    text width=3.3cm,
    minimum width=3cm,
    minimum height=2cm
  }}
  \node (start) at (0,0) {};
  \node[phase,node distance=1.5cm,right=of start] (learning) {\textbf{Learning Phase}\\Learn an MDP $\MDP$ approximating $\SUT$ using \LstarMDP{}};
  \node[phase,right=of learning] (synthesis) {\textbf{Synthesis Phase}\\Synthesize strategy $\strategy$ maximizing sat.\ prob.\ of $\Spec$ using PMC};
  \node[phase,right=of synthesis] (validation) {\textbf{Validation Phase}\\Test if $\strategy$ maximizes sat. prob. of $\SUT$ using statistical hypothesis testing};
  \node[node distance=2.75cm,right=of validation] (end) {};

  \node[node distance=1cm,above=of validation] (validation_above) {};
  \node[node distance=1cm,above=of learning] (learning_above) {};

  \path[->]
  (start) edge node[above,align=center]{SUT $\SUT$\\Spec. $\Spec$} (learning)
  (learning) edge node[above] {MDP $\MDP$} (synthesis)
  (synthesis) edge node[above] {strategy $\strategy$} (validation)
  (validation) edge node[above,align=center] {probability $p_{\compose{\SUT}{\strategy},\Spec}$\\strategy $\strategy$} (end)
  ;
  \draw[->] (validation) -- ($(validation) + (0,1.5)$) node[above] at ($(synthesis) + (0,1.5)$) {trace $\trace$ witnessing $\SUT \neq \MDP$} -- ($(learning) + (0,1.5)$) -- (learning);
 \end{tikzpicture}
 \caption{Outline of probabilistic black-box checking. Given a black-box SUT $\SUT$ and an LTL formula $\Spec$, it estimates the maximum satisfaction probability $p_{\compose{\SUT}{\strategy},\Spec}$ of $\Spec$ on $\SUT$ as well as a strategy $\strategy$ to realize it.}%
 \label{figure:illustration}
\end{figure}

In this paper, we propose \emph{probabilistic black-box checking (ProbBBC)}, a quantitative extension of BBC for stochastic systems.
 Given a black-box SUT $\SUT$ and a \emph{linear temporal logic (LTL)}~\cite{DBLP:conf/banff/Vardi95} formula $\Spec$, ProbBBC returns an estimation of the maximum satisfaction probability $p_{\compose{\SUT}{\strategy},\Spec}$ of $\Spec$ with a strategy $\strategy$ to realize it.
ProbBBC can be used, for example, for the following purposes.

\begin{example}
 [ProbBBC for debugging]
 Consider an IoT system $\SUT$ that uses the MQTT protocol~\cite{MQTTv3} for communication. Due to the wireless network's instability, software bugs, or an unstable power supply, communication may not always be stable. For debugging such an issue, it is helpful to synthesize adversarial inputs leading to a communication error with a high probability. ProbBBC can synthesize such inputs by estimating the strategy that maximizes the probability of causing a communication error. If the probability returned by ProbBBC is small after a sufficiently long time, one can deem that the system $\SUT$ is working well by the convergence of ProbBBC.\@
\end{example}

\begin{example}
 [ProbBBC for controller synthesis]
 Consider a robot $\SUT$ moving on a field with various conditions, \eg{} concrete, grass, and mud. Our task is to create a controller for the robot to reach a specific goal. Due to sensing noise or slipping depending on the field's condition, the robot's movement may be probabilistic. Furthermore, it is highly challenging to formally model all the system and environmental details. ProbBBC can synthesize a near-optimal controller by estimating the strategy that maximizes the probability of reaching the goal within certain time steps. If it fails to synthesize a controller satisfying the specification after a sufficiently long time, it suggests that the task may be highly challenging by the convergence of ProbBBC.\@
\end{example}

\cref{figure:illustration} outlines ProbBBC, which, like conventional BBC, consists of the learning, synthesis, and validation phases.
In the learning phase, ProbBBC learns a deterministic MDP $\MDP$, \ie{} an MDP such that the current state is uniquely identified from a sequence of inputs and outputs, from the SUT $\SUT$ based on an active MDP learning algorithm \emph{\LstarMDP{}}~\cite{DBLP:journals/fac/TapplerA0EL21}.
\LstarMDP{} systematically feeds input sequences to the SUT $\SUT$, estimates the probabilistic distribution of the output observations, and identifies the discrete structure and transition probabilities of $\SUT$.
ProbBBC uses the resulting MDP $\MDP$ as an approximation of $\SUT$.\@

In the synthesis phase, ProbBBC computes a strategy $\strategy$ of the approximate MDP $\MDP$ that maximizes the satisfaction probability $p_{\compose{\MDP}{\strategy},\Spec}$ of $\Spec$ on $\MDP$ using PMC.
We note that the strategy $\strategy$ may not maximize the satisfaction probability $p_{\compose{\SUT}{\strategy},\Spec}$ of $\Spec$ on the SUT $\SUT$
due to the potential deviation of the approximate MDP $\MDP$ from the SUT $\SUT$.\@

In the validation phase, ProbBBC conducts statistical hypothesis testing to check whether the obtained $\strategy$ also gives $p_{\compose{\SUT}{\strategy},\Spec}$ of $\varphi$ for $\SUT$.
%
%
ProbBBC obtains execution traces by executing the SUT $\SUT$ with $\strategy$ multiple times.
By calculating how many traces satisfy $\Spec$, ProbBBC obtains an estimation $p$ of $p_{\compose{\SUT}{\strategy},\Spec}$.\footnote{We remark that estimation of $p_{\compose{\SUT}{\strategy},\Spec}$ by executing $\compose{\SUT}{\strategy}$ is not trivial since $\SUT$ is a black-box system, whereas $\strategy$ requires the information on the current state of the system in each step.
To address this challenge, we execute $\compose{\SUT}{\strategy}$ using $\Mealy$ as a scaffold; we explain this technique in \cref{section:comparison_MDP_SUT_with_strategy}.}
Using this estimated value $p$ of $p_{\compose{\SUT}{\strategy},\Spec}$, ProbBBC conducts statistical hypothesis testing with the null hypothesis $p_{\compose{\SUT}{\strategy},\Spec} = p_{\compose{\MDP}{\strategy},\Spec}$.
If $p_{\compose{\SUT}{\strategy},\Spec} \ne p_{\compose{\MDP}{\strategy},\Spec}$ is established through the hypothesis test, ProbBBC tries to construct an execution trace $\trace$ witnessing the deviation of the approximate MDP $\MDP$ from $\SUT$; if such $\trace$ is found, ProbBBC feeds it back to the learning phase to refine the approximate MDP $\MDP$.

If ProbBBC fails to find such $\trace$, it resorts to random sampling in an attempt to detect a deviation between the approximate MDP $\MDP$ and the SUT $\SUT$.  ProbBBC randomly synthesizes input sequences, feeds them to both the approximate MDP $\MDP$ and the SUT $\SUT$, and compares the probabilistic distribution of their outputs.
If ProbBBC finds a witness of their deviation, it feeds the witness back to the learning phase.
Otherwise, ProbBBC deems the approximate MDP $\MDP$ converged to the SUT $\SUT$ and returns the satisfaction probability $p_{\compose{\SUT}{\strategy},\Spec}$ of the specification $\Spec$ on the SUT $\SUT$, which is already obtained in the previous phase.

Moreover, we optimized the validation phase using the information used for MDP learning. For example, we immediately terminate the validation phase and return to the learning phase when some assumptions regarding the estimation of transition probabilities prove to be incorrect.

We implemented a prototype tool of ProbBBC.\@ We conducted experiments to evaluate the performance of ProbBBC using benchmarks related to CPS or IoT scenarios, which are also used in previous papers on MDP learning~\cite{DBLP:journals/fac/TapplerA0EL21} or MDP testing via learning~\cite{DBLP:journals/fmsd/AichernigT19}.
Results from our experiment indicate that the satisfaction probability $p_{\compose{\SUT}{\strategy},\Spec}$ estimated by ProbBBC is usually close to the true value.
Moreover, ProbBBC tends to estimate a better probability than an existing approach~\cite{DBLP:journals/fmsd/AichernigT19}, especially when the observability in the SUT $\SUT$ is limited.

\paragraph{Summary of the contributions}
Our contributions are summarized as follows.

\begin{itemize}
 \item We propose probabilistic black-box checking (ProbBBC) by combining active MDP learning, probabilistic model checking, and statistical hypothesis testing.
 \item We optimized the validation phase using the information for MDP learning.
 \item Our experiment results suggest that our approach outperforms an existing approach.
\end{itemize}

\paragraph{Outline}
We show some related approaches in \cref{section:related_work}.
After recalling the preliminaries in \cref{sec:preliminary}, we introduce probabilistic black-box checking (ProbBBC) in \cref{sec:pbbc}.
We show the experimental evaluation in \cref{sec:experiments}, and conclude in \cref{sec:conclusion}.

\section{Related Work}\label{section:related_work}

\begin{table}[tbp]
 \caption{Testing methods for black-box systems with external inputs and probabilistic branching. PMC stands for probabilistic model checking, and SMC stands for statistical model checking.}%
 \label{table:related_work}
 \ShortVersion{\scriptsize}\LongVersion{\small}
 \begin{tabular}{c c c c}
  \toprule
  & (Near-) optimal strategy synthesis & Probability computation & MDP learning \\
  \midrule
  \textbf{Ours} & PMC & PMC & Active~\cite{DBLP:journals/fac/TapplerA0EL21}\\
  \cite{DBLP:journals/fmsd/AichernigT19} & PMC & PMC & Passive~\cite{DBLP:journals/ml/MaoCJNLN16}\\
  \cite{DBLP:conf/sefm/2014w,DBLP:journals/sttt/DArgenioLST15} & random sampling & SMC & N/A\\
  \cite{DBLP:conf/atva/BrazdilCCFKKPU14} & delayed Q-learning~\cite{DBLP:conf/icml/StrehlLWLL06} & SMC & N/A\\
  \bottomrule
 \end{tabular}
\end{table}

\cref{table:related_work} summarizes testing methods for black-box systems with external inputs and probabilistic branching.
These methods are categorized into two approaches: PMC-based and SMC-based approaches.
ProbBBC is categorized as the PMC-based approach.
In the PMC-based approach, we learn an MDP $\MDP$ that approximates the SUT $\SUT$, then compute an optimal strategy $\strategy$ for the learned MDP $\MDP$ along with its corresponding probability using PMC.\@
When the learned MDP $\MDP$ closely approximates the SUT $\SUT$, we can expect the strategy $\strategy$ to be near-optimal for the SUT $\SUT$ as well.

\citet{DBLP:journals/fmsd/AichernigT19} propose another method of this approach, which we denote as \baselineMethod{}.
The main difference from ProbBBC is in the MDP learning algorithm: ProbBBC \emph{actively} learns an MDP with \LstarMDP{}~\cite{DBLP:journals/fac/TapplerA0EL21}, whereas \baselineMethod{} \emph{passively} learns an MDP with \Aalergia{}~\cite{DBLP:journals/ml/MaoCJNLN16}. An active MDP learning algorithm adaptively samples execution traces for learning within the learning algorithm. In contrast, a passive MDP learning algorithm requires externally constructed execution traces.
In~\cite{DBLP:journals/fmsd/AichernigT19}, they use an $\varepsilon$-greedy algorithm to sample execution traces.  This algorithm primarily samples traces using the the best strategy at the time but also employs a random input with probability $\varepsilon$. Due to its greedy sampling approach, the method often gets stuck in a suboptimal strategy, as we observe experimentally in \cref{sec:experiments}.
In contrast, ProbBBC tends outperform in finding a near-optimal strategy because \LstarMDP{} samples execution traces so that the transition function from each state can be identified, and the obtained execution traces tend to cover broader behavior.
Another less significant difference lies in the properties each method supports: ProbBBC supports safety LTL, whereas \baselineMethod{} is limited to reachability properties.

\emph{Statistical model checking (SMC)}~\cite{DBLP:journals/tomacs/AghaP18} estimates the satisfaction probability of a specification through random sampling of execution traces. Because of the sampling-based approach of SMC, it can handle black-box systems.
SMC is primarily for purely probabilistic systems with no external inputs (\eg{} Markov chains rather than MDPs), and when we apply SMC to systems with external inputs, we need to synthesize a (near-) optimal strategy to determine inputs.
In~\cite{DBLP:conf/sefm/LegayST14,DBLP:journals/sttt/DArgenioLST15}, strategies are constructed by random sampling with a concise encoding of each strategy.
In~\cite{DBLP:conf/atva/BrazdilCCFKKPU14}, strategies are constructed by delayed Q-learning~\cite{DBLP:conf/icml/StrehlLWLL06}.
These methods, which directly learn and apply a strategy on the SUT, require the current state of the SUT to be observabile.\@
In contrast, ProbBBC indirectly uses a strategy via the approximate MDP, and the current state can be unobservable.

Schematically, ProbBBC is based on \emph{black-box checking (BBC)}~\cite{DBLP:conf/forte/PeledVY99}.
ProbBBC is a quantitative extension of BBC in the following sense: 
i) It handles systems with \emph{probabilistic} transitions, which is more general than \emph{deterministic} transitions supported by BBC;\@
ii) It tries to return a \emph{strategy} to \emph{maximize} the satisfaction probability of the given property $\Spec$, which is a quantitative generalization of returning an \emph{input sequence} \emph{violating} $\Spec$.
Furthermore, BBC for a deterministic system $\SUT$ against an LTL formula $\Spec$ is reducible to ProbBBC for $\SUT$ against $\neg \Spec$ by checking if the maximum satisfaction probability of $\neg \Spec$ is 1.

\section{Preliminaries}\label{sec:preliminary}

For a set $S$, we denote its power set by $\powerset{S}$, the set of finite sequences of $S$ elements by $S^*$, and the set of the infinite sequence of $S$ by $S^{\omega}$.
For an infinite sequence $s = s_0, s_1, \dots \in S^\omega$ and $i, j \in \mathbb{N}$ where $i \leq j$, we denote the subsequence $s_i, s_{i+1}, \dots , s_j \in S^*$ by $s[i, j]$ and $s_i, s_{i+1}, \dots \in S^\omega$ by $s[i, \infty]$.
We write $s \cdot s'$ for the concatenation of a finite sequence $s \in S^*$ and an infinite sequence $s' \in S^\omega$ of $S$.
A \emph{trace} $\trace$ is an alternating sequence of inputs and outputs (i.e., $\trace \in \OUTPUT \times (\INPUT \times \OUTPUT)^*$.)
We denote the set of probability distributions over $S$ by $\Dist[S]$: for any $\mu : S \to [0, 1]$ in $\Dist[S]$, $\sum_{s \in S} \mu(s) = 1$ holds.
For sets $X$ and $Y$, we denote $X \subseteqfin Y$ if $X \subseteq Y$ holds and $X$ is finite.

\subsection{Model of systems and specification logic}

Here, we briefly review some notions related to (probabilistic) model checking.
See, \eg{}~\cite{DBLP:conf/sfm/KwiatkowskaNP07} for a more formal reasoning of probabilities using measure theory.

\subsubsection{Mealy machine}

\begin{definition}[Mealy machine]
  A (deterministic) Mealy machine is a 5-tuple $\Mealy = (Q, \INPUT, \OUTPUT, q_0, \Delta)$, where $Q$ is the finite set of states, $\INPUT$ and $\OUTPUT$ are the input and output alphabets, $q_0 \in Q$ is the initial state, and $\Delta : (Q \times \INPUT) \to (Q \times \OUTPUT)$ is the transition function.
We write $\mathcal{L}(\mathcal{M}) \subseteq (\INPUT \times \OUTPUT)^\omega$ for the \emph{language} of a Mealy machine $\mathcal{M}$ defined as follows:
\[
  \mathcal{L}(\mathcal{M}) \coloneqq \left\{ (\action_0, \observation_0),(\action_1, \observation_1), \dots \mid \exists q_1,q_2, \ldots \in Q^{\omega}, \forall i \in \mathbb{N}.\,  \Delta(q_i, \action_i) = (q_{i+1}, \observation_i) \right\}.
\]
\end{definition}
For $\sigma \in (\INPUT)^\omega$ and a Mealy machine $\mathcal{M}$, we write $\mathcal{M}(\sigma)$ for the output obtained by feeding $\sigma$ to $\mathcal{M}$.
More precisely, $\mathcal{M}(\sigma)$ for $\sigma = \action_0,\action_1,\dots \in (\INPUT)^\omega$ is defined as $\observation_0,\observation_1,\dots \in (\OUTPUT)^\omega$ such that $(\action_0,\observation_0),(\action_1,\observation_1),\dots \in \mathcal{L}(\mathcal{M})$.
Notice that this notation is well-defined since $\mathcal{M}$ is deterministic.
We write $\mathcal{L}^{\FIN}(\mathcal{M})$ for the \emph{finite} language of $\mathcal{M}$ defined as the set of finite prefixes of $\mathcal{L}(\mathcal{M})$: $\mathcal{L}^{\FIN}(\mathcal{M}) \coloneqq \{\sigma \in (\INPUT \times \OUTPUT)^* \mid \exists \sigma' \in (\INPUT \times \OUTPUT)^\omega.\, \sigma \cdot \sigma' \in \mathcal{L(M)}\}$.
%

\subsubsection{Markov decision process}

We use \emph{Markov decision processes (MDPs)} for modeling systems that exhibit stochastic behavior.
\begin{definition}[Markov decision process]
  A Markov decision process (MDP) is a 6-tuple $\MDP = (Q, \INPUT, \OUTPUT, q_0, \Delta, L)$, where $Q$ is the finite set of states, $\INPUT$ and $\OUTPUT$ are input and output alphabet, $q_0 \in Q$ is the initial state, $\Delta\colon (Q \times \INPUT) \to \Dist[Q]$ is the probabilistic transition function, and $L\colon Q \to \OUTPUT$ is the labeling function.
  A \emph{path} $\rho$ of $\MDP$ is an element of $Q \times (\INPUT \times Q)^*$.
  We write $\mathit{Path}_{\MDP}$ for the set of all paths of MDP $\MDP$.
  We say MDP $\MDP$ is \emph{deterministic} iff for any $q \in Q$ and for any $\action \in \INPUT, q', q'' \in Q$, $\Delta(q, \action)(q') > 0$ and $\Delta (q, \action)(q'') > 0$ implies $q' = q''$ or $L(q') \neq L(q'')$.
\end{definition}
In this paper, we let $\OUTPUT = \powerset{\AP}$ where $\AP$ is a set of relevant atomic propositions, and the labeling function $L$ returns the propositions that hold at the given state.

An execution of an MDP $\MDP = (Q, \INPUT, \OUTPUT, q_0, \Delta, L)$ can be seen as an interaction with a system and a controller.
An execution of $\MDP$ starts at the initial state $q_0$.
Suppose the system is at state $q$.
The controller chooses an input $\action \in \INPUT$ to the system; then, the system's state changes based on the probability distribution $\Delta(q,\action)$.
This interaction between a system and its controller is expressed by a path $\rho$ of states and inputs starting at the initial state $q_0$, \ie{} $\rho = q_0, \action_1, q_1, \action_2, q_2, \dots \action_n, q_n$; at each state $q_k$, the input $\action_{k+1}$ is chosen by the controller, and the next state is chosen probabilistically according to the distribution $\Delta(q_k,\action_{k+1})$.

A path $\rho = q_0, \action_1, q_1, \action_2, q_2, \dots \action_n, q_n$ induces a trace $\trace$ of the MDP $\MDP$; $\trace = \observation_0, \action_1, \observation_1, \action_2,\observation_2, \dots \action_n, \observation_n$, where for each $k \in \{0,1,\dots,n\}, \observation_k = L(q_k)$.
%
If the MDP $\MDP$ is deterministic, the path $\rho$ corresponding to a trace $\trace$ is uniquely identified.

The intuition of the interaction between a system and its controller is formalized by modeling the controller as a \emph{strategy}.
\begin{definition}[Strategy]
For an MDP $\MDP = (Q, \INPUT, \OUTPUT, q_0, \Delta, L)$,
a \emph{strategy} for $\MDP$ is a function $\strategy\colon \mathit{Path}_{\MDP} \to \Dist[\INPUT]$.
\end{definition}
A strategy $\strategy$ can be seen as a controller that probabilistically chooses an input $\action \in \INPUT$ to be fed to $\MDP$ from a path $\rho \in \mathit{Path}_{\MDP}$ that represents the execution of the system so far based on the probability distribution $s(\rho)$.

An interacting system composed of an MDP $\MDP$ and a strategy $\strategy$ is modeled as a discrete-time Markov chain (DTMC)~\cite{DBLP:conf/sfm/ForejtKNP11}.
We write $\compose{\MDP}{\strategy}$ for this DTMC.
\ks{Removed the formalization and the transition probability of DTMC.  Let's write it once we recognize they are needed.}




A strategy is \emph{finite-memory} if its choice of inputs depends only on the current state and a finite mode updated by each input and state, not on the entire path. Any finite-memory deterministic strategy can be encoded by a Mealy machine $(M, Q \times \INPUT, \INPUT, m_0, E)$, where the choice of input $\action \in \INPUT$ depends only on $M$ and $Q$ (\ie{} for any $m \in M$, $q \in Q$, and $\action,\action' \in \INPUT$, the second elements of $E(m, (q, \action))$ and $E(m, (q, \action'))$ are the same).
For an MDP $\MDP = (Q, \INPUT, \OUTPUT, q_0, \Delta, L)$ and a finite-memory deterministic strategy $\strategy$ of $\MDP$ encoded by a Mealy machine $(M, Q \times \INPUT, \INPUT, m_0, E)$, $\MDP_{\strategy}$ is a \emph{finite-state} DTMC with the state space $Q \times M \times \INPUT$, initial state $(q_0, m_0, \action_0)$, where $\action_0 \in \INPUT$ is such that $(m, \action_0) = E(m_0,(q_0, \action))$ for some $m \in M$ and $\action \in \INPUT$, and the transition probability $P((q_{k+1}, m_{k+1}, \action_{k+1}) \mid (q_k, m_k, \action_k))$ from $(q_k, m_k, \action_k) \in Q \times M \times \INPUT$ to $(q_{k+1}, m_{k+1}, \action_{k+1}) \in Q \times M \times \INPUT$ is 
$\Delta(q_k, L(m_k, (q_k, \action_k)))(q_{k+1})$ if $L(m_k, (q_k, \action_k)) = (m_{k+1}, \action_{k+1})$, and otherwise $0$.


\subsubsection{Linear temporal logic (LTL)}

We use a probabilistic extension of the \emph{linear temporal logic (LTL)}~\cite{DBLP:conf/focs/Pnueli77} for specifying the temporal properties of the behavior of a system.
\begin{definition}[Syntax of LTL]
  \label{definition:LTL}
  For a finite set $\AP$ of atomic propositions, the syntax of \emph{linear temporal logic} is defined as follows,
  where $p \in \AP$ and $i, j \in \N \cup \{ \infty \} $ satisfying $i \leq j$
  \[
    \varphi , \psi \Coloneqq \top \mid p \mid \neg \varphi \mid \varphi \vee \psi \mid \Next \varphi \mid \varphi \Until[[i, j)] \psi.
  \]
\end{definition}

An LTL formula $\varphi$ asserts a property on an infinite sequence $\pi \in (\powerset{\AP})^\omega$ of subsets of $\AP$.
Intuitively, $\top$ holds for any $\pi$; $p$ holds if the first element of $\pi$ includes $p$; $\neg \varphi$ holds if $\varphi$ does not hold for $\pi$; $\varphi \lor \psi$ holds if either $\varphi$ or $\psi$ holds for $\pi$; and $\Next \varphi$ holds if $\varphi$ holds for $\pi[1,\infty]$.
The formula $\varphi \Until[[i, j)] \psi$ asserts that (1) $\psi$ becomes true in the future along with $\pi$ that arrives within $l \in [i,j)$ steps and (2) $\phi$ continues to hold until $\psi$ becomes true.
More precisely, $\varphi \Until[[i, j)] \psi$ holds if $\psi$ holds for some $\pi' = \pi[l,\infty]$ such that $l \in [i,j)$ and $\phi$ holds for $\pi[m,\infty]$ for any $m \in [0,l)$.

The semantics of LTL formulas is defined by the following satisfaction relation $(\pi, k) \models \varphi$ that represents that $\varphi$ holds for $\pi[0,k]$.
For an infinite sequence $\pi$, an index $k$, and an LTL formula $\varphi$, $(\pi, k) \models \varphi$ intuitively stands for ``$\pi$ satisfies $\varphi$ at $k$''.

\begin{definition}
  [Semantics of LTL]
  \label{definition:LTLsemantics}
  For an LTL formula $\Spec$, an infinite sequence $\pi = \pi_0, \pi_1, \dots \in (\powerset{\AP})^\omega$\LongVersion{ of subsets of atomic propositions}, and $k \in \mathbb{N}$,
  we define the satisfaction relation $(\pi, k) \models \Spec$ as follows.
 \begin{gather*}
  (\pi, k) \models \top \qquad
  (\pi, k) \models p \iff p \in \pi_k \qquad
  (\pi, k) \models \neg \varphi \iff (\pi, k) \nvDash \varphi \\
  (\pi, k) \models \varphi \lor \psi \iff (\pi, k) \models \varphi \lor (\pi, k) \models \psi \qquad
  (\pi, k) \models \Next \varphi \iff (\pi, k + 1) \models \varphi \\
  (\pi, k) \models \varphi \Until[[i, j)] \psi \iff \exists l \in [k + i, k + j).\, (\pi, l) \models \psi 
             \land \forall m \in \{k, k + 1, \dots, l-1\}.\, (\pi, m) \models \varphi
 \end{gather*}
  If we have $(\pi, 0) \models \varphi$, we denote $\pi \models \varphi$.
\end{definition}

We use the following syntax sugars of LTL formulas defined as follows.
\begin{gather*}
  \bot \equiv \neg \top, \quad
  \varphi \wedge \psi \equiv \neg ((\neg \varphi) \vee (\neg \psi)), \quad
  \varphi \rightarrow \psi \equiv (\neg \varphi) \vee \psi, \quad
  \Evt_{[i, j)}\varphi \equiv \top \Until[[i, j)] \varphi, \quad \\
  \Glb_{[i, j)} \varphi \equiv \neg (\Evt_{[i, j)} \neg \varphi),
  \varphi \Until \psi \equiv \varphi \Until[[0, \infty)] \psi, \quad
  \Evt \varphi \equiv \Evt_{[0, \infty)} \varphi, \quad
  \Glb \varphi \equiv \Glb_{[0, \infty)} \varphi
\end{gather*}
The intuition of $\bot$, $\varphi \land \psi$, and $\varphi \rightarrow \psi$ should be clear.
The formula $\Evt_{[i,j)} \varphi$ asserts that ``$\varphi$ holds eventually between $i$-step and $j$-step future''; the formula $\Glb_{[i,j)} \varphi$ stands for ``$\varphi$ holds globally between $i$-step and $j$-step future''.
We also introduce $\Until$, $\Evt$, and $\Glb$ for the unbounded versions of $\Until[[i,j)]$, $\Evt_{[i,j)}$, and $\Glb_{[i,j)}$.

Given a Mealy machine $\mathcal{M}$ and an LTL formula $\varphi$, we write $\mathcal{M} \models \varphi$ if $\mathcal{M}(\sigma) \models \varphi$ for any $\sigma \in (\INPUT)^\omega$.
We write $\mathcal{M} \not\models \varphi$ if $\mathcal{M} \models \varphi$ does not hold.

\emph{Safety} LTL is a subclass of LTL that consists of the LTL formulas expressing safety properties, whose violation can be witnessed by a \emph{finite} sequence.
\begin{definition}[Safety]\label{def:safety}
  An LTL formula $\varphi$ is \emph{safety} if, for any infinite sequence $\pi \in {(\powerset{\AP})}^{\omega}$ satisfying $\pi \nvDash \varphi$, there is $i \in \N$ such that 
  for any\LongVersion{ infinite sequence} $\pi' \in {(\powerset{\AP})}^\omega$, we have $\pi[0, i]\cdot \pi' \nvDash \varphi$.
\end{definition}

\subsubsection{Quantitative probabilistic model checking}

\emph{Probabilistic model checking (PMC)}~\cite{DBLP:reference/mc/BaierAFK18} is a method to verify MDPs against LTL formulas.
We use \emph{quantitative} PMC, which computes the maximum (or minimum) satisfaction probability of the given LTL formula $\Spec$ on the MDP $\MDP$.

\defProblem{Quantitative probabilistic model checking}{%
An MDP $\MDP = (Q, \INPUT, \OUTPUT, q_0, \Delta, L)$ and an LTL formula $\Spec$}{%
Computes the maximum satisfaction probability $\max_{s \in \mathbf{Sched}_{\MDP}} \Prob_{\MDP, s} (\varphi)$ of $\Spec$ on $\MDP$, where $\mathbf{Sched}_{\MDP}$ is the set of strategies of $\MDP$ and $\Prob_{\MDP, \strategy} (\Spec)$ is the probability of an execution of the DTMC $\MDP_s$ satisfying $\Spec$.
}

For any MDP $\MDP$ and an LTL formula $\Spec$, there is a finite-memory deterministic strategy $\strategy$ maximizing the satisfaction probability of $\Spec$ on $\MDP$~\cite{DBLP:conf/atva/KwiatkowskaP13}.
A probabilistic model checker, \eg{} PRISM~\cite{DBLP:conf/cav/KwiatkowskaNP11}, takes an MDP $\MDP$ and an LTL formula $\Spec$, and computes a finite-memory deterministic strategy $\strategy$ that maximizes the probability of $\compose{\MDP}{\strategy}$ satisfying $\Spec$; it returns $\strategy$ as well as the probability $p$.

\subsection{Active learning of automata and MDPs}\label{subsection:active_automata_learning}

\todo{Move the figure if the layout is ugly}
\emph{Active automata learning} is a class of algorithms to construct an automaton
through interactions between the \emph{learner} and a \emph{teacher}.
In the \Lstar{} algorithm~\cite{DBLP:journals/iandc/Angluin87},\LongVersion{ the best known active automata learning algorithm,}
the learner constructs the minimum DFA $\A$ over $\Alphabet$ recognizing the target language $\targetLg \subseteq \Alphabet^*$
using \emph{membership} and \emph{equivalence} questions to the teacher.
In a membership question, the learner asks if a word $\word \in \Alphabet^*$ is a member of $\targetLg$, \ie{} $\word \in \targetLg$.
Membership questions are used to obtain sufficient information to construct a candidate DFA $\hypothesisA$.
In an equivalence question, the learner asks if the candidate DFA $\hypothesisA$ recognizes the target language $\targetLg$, \ie{} $\Lg(\hypothesisA) = \targetLg$.
If we have $\Lg(\hypothesisA) \neq \targetLg$,
the teacher returns a word $\cex$ satisfying $\cex \in \Lg(\hypothesisA) \setdiff \targetLg$ as a witness of $\Lg(\hypothesisA) \neq \targetLg$,
where $\Lg(\hypothesisA) \setdiff \targetLg$ is the symmetric difference between $\Lg(\hypothesisA)$ and $\targetLg$,
\ie{} $\Lg(\hypothesisA) \setdiff \targetLg = (\Lg(\hypothesisA) \setminus \targetLg) \cup (\targetLg \setminus \Lg(\hypothesisA))$.
Equivalence questions are used to decide if the learning process can be finished.

\reviewer{2}{The specification of the observation table in lines 333--339 and Figure 4 are hard to understand.}
\begin{wrapfigure}[9]{r}{0pt}
 \scriptsize
 \begin{tabular}{c|c c}
  & $\emptyword$ & b \\\hline
  $\emptyword$& $\top$ & $\bot$\\
  a & $\bot$ & $\bot$\\
  b & $\bot$ & $\top$\\ \hline
  aa & $\top$ & $\bot$\\
  ab & $\bot$ & $\bot$\\
  ba & $\bot$ & $\bot$\\
  bb & $\top$ & $\bot$\\
 \end{tabular}
 \caption{An observation table in \Lstar{}.}%
 \label{figure:observation_table}
\end{wrapfigure}
The \Lstar{} algorithm uses a 2-dimensional array called an \emph{observation table} to maintain the information obtained from the teacher.
\cref{figure:observation_table} illustrates an observation table during learning 
$\targetLg = \{\word \in {(\text{ab})}^* \mid (\text{\# of a in } \word) \mod 2 = (\text{\# of b in } \word) \mod 2 = 0 \}$.
The rows and columns of an observation table are indexed by finite sets of words $\PrefixSet \cup \ExtPrefixSet$ and $\SuffixSet$.
In \cref{figure:observation_table}, $\PrefixSet$ is displayed above the horizontal line:
we have $\PrefixSet = \{ \emptyword, \mathrm{a}, \mathrm{b}\}$,
$\ExtPrefixSet \setminus \PrefixSet = \{\mathrm{aa}, \mathrm{ab}, \mathrm{ba}, \mathrm{bb}\}$, and
$\SuffixSet = \{ \emptyword, \mathrm{b}\}$.
The cell indexed by $(\prefix, \suffix) \in (\PrefixSet \cup \ExtPrefixSet) \times \SuffixSet$ shows if the concatenation $\prefix \cdot \suffix \in \Alphabet^*$ is a member of $\targetLg$.
For example, the cell indexed by $(\text{a}, \text{b})$ shows $\text{ab} \not\in \targetLg$.

\begin{wrapfigure}[8]{r}{0pt}
  \begin{tikzpicture}[shorten >=1pt,scale=0.8,every node/.style={transform shape},every initial by arrow/.style={initial text={}}]
  \node[initial,state,accepting] (l0) at (0,0) [align=center]{$\emptyword$};
  \node[state] (l1) at (3.0,0) [align=center]{a};
  \node[state] (l2) at (1.5,1.5) [align=center]{b};

  \path[->] 
  (l0) edge [bend left=10] node[above] {a} (l1)
  (l0) edge [bend left=10] node[above left] {b} (l2)
  (l1) edge [bend left=10] node[below] {a} (l0)
  (l1) edge [loop above] node {b} (l1)
  (l2) edge node[above right] {a} (l1)
  (l2) edge [bend left=10] node[below right] {b} (l0)
  ;
  \end{tikzpicture}
  \caption{Candidate DFA $\hypothesisA$ constructed from the observation table in \cref{figure:observation_table}.}%
  \label{figure:candidate_DFA}
\end{wrapfigure}
A DFA can be constructed from an observation table satisfying certain conditions.
\cref{figure:candidate_DFA} shows the DFA constructed from the observation table in \cref{figure:observation_table}.
In the DFA construction, a state is constructed for each unique row.
For example, from the observation table in \cref{figure:observation_table},
three states are constructed, corresponding to the rows $(\top, \bot)$, $(\bot, \bot)$, and $(\bot, \top)$.
A state is accepting if the cell indexed by $(\prefix, \emptyword)$ is $\top$, where $\prefix$ is the index of a row corresponding to the state.
The successor of the state corresponding to the row indexed by $\prefix$ with $a \in \Alphabet$ is
the state corresponding to the row indexed by $\prefix \cdot a$.
For example, from the observation table in \cref{figure:observation_table},
the successor of the state corresponding to $(\top, \bot)$, which is indexed by $\emptyword$, with a is $(\bot, \bot)$, which is indexed by a.
For the above construction, the observation table must satisfy the following two conditions.
Note that the observation table in \cref{figure:observation_table} satisfies these conditions.
\begin{description}
 \item[closedness] For each row indexed by $\prefix \in \ExtPrefixSet \setminus \PrefixSet$, there is a row indexed by some $\prefix' \in \PrefixSet$ with the same contents.
 \item[consistency] For each pair of rows indexed by $\prefix,\prefix' \in \PrefixSet$ if their contents are the same, for any $a \in \Alphabet$, the rows indexed by $\prefix \cdot a$ and $\prefix' \cdot a$ also have the same contents.
\end{description}
\ks{Remark that we use only the hypothesis construction phase of \LstarMDP{}.  The equivalence checking part is replaced with our strategy-guided comparison.}
\begin{LongVersionBlock} 
 \begin{algorithm}[tb]
  \caption{Outline of the candidate-generation procedure.}%
  \label{algorithm:candidateConstruction}
  \DontPrintSemicolon{}
  \newcommand{\myCommentFont}[1]{\texttt{\footnotesize{#1}}}
  \SetCommentSty{myCommentFont}
  \SetKwFunction{FAskMembershipQ}{askMembershipQuestion}
  \SetKwFunction{FConstructCandidate}{constructCandidateAutomaton}
  \While(\tcp*[f]{Candidate generation phase}) {Observation table is not closed or consistent} {\label{algorithm:L*style:ClosedAndConsistent}\label{algorithm:L*style:AutomatonConstruction}
    update $\PrefixSet$ or $\SuffixSet$ and fill the observation table with membership questions\;
  }\label{algorithm:L*style:ClosedAndConsistentEnd}
    \Return{$\FConstructCandidate(\PrefixSet, \SuffixSet)$}\label{algorithm:L*style:AutomatonConstructionEnd}\;
\end{algorithm}
\end{LongVersionBlock}

\begin{LongVersionBlock} 
\begin{algorithm}[tb]
  \caption{Outline of the equivalence checking procedure.}%
  \label{algorithm:equivalenceChecking}
  \DontPrintSemicolon{}
  \newcommand{\myCommentFont}[1]{\texttt{\footnotesize{#1}}}
  \SetCommentSty{myCommentFont}
  \SetKwFunction{FAskMembershipQ}{askMembershipQuestion}
  \SetKwFunction{FConstructCandidate}{constructCandidateAutomaton}
    \If(\tcp*[f]{Equivalence checking phase with an equivalence question}) {$\targetLg = \Lg(\hypothesisA)$} {\label{algorithm:L*style:EquivalenceTesting}
      \KwReturn{} $\mathrm{OK}(\hypothesisA)$
    } \Else{
      \KwLet{} $\cex$ \KwBe{} an evidence of $\targetLg \neq \Lg(\hypothesisA)$, \ie{} $\cex \in \targetLg \setdiff \Lg(\hypothesisA)$\label{algorithm:L*style:WitnessReturn}\;
      \KwReturn{} $\mathrm{CEx}(\cex)$
    }\label{algorithm:L*style:EquivalenceTestingEnd}
\end{algorithm}
\end{LongVersionBlock}

\begin{algorithm}[tb]
  \ShortVersion{\footnotesize}
  \caption{Outline of the \Lstar{} algorithm for active automata learning.}%
  \label{algorithm:L*styleLearning}
  \DontPrintSemicolon{}
  \newcommand{\myCommentFont}[1]{\texttt{\footnotesize{#1}}}
  \SetCommentSty{myCommentFont}
  \SetKwFunction{FAskMembershipQ}{askMembershipQuestion}
  \SetKwFunction{FConstructCandidate}{constructCandidateAutomaton}
  $\PrefixSet \gets \{ \emptyword\}$;\,\,$\SuffixSet \gets \{ \emptyword\}$\;
  \While{$\top$} {
    \While (\tcp*[f]{Candidate generation phase}) {Observation table is not closed or consistent} {\label{algorithm:L*style:ClosedAndConsistent}\label{algorithm:L*style:AutomatonConstruction}
      update $\PrefixSet$ or $\SuffixSet$ and fill the observation table with membership questions\;
    }\label{algorithm:L*style:ClosedAndConsistentEnd}
    $\hypothesisA \gets \FConstructCandidate(\PrefixSet, \SuffixSet)$\label{algorithm:L*style:AutomatonConstructionEnd}\;
    \If (\tcp*[f]{Equivalence checking phase with an equivalence question}) {$\targetLg = \Lg(\hypothesisA)$} {\label{algorithm:L*style:EquivalenceTesting}
      \KwReturn $\hypothesisA$
    } \Else {
      \KwLet{} $\cex$ \KwBe{} an evidence of $\targetLg \neq \Lg(\hypothesisA)$, \ie{} $\cex \in \targetLg \setdiff \Lg(\hypothesisA)$\label{algorithm:L*style:WitnessReturn}\;
      add prefixes of $\cex$ to $\PrefixSet$ and fill the observation table with membership questions\;\label{algorithm:L*style:AddCexToP}
    }\label{algorithm:L*style:EquivalenceTestingEnd}
  }  
\end{algorithm}

\cref{algorithm:L*styleLearning} outlines the \Lstar{} algorithm.
The \Lstar{} algorithm consists of two phases: candidate generation and equivalence checking phases.
%
In the candidate generation phase\LongVersion{, which is outlined in \cref{algorithm:candidateConstruction}}, the learner increases $\PrefixSet$ and $\SuffixSet$ and fills the observation table using membership questions until the observation table becomes closed and consistent (\crefrange{algorithm:L*style:ClosedAndConsistent}{algorithm:L*style:ClosedAndConsistentEnd}).
Once the observation table becomes closed and consistent, a hypothesis DFA $\hypothesisA$ is constructed (\cref{algorithm:L*style:AutomatonConstructionEnd}).
Then, in the equivalence checking phase\LongVersion{ outlined in \cref{algorithm:equivalenceChecking}},
the learner checks if $\hypothesisA$ recognizes the target language using an equivalence question (\cref{algorithm:L*style:EquivalenceTesting}).
If we have $\targetLg = \Lg(\hypothesisA)$, $\hypothesisA$ is returned.
Otherwise, the teacher returns an evidence $\cex$ of $\targetLg \neq \Lg(\hypothesisA)$ (\cref{algorithm:L*style:WitnessReturn}), and the prefixes of $\cex$ are added to $\PrefixSet$ (\cref{algorithm:L*style:AddCexToP}).

\ShortVersion{\begin{wrapfigure}[8]{r}{0pt}}\LongVersion{\begin{figure}[tbp]}
 \centering
 \ShortVersion{\scriptsize}
  \begin{tabular}{c|c c}
   & a & b \\\hline
   p& $\{\mathrm{p} \mapsto 15, \mathrm{q} \mapsto 15\}$ & $\{\mathrm{q} \mapsto 25\}$\\
   paq & $\{\mathrm{p} \mapsto 8, \mathrm{q} \mapsto 12\}$ & $\{\mathrm{p} \mapsto 4, \mathrm{q} \mapsto 6\}$\\ \hline
   pap & $\{\mathrm{p} \mapsto 5, \mathrm{q} \mapsto 5\}$ & $\{\mathrm{q} \mapsto 12\}$\\
   pbp & $\emptyset$ & $\emptyset$\\
   pbq & $\{\mathrm{p} \mapsto 4, \mathrm{q} \mapsto 6\}$ & $\{\mathrm{p} \mapsto 2, \mathrm{q} \mapsto 3\}$ \\
   paqap & $\{\mathrm{p} \mapsto 3, \mathrm{q} \mapsto 3\}$ & $\{\mathrm{q} \mapsto 4\}$\\
   paqaq & $\{\mathrm{p} \mapsto 2, \mathrm{q} \mapsto 3\}$ & $\{\mathrm{p} \mapsto 2, \mathrm{q} \mapsto 3\}$ \\
   paqbp & $\{\mathrm{p} \mapsto 3, \mathrm{q} \mapsto 3\}$ & $\{\mathrm{q} \mapsto 4\}$\\
   paqbq & $\{\mathrm{p} \mapsto 2, \mathrm{q} \mapsto 3\}$ & $\{\mathrm{p} \mapsto 2, \mathrm{q} \mapsto 3\}$ \\
  \end{tabular}
  \caption{An observation table in \LstarMDP{} with $\INPUT = \{\text{a}, \text{b}\}$ and $\OUTPUT = \{\text{p}, \text{q}\}$.}%
  \label{figure:observation_table_MDP}
\ShortVersion{\end{wrapfigure}}\LongVersion{\end{figure}}
The \LstarMDP{} algorithm~\cite{DBLP:journals/fac/TapplerA0EL21} is an extension of the \Lstar{} algorithm for active MDP learning.
It learns a deterministic MDP using similar questions and an observation table.
In the \LstarMDP{} algorithm, the teacher maintains a prefix-closed multiset $\TracePool$ of traces, which contains all the information obtained by executing the system under learning and used to answer queries.
\cref{figure:observation_table_MDP} illustrates an observation table in the \LstarMDP{} algorithm.
The following summarizes the major differences in the observation table.

\begin{itemize}
 \item To learn an MDP with inputs $\INPUT$ and outputs $\OUTPUT$, the indices are modified: the row indices are $\PrefixSet \cup (\PrefixSet \cdot \INPUT \cdot \OUTPUT) \subseteqfin \OUTPUT \cdot {(\INPUT \cdot \OUTPUT)}^*$ and the column indices are $\SuffixSet \subseteqfin \INPUT \cdot {(\OUTPUT \cdot \INPUT)}^*$.
 \item The cell indexed by $(\prefix,\suffix)$ represents a function $T_{\prefix,\suffix}\colon\OUTPUT \to \N$ mapping each output to the frequency of its appearance in $\TracePool$ after $\prefix \cdot \suffix$. 
       For instance, in \cref{figure:observation_table_MDP}, the cell indexed by (p, a) shows that we observed each of ``pap'' and ``paq'' for 15 times.
\end{itemize}

The notion of closedness and consistency are also updated to statistically compare the rows with a \emph{Hoeffding bound}~\cite{hoeffding1963probability}.
To maintain such an observation table, the membership question is replaced with the following questions:
i) Given a trace $\trace \in \OUTPUT \cdot {(\INPUT \cdot \OUTPUT)}^*$, it returns the function mapping $\observation \in \OUTPUT$ to the frequency $\TracePool(\trace \cdot \observation)$ of $\trace \cdot \observation$ in the multiset $\TracePool$;
ii) Given a trace $\trace \in \OUTPUT \cdot {(\INPUT \cdot \OUTPUT)}^*$, it returns if $\TracePool$ contains sufficient information to estimate the output distribution after $\trace$;
iii) It asks the teacher to refine $\TracePool$ by sampling traces rarely appearing in $\TracePool$.

\ShortVersion{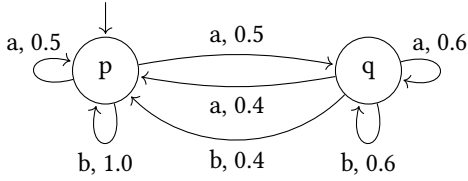
\begin{wrapfigure}[8]{r}{0pt}}\LongVersion{\begin{figure}[tbp]}
  \LongVersion{\centering}
  \ShortVersion{\begin{tikzpicture}[shorten >=1pt,scale=0.8,every node/.style={transform shape},every initial by arrow/.style={initial text={}}]}
  \LongVersion{\begin{tikzpicture}[shorten >=1pt\ShortVersion{,scale=0.8},every node/.style={transform shape},every initial by arrow/.style={initial text={}}]}
  \node[initial above,state] (l0) at (0,0) [align=center]{p};
  \node[state] (l1) at (3.5,0) [align=center]{q};

  \path[->] 
  (l0) edge [bend left=10] node[above] {a, 0.5} (l1)
  (l0) edge [loop left] node[above=0.1] {a, 0.5} (l0)
  (l0) edge [loop below] node {b, 1.0} (l0)
  (l1) edge [bend left=10] node[below] {a, 0.4} (l0)
  (l1) edge [loop right] node[above=0.1] {a, 0.6} (l1)
  (l1) edge [bend left=45] node[below] {b, 0.4} (l0)
  (l1) edge [loop below] node[below] {b, 0.6} (l1)
  ;
  \end{tikzpicture}
  \caption{Candidate MDP constructed from the observation table in \cref{figure:observation_table_MDP}. The label at each state represents the output.}%
  \label{figure:candidate_MDP}
\ShortVersion{\end{wrapfigure}}\LongVersion{\end{figure}}
The MDP construction from an observation table is also similar to the DFA construction in the \Lstar{} algorithm.
\cref{figure:candidate_MDP} shows the MDP constructed from the observation table in \cref{figure:observation_table_MDP}.
State construction is based on the comparison of rows by Hoeffding bound.
Transitions are constructed by estimating the probability distribution of the successors using the contents of the cells.

\subsection{Black-box checking}\label{subsection:bbc}

\begin{figure}[t]
  \centering
  \tikzset{
  >={Latex[width=2mm,length=2mm]},
  state/.style={
      rectangle,
      draw=black, very thick,
      minimum height=2em,
      inner sep=4pt,
      align=center,
    },
  }
  \ShortVersion{\begin{tikzpicture}[node distance=1.6cm,align=center,scale=0.75,every node/.style={transform shape},every initial by arrow/.style={initial text={}}]}
  \LongVersion{\begin{tikzpicture}[node distance=1.6cm,align=center,every node/.style={transform shape},every initial by arrow/.style={initial text={}}]}
    \node[state, initial above] (n1)
    {Learn a Mealy machine $\Mealy$\\ that approximates $\SUT$};
    \node[state, below of=n1, node distance=2.7cm] (n2)
    {Verify if\\$\Mealy \models \varphi$ by\\model checking};
    \node[state, right of=n2, xshift=3.5cm] (n3)
    {Test if\\$\SUT \nvDash \Spec$ is\\witnessed by $\sigma$};
    \node[state, left of=n2, xshift=-3.5cm] (n4)
    {Check if\\$\mathcal{M} \simeq \Mealy$ by\\equivalence testing};
    \node[node distance=0.75cm,below=of n3] (n5) {$\SUT \nvDash \Spec$ witnessed by $\sigma$};
    \node[node distance=0.75cm,below=of n4] (n6) {Deems $\SUT \models \Spec$};

    \node[above of=n1, yshift=-0.74cm, xshift=-1.9cm] (n7) {(A)};
    \node[above of=n2, yshift=-0.5cm, xshift=-1.2cm] (n8) {(B)};
    \node[above of=n3, yshift=-0.5cm, xshift=-0.7cm] (n9) {(C)};
    \node[above of=n4, yshift=-0.5cm, xshift=-1.2cm] (n10) {(D)};

    \draw[->] (n1) -- node[fill=white,yshift=0.2cm]
    {Learned Mealy machine $\Mealy$} (n2);
    \path[->] (n2) edge
        node[above] {$\Mealy \nvDash \Spec$}
        node[below,align=center]{witnessed by $\sigma$} (n3);
    \path[->] (n2) edge node[above] {$\Mealy \models \Spec$} (n4);
    \draw[->] (n3.north) -- node[fill=white,xshift=0cm,yshift=0.3cm]
    {No.\\($\SUT \neq \Mealy$ is witnessed by $\sigma$)} ++(0,1.9) -- (n1.east);
    \draw[->] (n4.north) -- node[fill=white,xshift=0cm,yshift=0.3cm]
    {$\SUT \neq \Mealy$ \ is\\witnessed by $\sigma$} ++(0,1.9) -- (n1.west);
    \path[->] (n3) edge node[right,yshift=0.1cm] {Yes} (n5);
    \path[->] (n4) edge node[right,yshift=0.1cm] {Deems $\SUT = \Mealy$} (n6);
  \end{tikzpicture}
  \caption{The workflow of black-box checking.}%
  \label{fig:bbc-flow}
\end{figure}
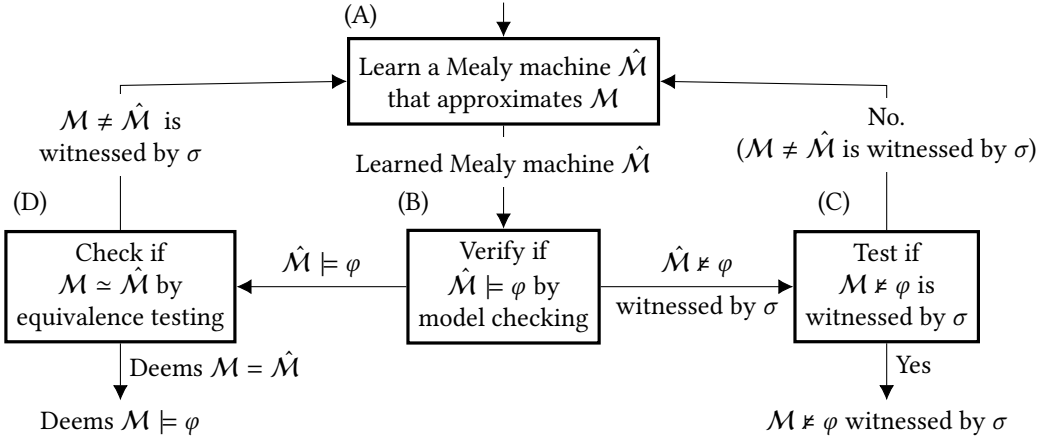

\emph{Black-box checking (BBC)}~\cite{DBLP:conf/forte/PeledVY99} is a method for testing a black-box system $\mathcal{M}$ against its specification $\Spec$.
BBC takes the following inputs: 
i) a deterministic black-box system $\SUT$ that takes $\sigma \in {(\INPUT)}^\omega$ as input and outputs $\SUT(\sigma) \in \powerset{\AP}^\omega$ and 
ii) a safety LTL formula $\Spec$.
The input to and the output from $\SUT$ are both streams.
The aim of BBC is to find a counterexample $\sigma' \in {(\INPUT)}^{*}$ that witnesses the violation of $\varphi$: For any $\sigma'' \in {(\INPUT)}^\omega$, the counterexample $\sigma'$ satisfies $\SUT(\sigma' \cdot \sigma'') \not\models \Spec$.
If such a counterexample is not found, BBC reports so.

BBC consists of learning, synthesis, and validation phases.
BBC iterates the learning phase to obtain a Mealy machine $\Mealy$ that approximates the behavior of $\SUT$, the synthesis phase to synthesize a witness $\sigma$ of $\Mealy \not\models \varphi$, and the validation phase to check whether $\sigma$ is also the true counterexample to $\SUT \models \varphi$.
%
By combining automata learning and model checking, BBC guides the testing procedure efficiently instead of randomly generating various test inputs and expecting some of them to falsify the specification $\Spec$.

\cref{fig:bbc-flow} illustrates the workflow of BBC.\@
We explain each component in the figure below.
(The heading of each item corresponds to a box in \cref{fig:bbc-flow}.)
\begin{description}
\item[(A)] Using the candidate generation phase of a variant of the \Lstar{} algorithm, BBC obtains a Mealy machine $\Mealy$ approximating the behavior of $\SUT$.
\item[(B)] Then, BBC verifies $\Mealy$ against the specification $\Spec$ using model checking.
\item[(C)] Suppose the model checker in (B) asserts $\Mealy \not\models \Spec$.
Let $\sigma \in {(\INPUT)}^*$ be a counterexample that witnesses $\Mealy \not\models \Spec$; notice that such $\sigma$ exists in ${(\INPUT)}^*$ because $\varphi$ is a safety LTL formula.
This $\sigma$ may not be a valid counterexample for $\mathcal{M}$ because $\Mealy$ is merely an approximation of $\mathcal{M}$.
BBC checks whether $\sigma$ is a valid counterexample also for $\SUT$ by feeding $\sigma$ to $\SUT$ and checking whether $\SUT(\sigma) \not\models \Spec$ holds.
If $\SUT(\sigma) \not\models \varphi$, then $\sigma$ is a valid counterexample for $\SUT$.
Otherwise, although $\sigma$ does not witness $\SUT \not\models \Spec$, it does witness the behavioral difference between $\SUT$ and $\Mealy$.
Therefore, BBC adds $\sigma$ to the data used by the learning phase and jumps to Step (A).
Then, the learning procedure progresses to learn more precise $\Mealy$.
\item[(D)] Suppose the model checker used in Step (B) successfully verifies $\Mealy \models \Spec$.
Then, BBC tests the equivalence between $\SUT$ and $\Mealy$, typically by randomly generating many elements of ${(\INPUT)}^*$ and feeding them to $\mathcal{M}$ and $\Mealy$.
If this step finds $\sigma \in {(\INPUT)}^*$ such that $\SUT(\sigma) \ne \Mealy(\sigma)$, then this $\sigma$ witnesses the behavioral difference between $\SUT$ and $\Mealy$; therefore, BBC adds $\sigma$ to the data used by the learning phase and jumps to Step (A).
Otherwise, BBC reports that no counterexample is found deeming $\Mealy$ is equivalent to $\SUT$.
\end{description}

In \cref{fig:bbc-flow}, (A) corresponds to the learning phase, (B) corresponds to the synthesis phase, and (C---D) correspond to the validation phase.
Notice that the validation phase (C) and (D) as a whole checks the equivalence of $\SUT$ and $\Mealy$.
Instead of checking the equivalence only by the inefficient random test in (D), BBC first checks a necessary condition of the equivalence: $\SUT(\sigma) \not\models \varphi$ for $\sigma$ such that $\Mealy(\sigma) \not\models \varphi$ obtained in (C).
BBC then proceeds to (D) only if this check in (C) passes.

\section{Probabilistic black-box checking}\label{sec:pbbc}

We present \emph{probabilistic black-box checking (ProbBBC)}.
ProbBBC is an extension of BBC in \cref{subsection:bbc} for stochastic systems.
\mw{I elaborated the assumptions}
We assume that the system under test (SUT) is a \emph{black-box MDP} with finite states, \ie{}
there is an underlying MDP $(Q, \INPUT, \OUTPUT, q_0, \Delta, L)$ with $|Q| < \infty$ representing the SUT $\SUT$, but we only know $\INPUT$ and $\OUTPUT$.
We assume that the MDP is deterministic, which intuitively requires sufficient observability to distinguish the next states.
We also assume that the SUT is \emph{executable}, \ie{}
we can perform the following probabilistic operations on the \emph{unobservable} current state $q \in Q$ of the SUT $\SUT$.

\begin{itemize}
 \item We can reset the current state $q$\LongVersion{ of the SUT $\SUT$} to the initial state $q_0$ and obtain the output $L(q_0)$.
 \item For an input $\action \in \INPUT$, we can update the current state $q$\LongVersion{ of the SUT $\SUT$} to $q' \in Q$ according to the distribution $\Delta(q, \action)$ and obtain the output $L(q')$ after the transition.
\end{itemize}

We believe that these assumptions are acceptable in many usage scenarios. For example, in reinforcement learning, Q-learning~\cite{DBLP:journals/ml/WatkinsD92} assumes that the environment is a black-box MDP that is finite, executable, and fully observable, which is more restrictive than ours.

Along with the stochastic extension of the SUT, we also change the problem to synthesize an optimal strategy rather than finding an input violating the given LTL formula.
The problem we approximately solve with ProbBBC is summarized as follows.

\defProblem{Optimal strategy synthesis}{%
A black-box MDP $\SUT$ and an LTL formula $\Spec$}{%
Find a strategy $\strategy$ maximizing the satisfaction probability $p_{\compose{\SUT}{\strategy},\Spec}$ of $\Spec$ on $\SUT$}

\subsection{Overview of ProbBBC}\label{subsec:pbbc-workflow}

\cref{fig:pbbc-flow} outlines ProbBBC.\@
As the conventional BBC does, ProbBBC consists of the learning phase ((A) in \cref{fig:pbbc-flow}), the synthesis phase ((B) in \cref{fig:pbbc-flow}), and the validation phase ((C---E) in \cref{fig:pbbc-flow}).
Given a black-box system $\SUT$ and an LTL formula $\varphi$, ProbBBC applies (1) the candidate generation phase of \LstarMDP{} in~\cref{subsection:active_automata_learning} to construct an MDP $\MDP$ that approximates $\SUT$ in the learning phase ((A) in \cref{fig:pbbc-flow}) and (2) quantitative probabilistic model checking~\cite{DBLP:reference/mc/BaierAFK18} to synthesize $\strategy$ that maximizes the probability of $\compose{\MDP}{\strategy}$ satisfying $\Spec$ in the synthesis phase ((B) in \cref{fig:pbbc-flow}).

The goal of the validation phase is the same as that of the conventional BBC:\@ to check that $\SUT$ and $\MDP$ are equivalent to each other.
To this end, the validation phase first checks the necessary condition of the equivalence: $p_{\compose{\SUT}{\strategy}, \varphi}$ is the same as $p_{\compose{\MDP}{\strategy},\varphi}$, where $p_{\compose{\SUT}{\strategy}, \Spec}$ and $p_{\compose{\MDP}{\strategy}, \Spec}$ are the satisfaction probabilities of $\Spec$ on $\compose{\SUT}{\strategy}$ and $\compose{\MDP}{\strategy}$ ((C) and (D) in \cref{fig:pbbc-flow}).
We call this procedure \emph{strategy-guided comparison} of $\SUT$ and $\MDP$.
If this necessary condition seems to be satisfied, the validation phase applies the equivalence checking phase of \LstarMDP{} to check the equivalence of $\SUT$ and $\MDP$ \emph{without} guided by the strategy $\strategy$ ((E) in \cref{fig:pbbc-flow}).

The strategy-guided comparison is inspired by the witness checking of the conventional BBC ((C) in \cref{fig:bbc-flow}).
However, unlike the witness checking of the conventional BBC, which can check whether the counterexample obtained by model checking is a true counterexample for $\SUT$ by executing $\SUT$ only once, there are following challenges in the strategy-guided comparison of ProbBBC.\@
\begin{itemize}
  \item ProbBBC must estimate the \emph{probability} $p_{\compose{\SUT}{\strategy},\varphi}$ of $\SUT_\strategy$ satisfying $\varphi$.  A one-shot execution of $\SUT_\strategy$ is not enough to estimate this probability.
  \item ProbBBC needs to decide whether $p_{\compose{\SUT}{\strategy},\varphi}$ differs from $p_{\compose{\MDP}{\strategy},\varphi}$ obtained in the synthesis phase.
  However, since an estimation of $p_{\compose{\SUT}{\strategy},\varphi}$ is a probabilistic variable that is not guaranteed to be the true probability, simply comparing an obtained estimation with $p_{\compose{\MDP}{\strategy},\varphi}$ is not enough.
\end{itemize}

To address these challenges, ProbBBC first executes $\SUT_{\strategy}$ multiple times and estimates the probability $p_{\compose{\SUT}{\strategy},\varphi}$ from the samples ((1) of (C) in \cref{fig:pbbc-flow}).
Then, ProbBBC conducts statistical hypothesis testing with the null hypothesis $p_{\compose{\SUT}{\strategy},\varphi} = p_{\compose{\MDP}{\strategy},\varphi}$ ((2) of (C) in \cref{fig:pbbc-flow}).
If $p_{\compose{\SUT}{\strategy},\varphi} \ne p_{\compose{\MDP}{\strategy},\varphi}$ is established from the hypothesis testing, ProbBBC constructs a trace that witnesses $\SUT \ne \MDP$; this trace is added to the data that the learning phase uses ((D) in \cref{fig:pbbc-flow}).

If the strategy-guided comparison fails to find an evidence of $\SUT \neq \MDP$, 
we compare $\SUT$ and $\MDP$ using the equivalence checking phase of the \LstarMDP{} algorithm in \cref{subsection:active_automata_learning} ((E) in \cref{fig:pbbc-flow}).
In our implementation,
among various choices of the input construction (\eg{} the W-method~\cite{DBLP:journals/tse/Chow78}),
we use uniform random sampling, \ie{} each input $\action \in \INPUT$ is sampled from the uniform distribution over $\INPUT$ and the sampling stops with a certain stop probability.
Moreover, the stop probability is reduced when we fail to find an evidence of $\SUT \neq \MDP$ so that any length of inputs are eventually sampled.
Such an adaptive change of stop probability makes the parameters robust to the complexity of $\SUT$.

To enhance ProbBBC, we utilize the information obtained during the strategy-guided comparison also in the learning phase.
%
In the strategy-guided comparison, ProbBBC adds the traces obtained by sampling $\SUT_{\strategy}$ to the observation table used in the learning phase.
The closedness and the consistency of the observation table are periodically checked; if the observation table turns out to not-closed or inconsistent, then the execution of the validation phase is stopped, and the learning phase starts to learn new $\MDP$ with the new observation table.
The interruption also happens when the validation phase discovers a trace that is impossible in $\MDP$.
These optimizations are achieved by (1) sharing the multiset $\TracePool$ of traces mentioned in \cref{subsection:active_automata_learning} between the learning and validation phases and (2) adding the traces discovered during the validation phase to $\TracePool$.

\begin{figure}[tbp]
  \centering
  \tikzset{
  >={Latex[width=2mm,length=2mm]},
  state/.style={
      rectangle,
      draw=black, very thick,
      minimum height=2em,
      inner sep=4pt,
      align=center,
    },
  }
  \ShortVersion{\begin{tikzpicture}[node distance=1.6cm,align=center,scale=0.70,every node/.style={transform shape},every initial by arrow/.style={initial text={}}]}
  \LongVersion{\begin{tikzpicture}[node distance=1.6cm,align=center,scale=0.85,every node/.style={transform shape},every initial by arrow/.style={initial text={}}]}
    \node[state, initial above] (n1)
    {Generate a candidate MDP\\ that approximates $\mathcal{M}$};
    \node[node distance=0.2cm,above right=of n1, xshift=0.2cm,yshift=-0.3cm] (learning_phase) {\underline{\textbf{Learning phase}}};
    \node[state, node distance=1.0cm, below=of n1] (n2)
    {Probabilistic model checking of\\ $\MDP$ against $\Spec$};
    \node[node distance=1.2cm,below=of learning_phase] (synthesis_phase) {\underline{\textbf{Synthesis phase}}};
    \node[state, node distance=2.0cm, below=of n2, text depth=3.5cm, text width=7cm] (n3)
    {Check if $\strategy$ reveals the difference\\between $\SUT$ and $\MDP$ with respect to $\Spec$};
    \node[node distance=2.5cm,below=of synthesis_phase] (validation_phase) {\underline{\textbf{Validation phase}}};
    \node[dashed,draw,thick,node distance=1.4cm,below=of n2,text depth=7.3cm, text width=15.5cm] (validation_phase_rectangle) {};
    \node[state, node distance=3.0cm, below=of n2, text width=5.5cm] (n10)
    {(1) Sampling traces from $\compose{\SUT}{\strategy}$};
    \node[state, node distance=1.0cm, below=of n10, text width=5.5cm] (n11)
    {(2) Statistical hypothesis testing\\ to establish $p_{\compose{\MDP}{\strategy},\Spec} \neq \bar{p}_{\compose{\SUT}{\strategy},\Spec}$};
    \node[state, right of=n11, xshift=4.5cm, yshift=-2cm] (n5)
    {Try to construct\\a trace $\trace$ witnessing\\$\SUT \neq \MDP$};
    \node[state, left of=n11, xshift=-3.8cm, yshift=-2cm] (n6)
    {Check if\\$\SUT \simeq \MDP$\\ by equivalence testing};
    \node[node distance=1.25cm, below=of n6] (n7) {the probability $\bar{p}_{\compose{\SUT}{\strategy},\Spec}$\\and the strategy $\strategy$};

    \node[above of=n1, yshift=-0.74cm, xshift=-1.9cm] (na) {(A)};
    \node[above of=n2, yshift=-0.74cm, xshift=-2.4cm] (nb) {(B)};
    \node[above=of n3, yshift=-1.6cm, xshift=-3.7cm] (nc) {(C)};
    \node[above of=n5, yshift=-0.5cm, xshift=-1.3cm] (ne) {(D)};
    \node[above of=n6, yshift=-0.5cm, xshift=-1.1cm] (nf) {(E)};

    \draw[->] (n1) -- node[fill=white] {candidate MDP $\MDP$} (n2);
    \draw[->] (n2) -- node[fill=white,yshift=0.25cm]
    {the probability $p_{\compose{\MDP}{\strategy},\Spec}$ of satisfying $\Spec$\\on $\MDP$ with a strategy $\strategy$} (n3);
    \path[->] (n10) edge node[fill=white,yshift=0.1cm] 
    {probability $\bar{p}_{\compose{\SUT}{\strategy},\Spec}$ of satisfying $\Spec$ on $\compose{\SUT}{\strategy}$} (n11);
    \draw[->] (n11.south) -- ++(-1,0) -- ++(0,-1.5) -- 
    node[below, xshift=0.5cm]{Deems $p_{\compose{\MDP}{\strategy},\Spec} = \bar{p}_{\compose{\SUT}{\strategy},\Spec}$} (n6);
    \draw[->] (n11.south) -- ++(1,0) -- ++(0,-1.5) -- 
    node[below, xshift=-0.5cm]{$p_{\compose{\MDP}{\strategy},\Spec} \neq \bar{p}_{\compose{\SUT}{\strategy},\Spec}$ is established} (n5);

    \draw[->] (n5.north)  -- node[fill=white,xshift=0cm,yshift=-2.5cm]
    {Success\\($\mathcal{M} \neq \MDP$ \ is\\witnessed by $\sigma$)} ++(0,9)  -- (n1.east);
    \draw[->] (n5.south) -- ++(0,-0.7) -- node[above] {Failure} ++(-7,0) -- (n6);
    \draw[->] (n6.north) -- node[fill=white,yshift=-2.5cm] 
    {$\SUT \neq \MDP$ \ is\\witnessed by $\trace$} ++(0,9) -- (n1.west);
    \draw[->] (n6) edge node[fill=white,yshift=0.2cm] {Deems $\SUT = \MDP$} (n7);
  \end{tikzpicture}
  \caption{The workflow of probabilistic black-box checking.}%
  \label{fig:pbbc-flow}
\end{figure}
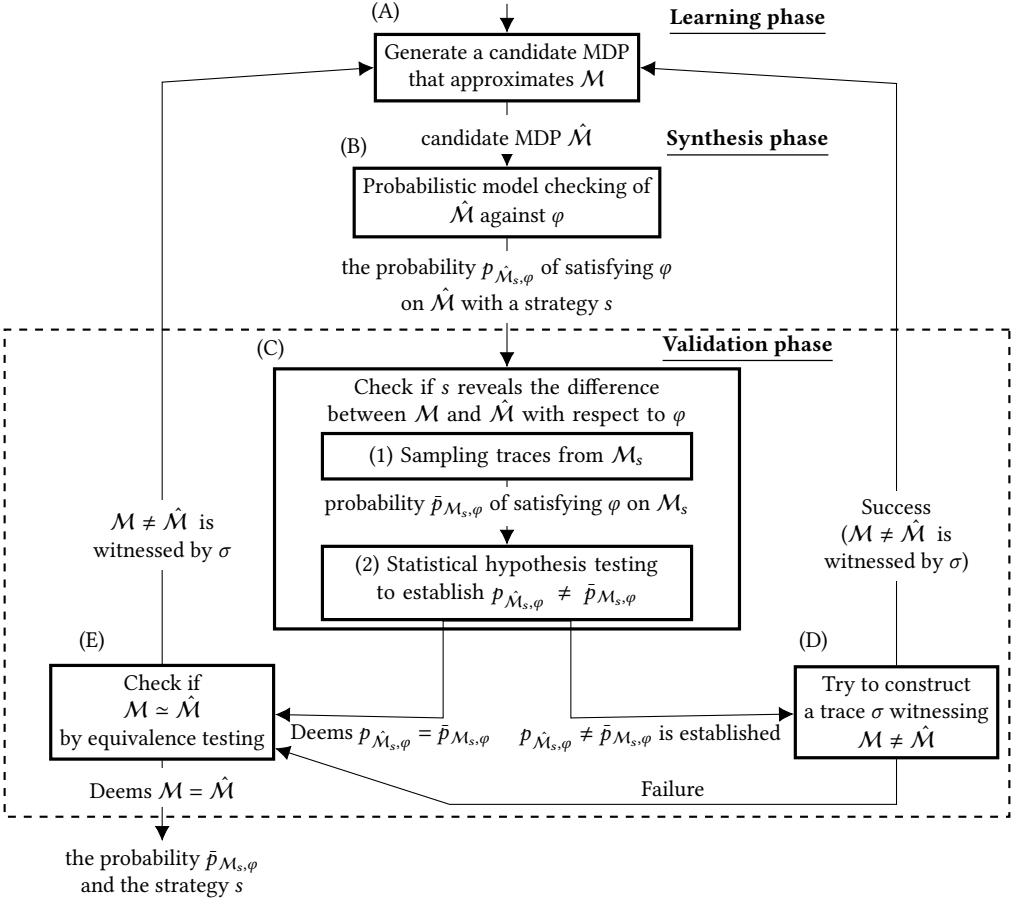

Notice that the combination of the synthesis and the validation phases ((B---E) in \cref{fig:pbbc-flow}) can be considered as the equivalence checking phase of the \LstarMDP{} algorithm because, overall, it tries to find an evidence of $\SUT \neq \MDP$.
Therefore, properties of \LstarMDP{} (such as convergence) shown in~\cite{DBLP:journals/fac/TapplerA0EL21} also holds for ProbBBC with additional discussion on the strategy-guided comparison.
As a corollary of the convergence, we have the correctness of ProbBBC,
which we show in \cref{subsec:pbbc-correctness}.

\subsection{Detail of the strategy-guided comparison in the validation phase}\label{subsec:strategy_guided_comparison}

\reviewer{1}{Please revise Section 4.2 to make it easier to follow, ideally with some running examples.}

In the following, we explain the detail of the validation phase of ProbBBC, focusing on the strategy-guided comparison.
This part is of the largest technical novelty among the three phases.

\subsubsection{Comparison of $\SUT$ and $\MDP$ with a strategy}%
\label{section:comparison_MDP_SUT_with_strategy}

\begin{algorithm}[tbp]
  \ShortVersion{\footnotesize}
  \caption{Comparison of $\SUT$ and $\MDP$ with a strategy $\strategy$ and an LTL formula $\Spec$.}%
  \label{algorithm:SamplingPhaseOfSMC}
  \newcommand{\myCommentFont}[1]{\texttt{\footnotesize{#1}}}
  \SetCommentSty{myCommentFont}
  \DontPrintSemicolon{}
  \SetKwFunction{FSamplingPhaseOfSMC}{CompareWithStrategy}
  \SetKwFunction{FSampleSingleTrace}{SampleSingleTrace}
  \SetKwFunction{FreqNumOfSample}{RequiredNumberOfSample}
  \SetKwFunction{FUpdateObservation}{UpdateObservationTable}
  \SetKwFunction{FExtendObservationTable}{ExtendObservationTableWithCEX}
  \SetKwFunction{FisPeriodicallyCheckingRound}{isPeriodicallyCheckingRound}
  \SetKwFunction{FProbabilisticModelChecking}{ProbabilisticModelChecking}
  \SetKwFunction{FStudentTesting}{StudentTesting}
  \SetKwFunction{AddPrefixes}{AddPrefixes}

  \Fn{\FSamplingPhaseOfSMC{$\MDP, \SUT, \strategy, p_{\compose{\MDP}{s},\varphi}, \Spec$}} {
    \Input{An MDP $\MDP$, the SUT $\SUT$, a strategy $\strategy$, probability $p_{\compose{\MDP}{s},\varphi}$, and a safety LTL formula $\Spec$}
    \Output{If $\MDP$ and $\SUT$ are deemed equivalent for the inputs determined by $\strategy$ with respect to $\Spec$}
    \For{$i\gets 1$ \KwTo{} $N$}{\label{algorithm:SamplingPhaseOfSMC:for}
      $\sigma \gets \FSampleSingleTrace(\mathcal{M}, s)$\label{algorithm:SamplingPhaseOfSMC:sample}\; 
      \KwAdd{} the prefixes of $\sigma$ \KwTo{} $\TracePool$\label{algorithm:SamplingPhaseOfSMC:addToTracePool}\;
      \lIf{$\sigma \models \Spec$\label{algorithm:SamplingPhaseOfSMC:evaluate}}{
        $x_i \gets 1$
      } \lElse{
        $x_i \gets 0$\label{algorithm:SamplingPhaseOfSMC:vioIncrement}
      }
    }
    $\bar{p}_{\compose{\SUT}{\strategy},\Spec} \gets \sum_{i = 0}^{k} x_i / N$\label{algorithm:SamplingPhaseOfSMC:calp};\,\, $\mathit{std}_{x} \gets \text{\texttt{StandardDeviation}}(x_1,x_2,\dots,x_N)$\;
    \tcp{Conduct Student's t-testing, where the null hypothesis is $p_{\compose{\MDP}{\strategy},\Spec} = \bar{p}_{\compose{\SUT}{\strategy},\Spec}$}
    \Return{\FStudentTesting{$p_{\compose{\MDP}{\strategy},\Spec}, \bar{p}_{\compose{\SUT}{\strategy},\Spec}, \mathit{std}_{x}, N$}}\label{algorithm:SamplingPhaseOfSMC:ret}\;
  }
\end{algorithm}

\cref{algorithm:SamplingPhaseOfSMC} outlines our algorithm to compare $\SUT$ and $\MDP$ with strategy $\strategy$, which corresponds to the box (C) in \cref{fig:pbbc-flow}.
In the loop starting from \cref{algorithm:SamplingPhaseOfSMC:for}, ProbBBC repeatedly samples traces of the SUT $\SUT$ up to the required sample size $N$.
For each iteration, we obtain a trace $\sigma$ by executing the SUT $\SUT$ (\cref{algorithm:SamplingPhaseOfSMC:sample}), recording it to the multiset $\TracePool$ shared with the learning phase (\cref{algorithm:SamplingPhaseOfSMC:addToTracePool}), and check if $\sigma$ satisfies $\Spec$ or not (\cref{algorithm:SamplingPhaseOfSMC:evaluate}).
If $\Spec$ is \emph{bounded}, \ie{} its satisfaction can be decided by traces of length $k$ for some $k \in \N$, one can sample traces of length $k$.
Otherwise, the trace length must be randomly decided.
After the sampling, we estimate the satisfaction probability $\bar{p}_{\compose{\SUT}{\strategy},\Spec}$ of $\Spec$ by $\compose{\SUT}{\strategy}$ (\cref{algorithm:SamplingPhaseOfSMC:calp}) and compare it with the satisfaction probability $p_{\compose{\MDP}{\strategy},\Spec}$ by the learned MDP (\cref{algorithm:SamplingPhaseOfSMC:ret}), which is computed by quantitative probabilistic model checking.

In executing the SUT $\SUT$ in \cref{algorithm:SamplingPhaseOfSMC:sample} in \texttt{SampleSingleTrace}, ProbBBC needs to execute $\SUT_\strategy$.
However, $\strategy$ requires the path of $\compose{\SUT}{\strategy}$ to produce an input to be fed to $\SUT$, which is not possible in this case because the sequence of the states of $\SUT$ is unknown.
We use the path of $\MDP$ instead; in executing $\compose{\SUT}{\strategy}$, ProbBBC also runs $\compose{\MDP}{\strategy}$ and maintains the corresponding path $\rho$ in the MDP $\MDP$.
Then, $\strategy(\rho)$ is fed to $\SUT$ as the next input.
Such use of $\MDP$ is justified by the convergence of $\MDP$ to $\SUT$ in the limit.
Notice that such a path $\rho = q_0, \action_1, q_1, \dots, \action_n, q_n$ is uniquely determined because 
(1) we know the initial state $q_0$ and the inputs $\action_1, \action_2, \dots, \action_n$,
(2) we can observe the output $\observation_0, \observation_1, \dots, \observation_n$ of the $\SUT$, and
(3) the successor $q_{k+1}$ in $\MDP$ is uniquely determined from the previous state $q_{k}$, the input $\action_{k+1}$, and the output $\observation_{k+1}$ because $\MDP$ is a \emph{deterministic} MDP.\@

The sample size $N$ needs to be sufficiently large so that the estimated probability $\bar{p}_{\compose{\SUT}{\strategy},\Spec}$ is close to the true probability of $\SUT_\strategy$ satisfies $\Spec$.
ProbBBC decides $N$ based on the parameters $\epsilon$ and $\delta$ specified by a user so that $\Prob(|\bar{p}_{\compose{\SUT}{\strategy},\Spec}  - p_{\compose{\SUT}{\strategy},\Spec}| \ge \epsilon) \le \delta$, where $p_{\compose{\SUT}{\strategy},\Spec}$ is the true satisfaction probability of $\Spec$ by the black-box DTMC $\compose{\SUT}{\strategy}$.
%
If $0 < p_{\compose{\SUT}{\strategy},\Spec} < 1$ holds,
using the property of Chernoff bound~\cite{Okamoto59}, we have the following property among the sample size $N$ and the parameters $\epsilon$ and $\delta$: $\delta = 2e^{-2N\epsilon^2}$; hence, ProbBBC uses $N = \left\lceil \frac{\ln(2) - \ln(\delta)}{2\epsilon^2} \right\rceil$, which is also used in the context of SMC~\cite{DBLP:conf/isola/LarsenL16}.
In our experiments, we directly fix $N$ instead of deriving it from $\delta$ and $\epsilon$.
%
%

%
%
For the comparison of $p_{\compose{\MDP}{\strategy},\Spec}$ and $\bar{p}_{\compose{\SUT}{\strategy},\Spec}$ in \cref{algorithm:SamplingPhaseOfSMC:ret}, we perform a one-sample Student's t-test~\cite{DBLP:reference/stat/KalpicHL11}, where the null hypothesis is $p_{\compose{\MDP}{\strategy},\Spec} = \bar{p}_{\compose{\SUT}{\strategy},\Spec}$.
We remark that $p_{\compose{\MDP}{\strategy},\Spec}$ is the exact maximum probability of $\MDP_\strategy$ satisfying $\Spec$, which is obtained by probabilistic model checking.
Here, the one-sample Student's t-test can be used because $N$ is large, and the binomial distribution $B(N, p_{\compose{\SUT}{\strategy},\Spec})$ is reasonably close to a normal distribution~\cite{10.5555/3134214}.

\subsubsection{Witness trace construction}\label{section:witness_trace_construction}

\begin{algorithm}[tbp]
  \ShortVersion{\footnotesize}
  \caption{Construction of the trace differentiating the SUT and an approximate MDP.}%
  \label{algorithm:searchwitness}
  \newcommand{\myCommentFont}[1]{\texttt{\ShortVersion{\scriptsize}{#1}}}
  \SetCommentSty{myCommentFont}
  \DontPrintSemicolon{}
  \SetKwFunction{FSearchWitnessOfInequivalence}{ConstructWitnessTrace}

  \Fn{\FSearchWitnessOfInequivalence{$\MDP, \TracePool$}}{
    \Input{An MDP $\MDP$ and a multiset $\TracePool$ of traces of the SUT}
    \Output{If a trace $\trace$ witnessing $\SUT \neq \MDP$ is found in $\TracePool$, returns $\trace$. Otherwise, returns $\bot$}
    \KwLet{} $\TracePool_{\mathrm{set}}$ \KwBe{} $\TracePool$ without multiplicity\label{algorithm:searchwitness:initPrefix}\;
    \While{$\TracePool_{\mathrm{set}} \neq \emptyset$\label{algorithm:searchwitness:loop_begin}} {
      $\trace \gets$ \KwPop{} one of the shortest traces \KwFrom{} $\TracePool_{\mathrm{set}}$\label{algorithm:searchwitness:sigma}\;
      \KwLet{} $\traceWithoutSuffix \cdot \otraceSingle = \trace$ such that $\traceWithoutSuffix \in {(\OUTPUT \times \INPUT)}^* $ and $\otraceSingle \in \OUTPUT$\;
      $p_{\MDP} \gets$ the probability of observing $\otraceSingle$ in $\MDP$ after $\traceWithoutSuffix$\label{algorithm:searchwitness:pot}\;
      $p_{\TracePool} \gets \TracePool(\trace) / \TracePool(\traceWithoutSuffix)$\label{algorithm:searchwitness:pphi}
      \tcp*{$p_\TracePool$ approximates the probability $p_{\SUT}$ to observe $\otraceSingle$ after $\traceWithoutSuffix$ in $\SUT$.}
      \lIf{$|p_{\MDP} - p_{\TracePool}|$ is greater than a bound}{\label{algorithm:searchwitness:SHT}
        \Return{$\sigma$}\label{algorithm:searchwitness:returnSigma}
      }
    }
    \Return{$\bot$}\label{algorithm:searchwitness:returnBot}\;
  }
\end{algorithm}

If \cref{algorithm:SamplingPhaseOfSMC} deems $\SUT \neq \MDP$, we try to construct a witnessing trace $\trace$.
The constructed witness $\trace$ is used to refine the observation table in the learning phase.

\cref{algorithm:searchwitness} outlines the witness construction.
From the multiset $\TracePool$ of the traces observed in $\SUT$, we construct the set $\TracePool_{\mathrm{set}}$ of the traces in $\TracePool$ dropping the multiplicity from $\TracePool$ (\cref{algorithm:searchwitness:initPrefix}).
Then, for each $\trace \in \TracePool_{\mathrm{set}}$, we check if $\trace$ is an evidence of $\SUT \neq \MDP$ (\crefrange{algorithm:searchwitness:loop_begin}{algorithm:searchwitness:returnSigma}) in the increasing order of the length of the traces in $\TracePool_{\mathrm{set}}$.
%
Notice that $\TracePool$ is prefix-closed, and hence $\TracePool_{\mathrm{set}}$ is also prefix-closed.
Therefore, \cref{algorithm:searchwitness} tries to find one of the shortest traces $\trace$ witnessing $\SUT \neq \MDP$.

At \cref{algorithm:searchwitness:sigma}, we pick one of the shortest traces $\trace$ from $\TracePool_{\mathrm{set}}$; at \cref{algorithm:searchwitness:pot,algorithm:searchwitness:pphi}, we compute the probability $p_{\MDP}$ of observing $\trace$ in $\MDP$ if the same input as $\trace$ is fed to $\MDP$.
This probability is computed by following the transitions of $\MDP$ according to $\trace$ and multiplying the transition probabilities in $\MDP$.
%

At \cref{algorithm:searchwitness:pphi}, we estimate the probability $p_{\SUT}$ to observe $\otraceSingle$ by feeding the prefix $\traceWithoutSuffix$ of $\trace$ such that $\trace = \traceWithoutSuffix \cdot \otraceSingle$ to $\SUT$.
Since $\SUT$ is a black-box system, we can only estimate this probability.
For the estimation, we use $\TracePool$ to approximate the trace distribution of $\SUT$ and compute the probability to observe $\trace$ after $\traceWithoutSuffix$.
Concretely, we estimate $p_{\SUT}$ by $p_{\TracePool} = \TracePool(\trace) / \TracePool(\traceWithoutSuffix)$, which is unbiased because the sampling of the outputs in $\TracePool$ follows the transition probabilities of $\SUT$.

%
%
At \cref{algorithm:searchwitness:SHT}, we compare $p_{\MDP}$ and $p_{\TracePool}$.
%
We decide whether they are different based on the criteria using a Chernoff bound mentioned in \cref{section:comparison_MDP_SUT_with_strategy}: we deem $\trace$ reveals the difference between $\MDP$ and $\SUT$ if $|p_{\MDP} - p_{\TracePool}| > \sqrt{\frac{\ln(2) - \ln(\delta)}{2 \TracePool(\traceWithoutSuffix)}}$ holds for the parameter $\delta$ in \cref{section:comparison_MDP_SUT_with_strategy}.
If $\trace$ differentiates $\MDP$ and $\SUT$, it is returned (\cref{algorithm:searchwitness:returnSigma}).
%
If we cannot find\LongVersion{ a trace differentiating $\MDP$ and $\SUT$}\ShortVersion{ such $\trace$}, we return $\bot$ (\cref{algorithm:searchwitness:returnBot}), which represents the failure of the trace construction. 


\subsubsection{Optimization using the observation table}



To enhance the entire procedure of ProbBBC, our implementation applies the following optimizations to the validation phase.
\begin{itemize}
\item For each trace $\trace$ sampled from the SUT $\SUT_\strategy$, we check if the candidate MDP $\MDP$ has a path $\rho$ corresponding to\LongVersion{ the trace} $\trace$.
If there is no such path in $\MDP$, the trace $\trace$ differentiates the SUT $\SUT$ and the candidate MDP $\MDP$.
Then, the validation phase returns such $\trace$ as a witness of $\SUT \neq \MDP$, and the learning phase starts. 
\item We also periodically update each cell of the observation table using the updated multiset $\TracePool$ of the traces obtained from $\SUT_\strategy$, and check if the observation table is still closed and consistent.
If the observation table is not closed or not consistent, we stop the validation phase and go back to the learning phase to refine $\MDP$.
\end{itemize}

\subsection{Convergence of ProbBBC}\label{subsec:pbbc-correctness}


We prove the convergence of ProbBBC by showing that MDPs learned during an execution of ProbBBC converges to one equivalent to $\SUT$ with probability $1$ given that $\SUT$ is a deterministic MDP $(Q, \INPUT, \OUTPUT, q_0, \Delta, L)$ with $|Q| < \infty$.
Therefore, the strategy obtained by ProbBBC converges to the optimal one for $\SUT$.
%
%
In this section, we assume that the equivalence checking procedure ((E) in \cref{fig:bbc-flow}) samples input, runs both $\SUT$ and $\MDP$ using the input multiple times, and subsequently compares the output distributions using hypothesis testing; the null hypothesis states that both distributions are identical.

\subsubsection{Fair-sampling assumption}

We first introduce the \emph{fair sampling} assumption for the equivalence checking procedure, which postulates that the procedure explores each observable path of the provided $\SUT$ infinitely many times with a positive probability.
To formalize this assumption, we define the set $L_{\SUT}$ of \emph{access sequences} of $\SUT$.
Essentially, $L_{\SUT}$ is the set of the observable traces corresponding to a path $\SUT$ where no state is repeated.

\begin{definition}[Access sequence]
A path $q_0,a_1,q_1,a_2,q_2,\dots,a_n,q_n$ is \emph{cycle-free} if $i \ne j$ implies $q_i \ne q_j$.
A cycle-free path $q_0,a_1,q_1,a_2,q_2,\dots,a_n,q_n$ is defined to be \emph{maximal} if, for any $a \in \INPUT$ and for any $q' \in Q$, $q' \notin \{q_0,\dots,q_n\}$ implies $\Delta(q_n,a)(q') = 0$.
A cycle-free path $q_0,a_1,q_1,a_2,q_2,\dots,a_n,q_n$ is \emph{observable} if $\Delta(q_i,a_{i+1})(q_{i+1}) > 0$ for any $i \in [0,n-1]$.
The set of \emph{access sequences} $L_{\SUT}$ is the set of traces that correspond to maximal and observable cycle-free paths of $\SUT$; concretely, $L_{\SUT}$ is defined by $\{L(q_0),a_1,L(q_1),a_2,\dots,a_n,L(q_n) \mid q_0,a_1,q_1,a_2,\dots,a_n,q_n \text{is cycle-free, observable, and maximal}\}$; as such, $L_{\SUT}$ is a finite set.
\end{definition}

\begin{definition}[Fairness assumption]
We say an equivalence testing procedure satisfies \emph{fairness assumption} if it samples every element in $L_{\SUT}$ infinitely often with a positive probability.
\end{definition}

For instance, the equivalence testing by uniform sampling (\cref{subsec:pbbc-workflow}) used in our implementation satisfies the fairness assumption.

\subsubsection{Outline of the convergence proof}

We prove the following theorem on the convergence of ProbBBC.\@

\begin{theorem}[Convergence]\label{th:convergence}
Under the fair-sampling assumption, the MDPs synthesized by (A) in an execution of ProbBBC converge to an MDP equivalent to $\SUT$ with probability $1$.
\end{theorem}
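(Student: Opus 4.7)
The plan is to reduce \cref{th:convergence} to the convergence of the \LstarMDP{} algorithm established in~\cite{DBLP:journals/fac/TapplerA0EL21}, by arguing that the composite synthesis+validation procedure (B--E) in \cref{fig:pbbc-flow} constitutes an equivalence oracle that is at least as informative as the one used in the original \LstarMDP{}. Concretely, I would first observe that phase (A) of ProbBBC invokes exactly the candidate-generation procedure of \LstarMDP{}, so the only thing to check is that, whenever the current candidate $\MDP$ is not equivalent to $\SUT$, the remainder of the loop eventually returns a witness trace with probability $1$, and that only finitely many distinct candidate MDPs can be produced before equivalence is reached.

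First I would record two structural facts. Since $\SUT$ is a deterministic MDP with $|Q| < \infty$ and $\INPUT,\OUTPUT$ are finite, the number of deterministic MDPs up to isomorphism that can ever arise as a candidate of \LstarMDP{} over $(\INPUT,\OUTPUT)$ with at most $|Q|$ states is finite; this bounds the number of learning rounds before convergence. Second, whenever $\MDP \not\simeq \SUT$ there exists an access sequence $\trace^{*} \in L_{\SUT}$ (or a prefix thereof) that witnesses the discrepancy, either by a transition impossible in $\MDP$ or by a transition probability that differs from its counterpart in $\MDP$ by a fixed positive amount $\eta > 0$ determined by $\SUT$ and $\MDP$.

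The key step is then a Borel--Cantelli-style argument for a single iteration. Conditioned on the current non-equivalent candidate $\MDP$, the validation phase proceeds to step (E) unless it has already rejected $\MDP$ in (C)--(D); this can only help convergence, so it suffices to show (E) alone finds a witness almost surely. By the fair-sampling assumption, every $\trace \in L_{\SUT}$ is sampled in (E) infinitely often with positive probability. Along the subsequence of rounds in which $\trace^{*}$ is sampled, the empirical frequency used in \cref{algorithm:searchwitness} to estimate $p_{\SUT}$ converges to the true probability by the strong law of large numbers, so the Chernoff-type threshold $|p_{\MDP} - p_{\TracePool}| > \sqrt{(\ln 2 - \ln \delta)/(2\TracePool(\traceWithoutSuffix))}$ is eventually triggered with probability $1$, yielding a witness trace that is fed back to (A). By Borel--Cantelli, with probability $1$ some non-equivalent candidate is refuted in finitely many rounds, and since only finitely many distinct candidates exist, after finitely many refinements the candidate becomes equivalent to $\SUT$ with probability $1$.

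The main obstacle I expect is the second bullet: bounding away from zero the discrepancy $\eta$ that the validation phase must detect, so that the sample-complexity condition in \cref{algorithm:searchwitness} is actually satisfied almost surely rather than merely in the limit. Because $\SUT$ is fixed with finitely many states and only finitely many candidate MDPs can occur, the set of attainable discrepancies on access sequences is finite, and its minimum positive value gives such an $\eta$; making this quantitative requires care about how $\TracePool(\traceWithoutSuffix)$ grows under fair sampling and about the interaction between the strategy-guided sampling (which biases $\TracePool$) and the uniform sampling in (E) (which restores fairness). Once this is pinned down, the convergence of strategies to an optimal one for $\SUT$ follows as a corollary from the continuity of the PMC solution on the finite set of candidate MDPs.
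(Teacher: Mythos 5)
Your overall reduction---treat (B--E) as the equivalence oracle of \LstarMDP{} and inherit its convergence under fair sampling---is the same high-level strategy the paper takes, and your analysis of step (E) in isolation is essentially the argument the paper delegates to~\cite{DBLP:journals/fac/TapplerA0EL21}. However, there is a genuine gap at the sentence ``the validation phase proceeds to step (E) unless it has already rejected $\MDP$ in (C)--(D); this can only help convergence, so it suffices to show (E) alone finds a witness almost surely.'' That claim is precisely what has to be proved, and it is not obvious: in \cref{fig:pbbc-flow}, whenever \cref{algorithm:searchwitness} returns a trace in (D), control goes straight back to (A) and step (E) is \emph{skipped}. Because the strategy-guided comparison samples $\TracePool$ in a biased way and the rejection in (C)/(D) is a statistical test with error probability, one cannot a priori exclude the execution cycling through (A)$\to$(B)$\to$(C)$\to$(D)$\to$(A) forever---fed by spurious or non-progressing witnesses---so that the fairly-sampling step (E) is never reached and the fair-sampling hypothesis is never actually used. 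The paper closes exactly this hole with \cref{th:terminationLoop}: it shows that each witness returned by (D) identifies a \emph{new} state of the candidate with probability at least $(1-\delta)^2$ (via the Chernoff/Hoeffding comparison in \cref{algorithm:searchwitness}), and that by \cref{lem:compatibleIffEqrow} and \cref{lem:minimality} the learned MDPs eventually have no more states than $\SUT$, so an infinite sequence of strictly growing candidates is impossible and the loop (A--D) terminates almost surely.

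Your attempted patch---``only finitely many distinct candidate MDPs can arise, so only finitely many rounds occur''---does not repair this. The candidates carry estimated transition probabilities, so they are not drawn from a finite set; the bound ``at most $|Q|$ states'' is itself a consequence of the minimality lemma and holds only after sufficiently many samples; and even a finite candidate set would permit infinite cycling unless each refinement makes monotone progress, which is the content of the new-state-identification argument you would still need to supply. Likewise, your uniform discrepancy $\eta>0$ over ``all candidates that arise'' is taken over a run-dependent random set, so it cannot be fixed in advance of the argument. If you replace the ``can only help convergence'' step with a proof that the loop (A--D) is almost surely terminating (along the lines above), the rest of your argument goes through.
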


Our proof of this theorem is built upon the convergence proof of the \LstarMDP{}~\cite{DBLP:journals/fac/TapplerA0EL21}, in which the authors have shown that, under the fair-sampling condition, \LstarMDP{} converges to a correct MDP with probability $1$.

As discussed in \cref{subsec:pbbc-workflow}, the synthesis and the validation phases of ProbBBC in combination can be viewed as an enhancement to the equivalence checking of \LstarMDP{}.
Nonetheless, the convergence of ProbBBC does not directly follow from that of \LstarMDP{} due to our validation phase being based on the strategy-guided comparison, which may lead to a biased sampling of paths.
%


Crucially, we note that if we can prove each execution of the loop (A---D) terminates with probability $1$ (i.e., the loop being \emph{almost surely} terminating), we can then guarantee that the fair sampling procedure (E) is executed eventually.
Consequently, no matter the strategies synthesized by (B), every element of $L_{\SUT}$ is fairly sampled, and therefore, the convergence of ProbBBC follows.

\subsubsection{Almost-sure termination of the loop (A---D)}

Based on the discussion so far, we prove the almost-sure termination of the loop (A---D).
We first elaborate on the observation table of \LstarMDP{}.
As we mentioned in \cref{subsection:active_automata_learning}, the notion of closedness and consistency in \LstarMDP{} is based on the statistical row comparison with Hoeffding bound~\cite{hoeffding1963probability}.
Let us write $\COMPATIBLE(r,r')$ if the rows indexed by $r$ and $r'$ are deemed equal based on the comparison with the Hoeffding bound.
We also write $\EQROW(r,r')$ if the rows indexed by $r$ and $r'$ should be deemed equal based on the \emph{true} probability of $\SUT$.
We expect that $\COMPATIBLE(r,r')$ is equivalent to $\EQROW(r,r')$ after sufficiently many sampling of traces, which indeed holds.

\begin{lemma}\label{lem:compatibleIffEqrow}
If $\TracePool$ contains sufficiently many traces, then $\COMPATIBLE(r,r') \iff \EQROW(r,r')$  with probability $1$ for any rows indexed by $r$ and $r'$ in the observation table.
\end{lemma}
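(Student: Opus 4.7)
The plan is to leverage two facts about the statistical machinery of \LstarMDP{}: (i) the row-comparison test used to define $\COMPATIBLE$ is based on a Hoeffding bound whose tolerance shrinks as the number of samples grows, and (ii) the observation table has finitely many rows and columns, so only finitely many row pairs need to be analyzed simultaneously. Because the table is finite, I can take a union bound over all pairs of rows and all column suffixes, and then reduce the problem to a statement about the empirical output distribution at each reachable prefix.

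First, fix an observation table with row indices from $\PrefixSet \cup (\PrefixSet \cdot \INPUT \cdot \OUTPUT)$ and column indices from $\SuffixSet$. For a row index $\prefix$ and column index $\suffix$, the cell $T_{\prefix,\suffix}$ records the empirical frequency function over $\OUTPUT$ obtained from the subset of $\TracePool$ that agrees with $\prefix \cdot \suffix$ up to its last output. Let $n_{\prefix,\suffix}$ denote the size of that subset. Since ProbBBC keeps refining $\TracePool$ via the sampling query of \LstarMDP{}, by the assumption that we have ``sufficiently many traces'' I may take $n_{\prefix,\suffix}$ to be as large as needed for every relevant $(\prefix,\suffix)$.

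Second, I invoke the Hoeffding bound in two directions. For the ``only if'' direction, suppose $\EQROW(r,r')$, so the true output distributions induced by $r$ and $r'$ at every suffix coincide. By Hoeffding's inequality applied to the empirical frequencies at each suffix, the probability that the empirical distributions deviate by more than the test's tolerance $\sqrt{(\ln 2 - \ln \delta)/(2 n_{\prefix,\suffix})}$ decays exponentially in $n_{\prefix,\suffix}$; hence with sample sizes large enough the test accepts, meaning $\COMPATIBLE(r,r')$ holds with probability $1$. For the ``if'' direction, suppose $\neg\EQROW(r,r')$, so there exists a suffix $\suffix$ at which the true distributions differ by a strictly positive amount $\eta > 0$. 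Since the tolerance of the Hoeffding test tends to $0$ as $n_{\prefix,\suffix}, n_{\prefix',\suffix} \to \infty$, a second Hoeffding-style concentration argument shows that, with probability $1$, the empirical gap exceeds the tolerance once sample sizes are large enough; so the test rejects and $\neg\COMPATIBLE(r,r')$.

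Third, because the observation table is finite, there are only finitely many pairs $(r,r')$ and finitely many suffixes $\suffix \in \SuffixSet$. A union bound over these finitely many Hoeffding events converts the per-pair ``with high probability'' statements into a single ``with probability $1$'' statement, uniformly over the table, once every relevant $n_{\prefix,\suffix}$ is sufficiently large. The main subtlety I anticipate is bookkeeping the direction of quantifiers: ``sufficiently many traces'' must be interpreted as providing enough samples \emph{at every reachable prefix} simultaneously, which is where the fair-sampling assumption on the equivalence checking procedure and the prefix-closedness of $\TracePool$ become essential to guarantee that every $n_{\prefix,\suffix}$ grows without bound as learning proceeds. Once this is in place, the lemma follows.
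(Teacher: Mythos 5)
First, note that the paper does not actually prove this lemma: its ``proof'' is a one-line citation to Theorem~5 of the \LstarMDP{} paper of Tappler et al., so you are attempting a reconstruction of an argument the authors delegate entirely to prior work. Your overall architecture (finitely many row pairs and suffixes, union bound, fair sampling to make every relevant sample count $n_{\prefix,\suffix}$ grow, concentration in both directions) is the right shape for that argument, and your $\neg\EQROW \Rightarrow \neg\COMPATIBLE$ direction is sound: a fixed true gap $\eta>0$ eventually dominates a tolerance that shrinks like $1/\sqrt{n}$, and the failure probabilities are summable, so Borel--Cantelli gives almost-sure eventual rejection.

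The genuine gap is in the $\EQROW \Rightarrow \COMPATIBLE$ direction. You claim that the probability of the empirical frequencies deviating by more than the tolerance $\sqrt{(\ln 2 - \ln\delta)/(2 n_{\prefix,\suffix})}$ ``decays exponentially in $n_{\prefix,\suffix}$,'' but that tolerance is calibrated so that Hoeffding's bound gives exceedance probability exactly $2e^{-2n\epsilon^2}=\delta$ --- a constant, independent of $n$. Worse, since the test is re-applied as $\TracePool$ grows, the law of the iterated logarithm implies that $\sqrt{n}\,|\hat{p}_n - p|$ exceeds any fixed constant infinitely often almost surely (for $p\in(0,1)$), so with a tolerance of the form $c/\sqrt{n}$ for fixed $c$ the check would spuriously declare equal rows incompatible infinitely often with probability $1$, and your conclusion fails. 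The convergence proof you are trying to reproduce avoids this by letting the confidence parameter of the Hoeffding check decrease with the number of samples (or rounds) in a summable way, so that Borel--Cantelli applies to the false-rejection events as well; alternatively one can use a tolerance shrinking strictly slower than $n^{-1/2}$. Without one of these mechanisms, the ``only if'' half of the equivalence does not hold with probability $1$, so your sketch as written does not establish the lemma.
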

\begin{proof}
From Theorem 5 of \cite{DBLP:journals/fac/TapplerA0EL21}.
\end{proof}

We also use the following property of \LstarMDP{}.

\begin{lemma}\label{lem:minimality}
If $\COMPATIBLE(r,r') \iff \EQROW(r,r')$ for each $r$ and $r'$ in the row indices $\PrefixSet$ of the observation table, 
 then the MDP derived from the observation table is the smallest among ones that are consistent with the table.
\end{lemma}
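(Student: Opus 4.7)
The plan is to argue that, under the hypothesis $\COMPATIBLE(r,r') \iff \EQROW(r,r')$ on $\PrefixSet$, the state space of the MDP $\MDP$ constructed from the observation table is in bijection with the partition of $\PrefixSet$ induced by $\EQROW$, and that any MDP consistent with the table must spend at least one state on each of these equivalence classes.

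First I would fix notation. Let $\MDP$ denote the MDP derived from the table: by construction, its states are the $\COMPATIBLE$-equivalence classes of rows in $\PrefixSet$, so by the hypothesis they coincide with the $\EQROW$-equivalence classes. Let $\MDP'=(Q',\INPUT,\OUTPUT,q_0',\Delta',L')$ be any deterministic MDP that is consistent with the observation table, in the sense that, for every row index $r\in\PrefixSet$ and every column index $\suffix\in\SuffixSet$, the trace distribution of $\MDP'$ after feeding the inputs of $r\cdot\suffix$ from $q_0'$ matches the frequencies recorded in the cell $T_{r,\suffix}$ (in the limit of infinite sampling, which is what $\EQROW$ refers to). For each $r\in\PrefixSet$ let $q'_r\in Q'$ be the unique state reached in $\MDP'$ by the trace $r$; uniqueness follows from determinism of $\MDP'$ as recalled in \cref{sec:preliminary}.

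The key step is to show that, for $r,r'\in\PrefixSet$, $q'_r=q'_{r'}$ implies $\EQROW(r,r')$. Indeed, if $q'_r=q'_{r'}$ then for every $\suffix\in\SuffixSet$ the distribution over $\OUTPUT$ induced by feeding $\suffix$ at $q'_r$ equals that induced by feeding $\suffix$ at $q'_{r'}$. Because $\MDP'$ is consistent with the table, these distributions coincide with the row entries indexed by $(r,\suffix)$ and $(r',\suffix)$, respectively; hence the two rows have the same true distribution on every column, which is exactly $\EQROW(r,r')$. Invoking the hypothesis, this gives $\COMPATIBLE(r,r')$, so $r$ and $r'$ belong to the same state of $\MDP$. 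Contrapositively, distinct states of $\MDP$ are witnessed by prefixes that map to distinct states of $\MDP'$, giving an injection from states of $\MDP$ reached by $\PrefixSet$ into $Q'$. Since every state of $\MDP$ is reached by at least one element of $\PrefixSet$ by construction, we conclude $|Q_{\MDP}|\le|Q'|$.

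The main obstacle I anticipate is pinning down the precise meaning of ``consistent with the table'' so that the argument in the previous paragraph actually goes through in the limiting/statistical sense used by \LstarMDP{}: the table only stores sampled frequencies, whereas $\EQROW$ refers to the underlying true distributions. I would handle this by interpreting consistency as equality of \emph{induced} distributions on the columns $\SuffixSet$ rather than equality of counts, which is the standard reading in the \LstarMDP{} framework and is the version implicitly used when defining $\EQROW$. A minor subsidiary point is to verify that the transition function of $\MDP$ is well defined on the equivalence classes, which also follows from the hypothesis $\COMPATIBLE\iff\EQROW$ on $\PrefixSet$ together with the closedness of the observation table.
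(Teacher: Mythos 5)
The paper does not actually prove this lemma: it simply cites Lemma~13 of the \LstarMDP{} paper, so there is no in-paper argument to compare against. Your proof is the standard Nerode-style minimality argument that underlies that cited lemma, and it is essentially correct: states of the derived MDP are the $\COMPATIBLE$-classes of $\PrefixSet$, which by hypothesis equal the $\EQROW$-classes; any deterministic MDP $\MDP'$ consistent with the table induces a map $r \mapsto q'_r$ on $\PrefixSet$; collapsing two prefixes in $\MDP'$ forces their conditional output distributions on every suffix to agree, hence $\EQROW$ and therefore $\COMPATIBLE$; contrapositively you get an injection from the states of the derived MDP into $Q'$, and closedness guarantees every derived state is hit by some prefix. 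That is exactly the right decomposition.

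Two points you correctly flag but should make fully explicit if this were written out. First, the argument only works if "consistent with the table" quantifies over \emph{deterministic} MDPs, since you need the state $q'_r$ reached by the trace $r$ to be unique; this matches the setting of the paper (the learned hypotheses are deterministic MDPs, and the path corresponding to a trace in a deterministic MDP is unique), but a nondeterministic competitor would break the map $r\mapsto q'_r$. Second, since a suffix $\suffix\in\INPUT\cdot(\OUTPUT\cdot\INPUT)^*$ interleaves outputs, "the distribution induced by feeding $\suffix$ at $q'_r$" really means the distribution over the final output conditioned on observing the intermediate outputs of $\suffix$ along the way; consistency must be read at the level of these true conditional distributions rather than sampled counts, which is the reading $\EQROW$ presupposes. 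With those caveats spelled out, the proof is complete.
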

\begin{proof}
From Lemma 13 of \cite{DBLP:journals/fac/TapplerA0EL21}.
\end{proof}

We show the proof sketch of the following theorem.

\begin{theorem}\label{th:terminationLoop}
The loop (A---D) of ProbBBC terminates with probability $1$ irrespective of the strategies synthesized by (B).
\end{theorem}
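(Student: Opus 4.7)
My plan is to establish almost-sure termination by showing that every iteration of the loop (A---D) makes progress toward a stable candidate MDP, after which the witness-construction step (D) returns $\bot$ with probability $1$. Since the loop re-enters (A) only when (C) rejects \emph{and} (D) returns some $\trace \neq \bot$, it suffices to argue that, with probability $1$, at least one of these two events eventually ceases to occur.

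The first step is to bound how many times the candidate MDP produced by (A) can actually change. Each re-entry into (A) happens because (D) returned a witness $\trace$, whose prefixes are then added to $\PrefixSet$. By \cref{lem:minimality}, after enough sampling the MDP generated at (A) is the smallest one consistent with the observation table; combined with $|Q| < \infty$, this caps the size of any candidate MDP at $|Q|$ states. Hence there is a (random but almost surely finite) round $N^\star$ after which the candidate $\MDP^\star$ is stable. The second step is to argue that once $\MDP^\star$ is stable, the continued sampling in (C) drives $\TracePool$ to accurately reflect the true trace distribution induced by $\strategy$ on $\SUT$: by the strong law of large numbers, for every fixed prefix $\traceWithoutSuffix$ the empirical conditional probability $p_{\TracePool}$ computed at line~\ref{algorithm:searchwitness:pphi} converges almost surely to the true probability $p_{\SUT}$, and by \cref{lem:compatibleIffEqrow} the corresponding $p_{\MDP}$ on $\MDP^\star$ equals this true probability on every row that $\strategy$ actually realizes. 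Since the Hoeffding-style threshold $\sqrt{(\ln 2 - \ln \delta)/(2\TracePool(\traceWithoutSuffix))}$ at line~\ref{algorithm:searchwitness:SHT} shrinks as $\Theta(1/\sqrt{n})$ while $|p_{\MDP} - p_{\TracePool}|$ decays at the same rate but with an arbitrarily small limit, line~\ref{algorithm:searchwitness:SHT} fails for every $\trace$ once sampling is sufficient, so (D) returns $\bot$.

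The hardest part I anticipate is making the second step \emph{uniform} over the growing multiset $\TracePool$: at each visit to (D) we must reject the witness condition for \emph{every} recorded trace simultaneously, not just for one. I plan to handle this by a union-bound argument restricted to the finitely many prefixes appearing in $\TracePool$ at any given round, coupled with a Borel--Cantelli step that sums the vanishing per-trace failure probabilities across rounds to show only finitely many false witnesses occur almost surely. A subtler issue is that samples across loop iterations are dependent through the shared $\TracePool$ and the chosen $\strategy$; I would resolve this by conditioning on the filtration generated by the history at the start of each (C), so that the freshly drawn traces inside a single invocation remain i.i.d.\ under the fixed strategy and Hoeffding's inequality applies. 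Crucially, because the bound on the number of state-adding iterations in the first step does not depend on \emph{which} strategies (B) produces, the conclusion holds irrespective of (B)'s choices, as required.
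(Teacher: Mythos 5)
Your first step is in the spirit of the paper's argument, but your overall route is different and the second step has a genuine gap. The paper does \emph{not} try to show that (D) eventually returns $\bot$; it argues by contradiction: conditional on \cref{algorithm:searchwitness} returning some $\trace = \traceWithoutSuffix\cdot\otraceSingle$, the Chernoff bound gives $p_{\MDPi{i},\traceWithoutSuffix,\otraceSingle}\neq p_{\SUT,\traceWithoutSuffix,\otraceSingle}$ with probability at least $(1-\delta)^2$, so (under $\COMPATIBLE\iff\EQROW$) each returned witness identifies a \emph{new state} with probability bounded away from zero. Hence non-termination would force an infinite subsequence of MDPs with strictly increasing state counts almost surely, contradicting \cref{lem:compatibleIffEqrow} and \cref{lem:minimality}, which bound the learned MDPs by $|Q|$ states. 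Nothing in that argument requires false witnesses to ever stop occurring.

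Your step two, by contrast, needs exactly that, and it fails as stated. The threshold $\sqrt{(\ln 2-\ln\delta)/(2\TracePool(\traceWithoutSuffix))}$ in \cref{algorithm:searchwitness:SHT} is calibrated so that, when $p_{\MDP}=p_{\SUT}$, the probability of a single trace being flagged is (up to the Chernoff constant) exactly $\delta$ \emph{for every sample size}: the empirical deviation $|p_{\TracePool}-p_{\SUT}|$ fluctuates at order $1/\sqrt{n}$, the same order as the threshold, so the test does not become reliable as $n$ grows. Consequently the per-round false-witness probabilities are not summable (they are a constant $\delta$ per tested trace, and a union bound over the growing set $\TracePool_{\mathrm{set}}$ makes matters worse, not better), so the Borel--Cantelli step you propose cannot conclude that only finitely many false witnesses occur; if anything, the second Borel--Cantelli lemma points the other way. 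Your first step also slightly overclaims: the $|Q|$ bound from \cref{lem:minimality} caps the number of \emph{state-adding} iterations, but boundedness alone does not give stabilization of the candidate, and a false witness added to $\TracePool$ keeps the loop running without changing the MDP. To repair the proof you should drop the ``(D) eventually returns $\bot$'' goal and instead follow the contradiction route: infinitely many loop iterations yield infinitely many returned witnesses, each true (hence state-adding) with probability at least $(1-\delta)^2$ conditionally on the history, which almost surely produces more than $|Q|$ states.
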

\noindent
\paragraph{Proof sketch.}
%
Let $\MDPi{1},\MDPi{2},\dots$ be the sequence of MDPs learned by (A) in each iteration of the execution of (A---D).
Let $s_1,s_2,\dots$ be the sequence of strategies synthesized by (C) in each iteration.
Let $\sigma_1,\sigma_2,\dots$ be the sequence of traces returned by \cref{algorithm:searchwitness} used in (D).
Let us write $\compose{\MDPi{i}}{\strategy_i}$ (resp., $\compose{\SUT}{\strategy_i}$) for the composition of $\MDPi{i}$ (resp., $\SUT$) with strategy $\strategy_i$.
After sufficiently many iterations of the loop, we can assume that $\COMPATIBLE(r,r') \iff \EQROW(r,r')$ for any indexes $r$ and $r'$ in the observation table from \cref{lem:compatibleIffEqrow}.

Suppose the procedure (D) executes \cref{algorithm:searchwitness} with $\MDPi{i}$ and $\TracePool$ as inputs.
If this execution results in $\bot$, then the loop (A---D) would terminate; therefore, assume that \cref{algorithm:searchwitness} returns $\sigma := \sigma^{-} \cdot o$, where $\sigma^{-} \in {(\OUTPUT \times \INPUT)}^*$ and $o \in \OUTPUT$.
Let $p_{\MDPi{i},\sigma^{-},o}$ be the probability of observing $o$ in $\MDPi{i}$ after $\sigma^{-}$ and $p_{\TracePool,\sigma^{-},o}$ be $\frac{\TracePool(\sigma^{-} \cdot o)}{\TracePool(\sigma^{-})}$.
Then, $|p_{\MDPi{i},\sigma^{-},o} - p_{\TracePool,\sigma^{-},o}|$ is greater than $\sqrt{\frac{\ln(2) - \ln(\delta)}{2 \TracePool(\sigma^{-})}}$.

Let $p_{\SUT,\sigma^{-},o}$ be the true probability of $\SUT$ outputting $o$ after observing $\sigma^{-}$.
Since $p_{\TracePool,\sigma^{-},o}$ is an unbiased estimation of $p_{\SUT,\sigma^{-},o}$, $\Prob(|p_{\SUT,\sigma^{-},o} - p_{\TracePool,\sigma^{-},o}| < \epsilon) > 1 - 2 e^{-2 \epsilon^2 \TracePool(\sigma^{-})}$ for an arbitrary positive real \mw{I added $\epsilon$. Is this as intended?}$\epsilon$ from the property of Chernoff bound in \cref{section:comparison_MDP_SUT_with_strategy}.
%
Notice that $|p_{\MDPi{i},\sigma^{-},o} - p_{\TracePool,\sigma^{-},o}| \le |p_{\MDPi{i},\sigma^{-},o} - p_{\SUT,\sigma^{-},o}| + |p_{\SUT,\sigma^{-},o} - p_{\TracePool,\sigma^{-},o}|$.
Therefore, $\Prob(|p_{\MDPi{i},\sigma^{-},o} - p_{\SUT,\sigma^{-},o}| > \sqrt{\frac{\ln(2) - \ln(\delta)}{2 \TracePool(\sigma^{-})}} - \epsilon) > 1 - 2 e^{-2 \epsilon^2 \TracePool(\sigma^{-})}$.
%
If we choose $\epsilon$ so that $\epsilon \in \left(\sqrt{\frac{\ln(2) - \ln(2 \delta - \delta^2))}{2 \TracePool(\sigma^{-})}}, \sqrt{\frac{\ln(2) - \ln(\delta)}{2 \TracePool(\sigma^{-})}}\right)$, we have $\Prob(|p_{\MDPi{i},\sigma^{-},o} - p_{\SUT,\sigma^{-},o}| > \sqrt{\frac{\ln(2) - \ln(\delta)}{2 \TracePool(\sigma^{-})}} - \epsilon) > (1 - \delta)^2$.
This implies that $p_{\MDPi{i},\sigma^{-},o}$ and $p_{\SUT,\sigma^{-},o}$ are indeed different with probability greater than $(1 - \delta)^2$.
%
Being $p_{\MDPi{i},\sigma^{-},o} \ne p_{\SUT,\sigma^{-},o}$ implies that, with the assumption that $\COMPATIBLE(r,r') \iff \EQROW(r,r')$ for any $r$ and $r'$, $\COMPATIBLE(\sigma^{-} \cdot o, r)$ does not hold for any $r$ in the observation table.
Therefore, adding $\sigma^{-} \cdot o$ to $\TracePool$ identifies a new state in $\MDPi{i+1}$ with probability at least $(1 - \delta)^2$.

From the above discussion, there is an infinite sequence of MDPs $\MDPi{i_1}, \MDPi{i_2}, \dots$ with probability $1$ such that $\MDPi{i_{j+1}}$ has strictly more states than $\MDPi{i_{j}}$.
However, this contradicts \cref{lem:compatibleIffEqrow} and \cref{lem:minimality}; due to these lemmas, after sufficiently many iterations of (A---D), all $\MDPi{i}$ should have less number of states than $\SUT$.
\qed{}

\section{Experimental evaluation}\label{sec:experiments}

We have developed a prototype tool implementing ProbBBC in Python\footnote{The artifact of the experiment is available on \url{https://doi.org/10.5281/zenodo.7997524}.}.
We used AALpy~\cite{DBLP:conf/atva/MuskardinAPPT21} for active MDP learning and PRISM~\cite{DBLP:conf/cav/KwiatkowskaNP11} for quantitative probabilistic model checking.

We conducted experiments to answer the following research questions.

\begin{description}
 \item[RQ1.] Does ProbBBC produce a strategy close to the optimal one?
 \item[RQ2.] Does ProbBBC produce a better strategy than the existing method~\cite{DBLP:journals/fmsd/AichernigT19}?
 \item[RQ3.] Is ProbBBC robust to the observability of the output?
 \item[RQ4.] Can ProbBBC estimate a good strategy with a small sample size?
 \item[RQ5.] Is ProbBBC sensitive to the parameters?
 \item[RQ6.] Does the strategy-guided comparison in \cref{subsec:strategy_guided_comparison} improve the performance of ProbBBC?\@
 \item[RQ7.] Is ProbBBC scalable with respect to the system's complexity?
\end{description}


\subsection{Benchmarks}

\newcommand{\PrTen}{\mathit{BAR3}}
\begin{table}[t]
 \centering
  \ShortVersion{\scriptsize}\LongVersion{\small}
 \caption{Summary of the benchmarks: the number $|Q|$ of states, the size $|\INPUT|$ of the inputs, the size $|\OUTPUT|$ of the outputs, and the tested LTL formulas.}%
 \label{table:benchmarks}
 \begin{tabular}{l c c c c}
  \toprule
  & $|Q|$ & $|\INPUT|$ & $|\OUTPUT|$ & LTL formulas \\ \midrule
    \SlotMachine{}                         & $471$   & 4 & $31$ & $\Evt_{[0,n)} \PrTen$, with $n \in \{5, 8, 11, 14, 17\}$          \\
    \SlotMachineWithSuppressedOutputs{} & $471$      & 4 & $12$ & $\Evt_{[0,n)} \PrTen$, with $n \in \{5, 8, 11, 14, 17\}$          \\
    \MQTT{}        & $62$   &   9  & $50$ & $\Evt_{[0,n)} \mathit{crash}$, with $n \in \{5, 8, 11, 14, 17\}$          \\
    \TCP{}         & $156$   &  12 & $12$ & $\Evt_{[0,n)} \mathit{crash}$, with $n \in \{5, 8, 11, 14, 17\}$          \\
    \FirstGridWorld{}     & $35$   &  4  & $7$  & $\Evt_{[0,10)} \mathit{goal}$             \\
    \SecondGridWorld{}    & $72$   &  4  & $7$  & $\Evt_{[0,13)} \mathit{goal}$              \\
    \SharedCoin{}         & $272$  &  2 & $47$ & $\Evt_{[0, n)} \mathit{finished}$, with $n \in \{14, 20\}$ \\
    \RandomGridWorld{}         & $\{16,64,100,144,196\}$  &  4 & $7$ & $(\neg \mathit{hole}) \Until{} \mathit{goal}$\\
  \bottomrule
 \end{tabular}
\end{table}

For the evaluation, we use eight benchmarks:
\SlotMachine{}, 
\SlotMachineWithSuppressedOutputs{}, 
\MQTT{}, 
\TCP{}, 
\FirstGridWorld{}, 
\SecondGridWorld{},
\SharedCoin{}, and
\RandomGridWorld{}.
\cref{table:benchmarks} summarizes them.
Each benchmark consists of an MDP and LTL formulas that are the same except for the timing parameter $n$.
\SlotMachine{} and \SlotMachineWithSuppressedOutputs{} are benchmarks with the same MDP except for observability.
\MQTT{}, \TCP{}, and \SharedCoin{} are benchmarks on communication protocols related to IoT applications.
\FirstGridWorld{} and \SecondGridWorld{} are benchmarks on\LongVersion{ controller synthesis of} a robot navigation system.
Among the eight benchmarks, \MQTT{} and \TCP{} are benchmarks aiming at testing, whereas others are on controller synthesis.
We use \RandomGridWorld{} to answer RQ7 and the others to answer the other RQs.



\subsubsection*{\SlotMachine{}}
The MDP in \SlotMachine{} is taken from~\cite{DBLP:journals/ml/MaoCJNLN16}, also used in~\cite{DBLP:journals/fac/TapplerA0EL21,DBLP:journals/fmsd/AichernigT19}.
The slot machine has three reels, which are initially ``blank''.
The player can spin each reel independently.
After a spin, each reel shows either an ``apple'' or a ``bar''.
The probability of having a ``bar'' decreases as the number of spins increases.
A player is given a maximum number $5$ of spins.
When the game is over, a prize is given depending on the reel configuration.
The player can also ``stop'' the game:
With probability 0.5, the player obtains two extra spins (up to $5$ in total);
With probability 0.5, the player finishes the game and receives the award.
Overall, there are four inputs: ``reel1'', ``reel2'', ``reel3'', and ``stop''.
The outputs show the current reel configuration (during the game) and the received prize (at the end of the game).
%
We use LTL formulas that are true if we obtain the prize $\PrTen$ for displaying ``bar'' at all three reels within specific numbers of total reels.


\subsubsection*{\SlotMachineWithSuppressedOutputs{}}
The LTL formulas in \SlotMachineWithSuppressedOutputs{} are the same as \SlotMachine{}.
The MDP in \SlotMachineWithSuppressedOutputs{} is also the same as \SlotMachine{} except for the output:
In \SlotMachineWithSuppressedOutputs{}, only the reels displaying ``bar'' are observable;
In \SlotMachine{}, the status of the reels is fully observable.
For example,
the MDP in \SlotMachineWithSuppressedOutputs{} 
has the same output for ``bar blank apple'' and ``bar apple blank'', which are distinguished in \SlotMachine{}.
We use \SlotMachineWithSuppressedOutputs{} primarily to answer RQ3.


\subsubsection*{\MQTT{}}
The MDP in \MQTT{} is taken from~\cite{DBLP:journals/fmsd/AichernigT19}, which models a broker in the MQTT protocol~\cite{MQTTv3} with stochastic failures.
Namely, the MDP models a broker crashing with probability $0.1$.
The MDP has nine inputs, \eg{} for message types, and 50 outputs for the internal state of the broker.
%
We use LTL formulas that are true if we have the output $\mathit{crash}$ representing the stochastic failure within specific numbers of messages.

\subsubsection*{\TCP{}}
The MDP in \TCP{} is also taken from~\cite{DBLP:journals/fmsd/AichernigT19}, which models a TCP server~\cite{TCP} with stochastic failures.
The probability of crashing is $0.05$.
We use LTL formulas that are true if we have the output $\mathit{crash}$ representing the stochastic failure within specific numbers of messages.

\subsubsection*{\FirstGridWorld{}}
The MDP in \FirstGridWorld{} is taken from~\cite{DBLP:journals/fmsd/AichernigT19,DBLP:journals/fac/TapplerA0EL21}, which models a robot on a grid world with probabilistic error in its movement.
For example, when a robot tries to move to the east, it may also move to the northeast or the southeast with small probabilities depending on the condition of the current position.
The MDP has four inputs (``East'', ``South'', ``West'', and ``North'') for the direction of a move and seven outputs (``Concrete'', ``Grass'', ``Wall'', ``Mud'', ``Pavement'', ``Gravel'', and ``Sand'') for the condition of the robot's current position.
%
%
We use the LTL formula $\Evt_{[0,10)} \mathit{goal}$ that is true if the robot reaches the goal position within ten steps.

\subsubsection*{\SecondGridWorld{}}

\SecondGridWorld{} is a variant of \FirstGridWorld{}, also from~\cite{DBLP:journals/fmsd/AichernigT19,DBLP:journals/fac/TapplerA0EL21}.
The state space of \SecondGridWorld{} is larger, and thus, the estimation of the MDP is harder.
The inputs and the outputs are the same as \FirstGridWorld{}.
%
We use the LTL formula $\Evt_{[0,13)} \mathit{goal}$ that is true if the robot reaches the goal position within 13 steps.

\subsubsection*{\SharedCoin{}}

The MDP in \SharedCoin{} is taken from~\cite{DBLP:journals/fac/TapplerA0EL21}, which models a randomized consensus protocol~\cite{DBLP:journals/jal/AspnesH90} with two processes.
The MDP has two inputs for the process to execute and 47 outputs, for example, for the status of the coins.
We used LTL formulas that are true if the algorithm finishes within specific numbers of messages.

\subsubsection*{\RandomGridWorld{}}

\RandomGridWorld{} is our original benchmark inspired from \FirstGridWorld{} and \SecondGridWorld{}.
\todo{Update the number of benchmarks}
\RandomGridWorld{} consists of 25 randomly generated benchmarks: we fixed five sizes of the grid world and randomly generated five MDPs for each size.
\RandomGridWorld{} is primarily used to evaluate the scalability of ProbBBC.\@
We use the LTL formula $(\neg \mathit{hole}) \Until{} \mathit{goal}$ that is true if the robot reaches the goal position without entering the hole position.

\subsection{Experiments}

To answer RQ1--RQ4, we compared the performance of \ourTool{} with \baselineMethod{}~\cite{DBLP:journals/fmsd/AichernigT19}\LongVersion{\footnote{We used the implementation of \baselineMethod{} available on \url{https://github.com/mtappler/prob-black-reach}\LongVersion{ with additional codes for experiments}.}}.
\baselineMethod{} is another testing method for black-box MDPs based on MDP learning and probabilistic model checking. The main difference from \ourTool{} is in the learning algorithm and the sampling method: \baselineMethod{} passively learns an MDP using traces sampled by an $\varepsilon$-greedy algorithm. See \cref{section:related_work} for a detailed comparison.
To answer RQ5, we compared the performance of \ourTool{} with different parameters.
To answer RQ6, we compared the performance of \ourTool{} with a variant (we call \ourToolOnlyClassic{}) of \ourTool{} without the strategy-guided comparison, \ie{} a variant such that the validation phase immediately starts equivalence testing by uniform sampling.
As a ground truth, we also compute the optimal satisfaction probability with PRISM~\cite{DBLP:conf/cav/KwiatkowskaNP11}.




For each benchmark, we ran each method (\ie{} \ourTool{} with multiple parameters, \baselineMethod{}, and \ourToolOnlyClassic{}) for 20 times.
Since each method produces a strategy for each execution of probabilistic model checking,
we have a bunch of strategies for each execution.
For each strategy,
we estimated the satisfaction probability of the LTL formula $\Spec$ by the SUT $\SUT$ using SMC with sample size 5,000.

We conducted all the experiments on a Google Cloud Platform c2-standard-4 instance (4 vCPU, 16GB RAM) running Debian 11 bullseye. We used Python 3.10.9, AALpy v.1.3.0, and PRISM version 4.7.
The parameters in the validation phase are as follows:
The sample size $N$ in \cref{algorithm:SamplingPhaseOfSMC} is $5000$;
The threshold $\Delta$ for Student's t-testing (\cref{algorithm:SamplingPhaseOfSMC:ret} of \cref{algorithm:SamplingPhaseOfSMC}) is 0.025;
The bound $\Delta'$ in the comparison of probabilities (\cref{algorithm:searchwitness:SHT} of \cref{algorithm:searchwitness}) is 0.025.
For the experiments to answer RQ5, we used all the combinations of $N \in \{2500, 5000, 10000\}$ and $\Delta = \Delta' \in \{0.01, 0.025, 0.05\}$, \ie{} we used nine parameters in total.
For the parameters of \baselineMethod{}, we used the values in~\cite{DBLP:journals/fmsd/AichernigT19}.

\cref{table:experiment_result} summarizes the maximum satisfaction probabilities of each LTL formula estimated by \ourTool{} (with the strategy-guided comparison) and \baselineMethod{} as well as the true maximum probabilities computed by PRISM.\@
\cref{figure:estimation_graph} shows the number of steps on the SUT and the largest estimated maximum probabilities before the point for some benchmarks.
\cref{table:ProbBBC_parameter_robustness} shows the mean of the maximum satisfaction probabilities of each LTL formula estimated by \ourTool{} with various choices of parameters $N$ and $\Delta$.
\cref{table:ProbBBC_vs_only-classic} summarizes the maximum satisfaction probabilities of each LTL formula estimated by \ourTool{} without the strategy-guided comparison.
\cref{figure:ProbBBC_scalability} shows the number of the states and the mean execution time to estimate a probability that is larger than $97.5\%$ of the true probability for \RandomGridWorld{}.

\begin{table}[tbp]
 \centering
 \caption{Summary of the estimated maximum satisfaction probabilities of $\Spec$ after the number of steps displayed in the column ``\# of steps''. The column $n$ shows the parameter $n$ of the LTL formulas in \cref{table:benchmarks}. The column ``PRISM'' shows the true probabilities computed by PRISM.\@ The columns ``\ourTool{}'' and ``\baselineMethod{}'' show the estimated probabilities computed by \ourTool{} and \baselineMethod{}, respectively. The columns ``mean'', ``std'', and ``min'' show the mean, the standard deviation, and the minimum value of the results. We highlight the cells if the mean of the estimated probabilities is larger than $97.5\%$ of the true probability.}%
 \label{table:experiment_result}




 \ShortVersion{\scriptsize}\LongVersion{\footnotesize}
 \begin{tabular}{llrllllllr}
  \toprule
  & \multirow[c]{2}{*}{$n$} & PRISM & \multicolumn{3}{c}{\ourTool{}} & \multicolumn{3}{c}{\baselineMethod{}} & \# of steps \\
  &  &  & mean & std & min & mean & std & min &  \\
  \midrule
  \multirow[c]{5}{*}{\SlotMachine{}} & 5 & 8.23e-02 & \goodCell{} 8.21e-02 & \goodCell{} 3.22e-03 & \goodCell{} 7.64e-02 & \goodCell{} 8.23e-02 & \goodCell{} 1.26e-03 & \goodCell{} 7.91e-02 & 15,000,000 \\
  & 8 & 3.32e-01 & \goodCell{} 3.28e-01 & \goodCell{} 6.73e-03 & \goodCell{} 3.11e-01 & 3.21e-01 & 8.29e-03 & 2.97e-01 & 15,000,000 \\
  & 11 & 4.60e-01 & \goodCell{} 4.59e-01 & \goodCell{} 8.57e-03 & \goodCell{} 4.42e-01 & 4.40e-01 & 1.56e-02 & 4.01e-01 & 15,000,000 \\
  & 14 & 4.99e-01 & \goodCell{} 4.92e-01 & \goodCell{} 1.21e-02 & \goodCell{} 4.53e-01 & 4.82e-01 & 1.30e-02 & 4.41e-01 & 15,000,000 \\
  & 17 & 5.10e-01 & \goodCell{} 5.07e-01 & \goodCell{} 9.64e-03 & \goodCell{} 4.83e-01 & 4.80e-01 & 2.37e-02 & 4.32e-01 & 15,000,000 \\
  \midrule
  \multirow[c]{5}{*}{\SlotMachineWithSuppressedOutputs{}} & 5 & 8.23e-02 & \goodCell{} 8.14e-02 & \goodCell{} 3.46e-03 & \goodCell{} 7.58e-02 & \goodCell{} 8.25e-02 & \goodCell{} 1.62e-03 & \goodCell{} 7.94e-02 & 7,000,000 \\
  & 8 & 3.32e-01 & \goodCell{} 3.31e-01 & \goodCell{} 1.00e-02 & \goodCell{} 3.17e-01 & 3.22e-01 & 5.80e-03 & 3.11e-01 & 7,000,000 \\
  & 11 & 4.60e-01 & \goodCell{} 4.61e-01 & \goodCell{} 4.24e-03 & \goodCell{} 4.52e-01 & 4.38e-01 & 1.44e-02 & 3.94e-01 & 7,000,000 \\
  & 14 & 4.99e-01 & \goodCell{} 4.95e-01 & \goodCell{} 1.11e-02 & \goodCell{} 4.62e-01 & 4.27e-01 & 4.92e-02 & 3.11e-01 & 7,000,000 \\
  & 17 & 5.10e-01 & \goodCell{} 5.09e-01 & \goodCell{} 5.46e-03 & \goodCell{} 5.01e-01 & 4.48e-01 & 4.15e-02 & 3.44e-01 & 7,000,000 \\
  \midrule
  \multirow[c]{5}{*}{\MQTT{}} & 5 & 3.44e-01 & \goodCell{} 3.43e-01 & \goodCell{} 8.49e-03 & \goodCell{} 3.31e-01 & \goodCell{} 3.44e-01 & \goodCell{} 3.19e-03 & \goodCell{} 3.36e-01 & 3,000,000 \\
  & 8 & 5.22e-01 & \goodCell{} 5.22e-01 & \goodCell{} 6.46e-03 & \goodCell{} 5.11e-01 & \goodCell{} 5.20e-01 & \goodCell{} 3.03e-03 & \goodCell{} 5.16e-01 & 3,000,000 \\
  & 11 & 6.51e-01 & \goodCell{} 6.45e-01 & \goodCell{} 1.68e-02 & \goodCell{} 6.08e-01 & \goodCell{} 6.50e-01 & \goodCell{} 2.46e-03 & \goodCell{} 6.45e-01 & 3,000,000 \\
  & 14 & 7.46e-01 & \goodCell{} 7.45e-01 & \goodCell{} 6.90e-03 & \goodCell{} 7.31e-01 & \goodCell{} 7.47e-01 & \goodCell{} 3.17e-03 & \goodCell{} 7.41e-01 & 3,000,000 \\
  & 17 & 8.15e-01 & \goodCell{} 8.08e-01 & \goodCell{} 9.79e-03 & \goodCell{} 7.88e-01 & \goodCell{} 8.15e-01 & \goodCell{} 2.20e-03 & \goodCell{} 8.11e-01 & 3,000,000 \\
  \midrule
  \multirow[c]{5}{*}{\TCP{}} & 5 & 1.90e-01 & \goodCell{} 1.90e-01 & \goodCell{} 5.38e-03 & \goodCell{} 1.78e-01 & \goodCell{} 1.90e-01 & \goodCell{} 2.13e-03 & \goodCell{} 1.86e-01 & 1,200,000 \\
  & 8 & 4.10e-01 & \goodCell{} 4.10e-01 & \goodCell{} 6.63e-03 & \goodCell{} 4.00e-01 & \goodCell{} 4.10e-01 & \goodCell{} 2.73e-03 & \goodCell{} 4.05e-01 & 1,200,000 \\
  & 11 & 5.70e-01 & \goodCell{} 5.69e-01 & \goodCell{} 5.51e-03 & \goodCell{} 5.59e-01 & \goodCell{} 5.69e-01 & \goodCell{} 2.96e-03 & \goodCell{} 5.63e-01 & 1,200,000 \\
  & 14 & 6.86e-01 & \goodCell{} 6.88e-01 & \goodCell{} 6.71e-03 & \goodCell{} 6.75e-01 & \goodCell{} 6.84e-01 & \goodCell{} 9.74e-03 & \goodCell{} 6.45e-01 & 1,200,000 \\
  & 17 & 7.71e-01 & \goodCell{} 7.68e-01 & \goodCell{} 1.22e-02 & \goodCell{} 7.40e-01 & \goodCell{} 7.71e-01 & \goodCell{} 2.52e-03 & \goodCell{} 7.66e-01 & 1,200,000 \\
  \midrule
  \FirstGridWorld{} &  & 6.18e-01 & \goodCell{} 6.17e-01 & \goodCell{} 9.57e-03 & \goodCell{} 5.98e-01 & 5.69e-01 & 1.35e-01 & 0.00e+00 & 4,000,000 \\
  \midrule
  \SecondGridWorld{} &  & 6.71e-01 & \goodCell{} 6.72e-01 & \goodCell{} 6.88e-03 & \goodCell{} 6.56e-01 & 6.83e-02 & 1.70e-01 & 0.00e+00 & 1,500,000 \\
  \midrule
  \multirow[c]{2}{*}{\SharedCoin{}} & 14 & 1.25e-01 & \goodCell{} 1.25e-01 & \goodCell{} 5.18e-03 & \goodCell{} 1.14e-01 & 1.09e-01 & 2.71e-02 & 6.15e-02 & 4,000,000 \\
  & 20 & 2.50e-01 & \goodCell{} 2.51e-01 & \goodCell{} 5.16e-03 & \goodCell{} 2.41e-01 & 2.18e-01 & 2.74e-02 & 1.55e-01 & 4,000,000 \\
  \bottomrule
 \end{tabular}
\end{table}

\ifsupresstikzfig
\else
\begin{figure}[tbp]
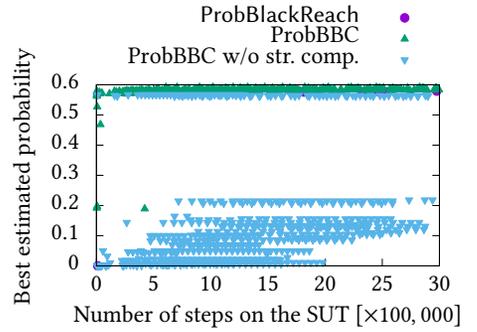
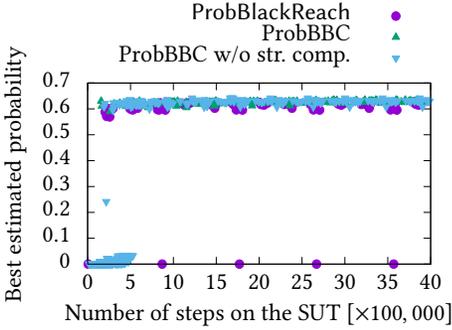
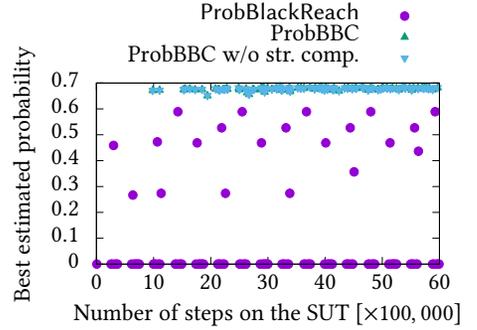
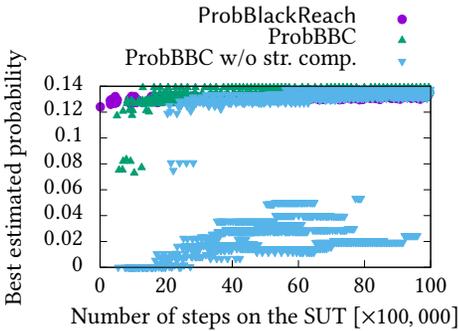
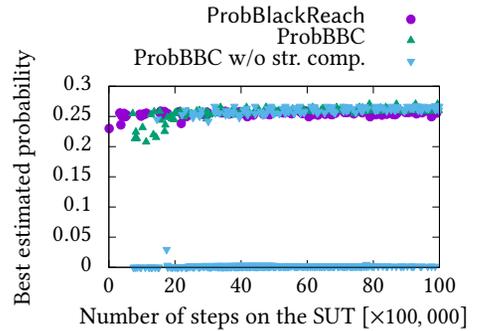

 \begin{subfigure}{.45\textwidth}
  \centering
  \scalebox{0.5}{\input{./figures/slot17.tikz}}
  \caption{\SlotMachine{} with $\Spec = \Evt_{[0, 17)}\PrTen$.}
  \label{figure:estimation_graph:slot17}
 \end{subfigure}
 \hfill
 \begin{subfigure}{.45\textwidth}
  \centering
  \scalebox{0.5}{\input{./figures/slot_limited17.tikz}}
  \caption{\SlotMachineWithSuppressedOutputs{} with $\Spec = \Evt_{[0, 17)}\PrTen$.}
  \label{figure:estimation_graph:slot_limited17}
 \end{subfigure}
 \hfill
 \begin{subfigure}{.45\textwidth}
  \centering
  \scalebox{0.5}{\input{./figures/mqtt14.tikz}}
  \caption{\MQTT{} with $\Spec = \Evt_{[0, 14)}\mathit{crash}$.}
  \label{figure:estimation_graph:mqtt14}
 \end{subfigure}
 \hfill
 \begin{subfigure}{.45\textwidth}
  \centering
  \scalebox{0.5}{\input{./figures/tcp11.tikz}}
  \caption{\TCP{} with $\Spec = \Evt_{[0, 11)}\mathit{crash}$.}
  \label{figure:estimation_graph:tcp11}
 \end{subfigure}
 \hfill
 \begin{subfigure}{.45\textwidth}
  \centering
  \scalebox{0.5}{\input{./figures/first.tikz}}
  \caption{\FirstGridWorld{} with $\Spec = \Evt_{[0, 10)}\mathit{goal}$.}
  \label{figure:estimation_graph:first}
 \end{subfigure}
 \hfill
 \begin{subfigure}{.45\textwidth}
  \centering
  \scalebox{0.5}{\input{./figures/second.tikz}}
  \caption{\SecondGridWorld{} with $\Spec = \Evt_{[0, 13)}\mathit{goal}$.}
  \label{figure:estimation_graph:second}
 \end{subfigure}
 \hfill
 \begin{subfigure}{.45\textwidth}
  \centering
  \scalebox{0.5}{\input{./figures/sc14.tikz}}
  \caption{\SharedCoin{} with $\Spec = \Evt_{[0, 14)}\mathit{finished}$.}
  \label{figure:estimation_graph:sc14}
 \end{subfigure}
 \hfill
 \begin{subfigure}{.45\textwidth}
  \centering
  \scalebox{0.5}{\input{./figures/sc20.tikz}}
  \caption{\SharedCoin{} with $\Spec = \Evt_{[0, 20)}\mathit{finished}$.}
  \label{figure:estimation_graph:sc20}
 \end{subfigure}
 \caption{Estimated probability after each execution step on the SUT.}%
 \label{figure:estimation_graph}
\end{figure}
\fi

\subsection{RQ1: Quality of the estimated probabilities}\label{section:RQ1}

In the columns ``PRISM'' and ``\ourTool{}/mean'' of \cref{table:experiment_result}, we observe that for any LTL formula in our benchmarks, the satisfaction probability estimated by \ourTool{} is close to the true probabilities on average.
We note that the estimated probabilities can be slightly larger than the true probabilities due to the statistical error in the estimation with SMC.\@

Moreover, in the columns ``PRISM'' and ``\ourTool{}/min'', we observe that for some benchmarks (\eg{} \FirstGridWorld{}, \SecondGridWorld{}, and \SharedCoin{}), 
the satisfaction probability estimated by \ourTool{} is close to the true probabilities even in the worst case.
This is likely because of the relatively small state space ($|Q| < 300$) and the small input size ($|\INPUT| \leq 4$) of the MDPs in these benchmarks,
which make the maximum size of the observation table relatively small and the \LstarMDP{} algorithm identifies the target MDP with a mild number of queries.

These results suggest that the maximum satisfaction probabilities estimated by \ourTool{} are usually close to the optimal value.
Overall, we answer RQ1 as follows.

\rqanswer{RQ1}{\ourTool{} usually estimates the maximum satisfaction probabilities close to the true one. For some benchmarks, the estimated probabilities are very close to the optimal value even in the worst case.}

\subsection{RQ2: Quality of the estimated probabilities compared with \baselineMethod{}}\label{section:experiment:comparison}

In the columns ``PRISM'' and ``\baselineMethod{}/mean'' of \cref{table:experiment_result},
we observe that \baselineMethod{} often fails to estimate the maximum satisfaction probability close to the true one.
For example, for 11 out of 24 LTL formulas, the maximum satisfaction probability of the formula estimated by \baselineMethod{}
is less than $97.5 \%$ of the true value (\ie{} not highlighted in \cref{table:experiment_result}).
This is likely because \baselineMethod{} often gets stuck on a local optimum due to the greedy nature of its trace sampling.
Such a tendency is indeed observed
in \cref{figure:estimation_graph} for some benchmarks.
In contrast, as we observe in \cref{section:RQ1}, for all the LTL formulas, the probability estimated by \ourTool{} is close to the true one.
This is because the MDP learned by the \LstarMDP{} algorithm eventually converges to the SUT, as shown in~\cite{DBLP:journals/fac/TapplerA0EL21}.
Such convergence is also observed in \cref{figure:estimation_graph}.

In the column ``\baselineMethod{}/std'' of \cref{table:experiment_result},
we observe that for \FirstGridWorld{} and \SecondGridWorld{}, the standard deviation of the probabilities estimated by \baselineMethod{} is large (\eg{} greater than $0.1$).
In the column ``\baselineMethod{}/min'' of \cref{table:experiment_result},
we also observe that for \FirstGridWorld{} and \SecondGridWorld{},
the satisfaction probability estimated by \baselineMethod{} can be 0.
This is likely because synthesizing a reasonable strategy for a grid world environment requires a relatively precise estimation of the environment,
which is not always done by \baselineMethod{}.
Such a tendency is more evident in \SecondGridWorld{}, where the environment is larger, and its precise estimation is more challenging.
In contrast, \ourTool{} always synthesizes a reasonable strategy up to our observation.
Overall, we answer RQ2 as follows.

\rqanswer{RQ2}{\baselineMethod{} often fails to estimate a reasonable maximum satisfaction probability. Moreover, it sometimes fails to learn a reasonable strategy for \FirstGridWorld{} and \SecondGridWorld{}.}

\subsection{RQ3: Robustness with respect to the observability}

In the rows ``\SlotMachine{}'' and ``\SlotMachineWithSuppressedOutputs{}'' of \cref{table:experiment_result},
we observe that the probabilities estimated by \SlotMachineWithSuppressedOutputs{} tend to be smaller with limited observations, especially when $n$ is large.
This is likely because \baselineMethod{} samples the training data with an $\varepsilon$-greedy algorithm, and their variety is limited.
When an MDP is learned, two candidate states are deemed identical if there is no clear counterexample, where outputs are used to compare the states.
Therefore, to correctly distinguish the states, there must be sufficient variety in the training data to contain an evidence.
The training data need more variety when many states have the same output, \eg{} more variety is required in \SlotMachineWithSuppressedOutputs{} than in \SlotMachine{}.

In contrast, the probabilities estimated by \ourTool{} do not have such a deviation.
This is likely because the \LstarMDP{} algorithm tries to cover various inputs to have sufficient information to estimate the transition probabilities.
Moreover, \ourTool{} tends to estimate \emph{better} probabilities with limited observability.
This is likely because of the following reason:
The MDP of \SlotMachine{} has states that have different outputs but behave the same;
For example, the states corresponding to ``bar blank apple'' and ``bar apple blank'' behave the same, but their outputs are different in \SlotMachine{};
In the MDP of \SlotMachineWithSuppressedOutputs{} these states have the same outputs, and they are deemed identical in MDP learning;
Since the target MDP under learning is virtually smaller, the \LstarMDP{} algorithm can easily converge to the optimal one;
Therefore, ProbBBC can perform better with limited observability.
We indeed observe that the resulting MDPs tend to be smaller for \SlotMachineWithSuppressedOutputs{} than \SlotMachine{} (about 50 vs. 160 states on average).
Overall, we answer RQ3 as follows.

\rqanswer{RQ3}{\ourTool{} estimates a near-optimal satisfaction probability even if the observability in the SUT is limited, whereas \baselineMethod{} often fails to estimate it in such a situation. Moreover, \ourTool{} often estimates a better probability when the observability is limited because the state space of the SUT can be virtually smaller.}

\subsection{RQ4: Efficiency of the estimation}

In \cref{figure:estimation_graph:slot17,figure:estimation_graph:slot_limited17,figure:estimation_graph:first,figure:estimation_graph:second},
we observe that the probabilities estimated by \baselineMethod{} often remain suboptimal, as discussed in \cref{section:experiment:comparison}, whereas those estimated by \ourTool{} usually converge (with some exceptions, \eg{} some executions for \MQTT{} with $n = 14$ shown in \cref{figure:estimation_graph:mqtt14}).
In contrast, in \cref{figure:estimation_graph:slot17,figure:estimation_graph:slot_limited17,figure:estimation_graph:sc14,figure:estimation_graph:sc20},
we observe that the initial rising up of the probabilities estimated by \ourTool{} is slower than that by \baselineMethod{}.
This is also likely because of the use of by an $\varepsilon$-greedy algorithm in the training data construction:
The $\varepsilon$-greedy algorithm samples the traces deemed to increase the satisfaction probability of the given LTL formula.
If such traces indeed increase the satisfaction probability, the estimated probability increases with a small number of traces.
However, this is at the cost of the quality of the final estimation, and the delay of the initial rising up is not very large except for \cref{figure:estimation_graph:slot17}.
Overall, we answer RQ4 as follows.

\rqanswer{RQ4}{\baselineMethod{} is often faster to start up than \ourTool{} at the cost of the quality of the estimated probability. Moreover, the delay of the initial rising up is usually not large.}

\subsection{RQ5: Sensitivity to the parameters}\label{section:experiment:parameter_sensitivity}

\begin{table}[tbp]
 \centering
 \caption{Mean of the estimated maximum satisfaction probabilities of $\Spec$ after the number of steps displayed in the column ``\# of steps'' of \cref{table:experiment_result}. The column $n$ shows the parameter $n$ of the LTL formulas in \cref{table:benchmarks}. We highlight the cells whose values are larger than $97.5\%$ of the true probability in \cref{table:experiment_result}.}%
 \label{table:ProbBBC_parameter_robustness}
 \scriptsize
 \begin{tabular}{llllllllll}
  \toprule
  & \multirow[c]{2}{*}{$n$} & \multicolumn{3}{c}{$N = 2500$} & \multicolumn{2}{c}{$N = 5000$} & \multicolumn{3}{c}{$N = 10000$} \\
  &  & $\Delta = 0.01$ & $\Delta = 0.025$ & $\Delta = 0.05$ & $\Delta = 0.01$ & $\Delta = 0.05$ & $\Delta = 0.01$ & $\Delta = 0.025$ & $\Delta = 0.05$ \\
  \midrule
  \multirow[c]{5}{*}{\SlotMachine{}} & 5 & \goodCell{} 8.19e-02 & \goodCell{} 8.15e-02 & \goodCell{} 8.21e-02 & \goodCell{} 8.23e-02 & \goodCell{} 8.19e-02 & \goodCell{} 8.29e-02 & \goodCell{} 8.22e-02 & \goodCell{} 8.11e-02 \\
  & 8 & \goodCell{} 3.33e-01 & \goodCell{} 3.34e-01 & \goodCell{} 3.32e-01 & \goodCell{} 3.31e-01 & \goodCell{} 3.31e-01 & \goodCell{} 3.26e-01 & \goodCell{} 3.30e-01 & \goodCell{} 3.31e-01 \\
  & 11 & \goodCell{} 4.58e-01 & \goodCell{} 4.58e-01 & \goodCell{} 4.58e-01 & \goodCell{} 4.57e-01 & \goodCell{} 4.59e-01 & \goodCell{} 4.57e-01 & \goodCell{} 4.55e-01 & \goodCell{} 4.59e-01 \\
  & 14 & \goodCell{} 4.94e-01 & \goodCell{} 4.94e-01 & \goodCell{} 4.97e-01 & \goodCell{} 4.92e-01 & \goodCell{} 4.95e-01 & \goodCell{} 4.94e-01 & \goodCell{} 4.91e-01 & \goodCell{} 4.95e-01 \\
  & 17 & \goodCell{} 5.05e-01 & \goodCell{} 5.01e-01 & \goodCell{} 5.04e-01 & \goodCell{} 5.08e-01 & \goodCell{} 5.01e-01 & \goodCell{} 5.05e-01 & \goodCell{} 5.02e-01 & \goodCell{} 5.06e-01 \\
  \midrule
  \multirow[c]{5}{*}{\SlotMachineWithSuppressedOutputs{}} & 5 & \goodCell{} 8.07e-02 & \goodCell{} 8.07e-02 & \goodCell{} 8.07e-02 & \goodCell{} 8.07e-02 & \goodCell{} 8.07e-02 & \goodCell{} 8.07e-02 & \goodCell{} 8.07e-02 & \goodCell{} 8.07e-02 \\
  & 8 & \goodCell{} 3.33e-01 & \goodCell{} 3.33e-01 & \goodCell{} 3.33e-01 & \goodCell{} 3.33e-01 & \goodCell{} 3.33e-01 & \goodCell{} 3.33e-01 & \goodCell{} 3.33e-01 & \goodCell{} 3.33e-01 \\
  & 11 & \goodCell{} 4.60e-01 & \goodCell{} 4.60e-01 & \goodCell{} 4.60e-01 & \goodCell{} 4.60e-01 & \goodCell{} 4.60e-01 & \goodCell{} 4.60e-01 & \goodCell{} 4.60e-01 & \goodCell{} 4.60e-01 \\
  & 14 & \goodCell{} 4.97e-01 & \goodCell{} 4.97e-01 & \goodCell{} 4.97e-01 & \goodCell{} 4.97e-01 & \goodCell{} 4.97e-01 & \goodCell{} 4.97e-01 & \goodCell{} 4.97e-01 & \goodCell{} 4.97e-01 \\
  & 17 & \goodCell{} 5.10e-01 & \goodCell{} 5.10e-01 & \goodCell{} 5.10e-01 & \goodCell{} 5.10e-01 & \goodCell{} 5.10e-01 & \goodCell{} 5.10e-01 & \goodCell{} 5.10e-01 & \goodCell{} 5.10e-01 \\
  \midrule
  \multirow[c]{5}{*}{\MQTT{}} & 5 & \goodCell{} 3.43e-01 & \goodCell{} 3.43e-01 & \goodCell{} 3.43e-01 & \goodCell{} 3.43e-01 & \goodCell{} 3.43e-01 & \goodCell{} 3.43e-01 & \goodCell{} 3.43e-01 & \goodCell{} 3.43e-01 \\
  & 8 & \goodCell{} 5.22e-01 & \goodCell{} 5.22e-01 & \goodCell{} 5.22e-01 & \goodCell{} 5.22e-01 & \goodCell{} 5.22e-01 & \goodCell{} 5.22e-01 & \goodCell{} 5.22e-01 & \goodCell{} 5.22e-01 \\
  & 11 & \goodCell{} 6.45e-01 & \goodCell{} 6.45e-01 & \goodCell{} 6.45e-01 & \goodCell{} 6.45e-01 & \goodCell{} 6.45e-01 & \goodCell{} 6.45e-01 & \goodCell{} 6.45e-01 & \goodCell{} 6.45e-01 \\
  & 14 & \goodCell{} 7.45e-01 & \goodCell{} 7.45e-01 & \goodCell{} 7.45e-01 & \goodCell{} 7.45e-01 & \goodCell{} 7.45e-01 & \goodCell{} 7.45e-01 & \goodCell{} 7.45e-01 & \goodCell{} 7.45e-01 \\
  & 17 & \goodCell{} 8.08e-01 & \goodCell{} 8.08e-01 & \goodCell{} 8.08e-01 & \goodCell{} 8.08e-01 & \goodCell{} 8.08e-01 & \goodCell{} 8.08e-01 & \goodCell{} 8.08e-01 & \goodCell{} 8.08e-01 \\
  \midrule
  \multirow[c]{5}{*}{\TCP{}} & 5 & \goodCell{} 1.90e-01 & \goodCell{} 1.91e-01 & \goodCell{} 1.90e-01 & \goodCell{} 1.90e-01 & \goodCell{} 1.90e-01 & \goodCell{} 1.88e-01 & \goodCell{} 1.90e-01 & \goodCell{} 1.91e-01 \\
  & 8 & \goodCell{} 4.11e-01 & \goodCell{} 4.08e-01 & \goodCell{} 4.05e-01 & \goodCell{} 4.09e-01 & \goodCell{} 4.09e-01 & \goodCell{} 4.09e-01 & \goodCell{} 4.09e-01 & \goodCell{} 4.07e-01 \\
  & 11 & \goodCell{} 5.61e-01 & \goodCell{} 5.62e-01 & \goodCell{} 5.69e-01 & \goodCell{} 5.69e-01 & \goodCell{} 5.58e-01 & \goodCell{} 5.68e-01 & \goodCell{} 5.65e-01 & \goodCell{} 5.67e-01 \\
  & 14 & \goodCell{} 6.76e-01 & \goodCell{} 6.74e-01 & \goodCell{} 6.84e-01 & \goodCell{} 6.80e-01 & \goodCell{} 6.76e-01 & \goodCell{} 6.81e-01 & \goodCell{} 6.83e-01 & \goodCell{} 6.79e-01 \\
  & 17 & \goodCell{} 7.64e-01 & \goodCell{} 7.60e-01 & \goodCell{} 7.68e-01 & \goodCell{} 7.68e-01 & \goodCell{} 7.67e-01 & \goodCell{} 7.65e-01 & \goodCell{} 7.66e-01 & \goodCell{} 7.61e-01 \\
  \midrule
  \FirstGridWorld{} &  & \goodCell{} 6.14e-01 & \goodCell{} 6.14e-01 & \goodCell{} 6.15e-01 & \goodCell{} 6.14e-01 & \goodCell{} 6.17e-01 & \goodCell{} 6.18e-01 & \goodCell{} 6.17e-01 & \goodCell{} 6.18e-01 \\
  \midrule
  \SecondGridWorld{} &  & \goodCell{} 6.73e-01 & \goodCell{} 6.71e-01 & \goodCell{} 6.68e-01 & \goodCell{} 6.70e-01 & \goodCell{} 6.70e-01 & \goodCell{} 6.69e-01 & \goodCell{} 6.70e-01 & \goodCell{} 6.69e-01 \\
  \midrule
  \multirow[c]{2}{*}{\SharedCoin{}} & 14 & \goodCell{} 1.24e-01 & \goodCell{} 1.24e-01 & \goodCell{} 1.24e-01 & \goodCell{} 1.25e-01 & \goodCell{} 1.24e-01 & \goodCell{} 1.26e-01 & \goodCell{} 1.24e-01 & \goodCell{} 1.24e-01 \\
  & 20 & \goodCell{} 2.50e-01 & \goodCell{} 2.50e-01 & \goodCell{} 2.46e-01 & \goodCell{} 2.52e-01 & \goodCell{} 2.49e-01 & \goodCell{} 2.53e-01 & \goodCell{} 2.49e-01 & \goodCell{} 2.50e-01 \\
  \bottomrule
 \end{tabular}
\end{table}

In \cref{table:ProbBBC_parameter_robustness},
we observe that for any combination of $N$ and $\Delta$, and for any LTL formula in our benchmarks, the satisfaction probability estimated by \ourTool{} is close to the true probabilities on average.
If $N$ is small or $\Delta$ is large, 
the probability of the strategy-guided comparison to return a false witness (\ie{} a trace $\trace$ whose occurrence probability is supposed to be different between $\SUT$ and $\MDP$ but not) is also large.
However, if the probability to return a false witness is not too large, the learned MDP converges in a reasonable number of steps.
Overall, we answer RQ5 as follows.

\rqanswer{RQ5}{The validation phase in \ourTool{} is not much parameter sensitive.}

\subsection{RQ6: Effect of the strategy-guided comparison}\label{section:experiment:ablation_study}

\begin{table}[tbp]
 \centering
 \caption{Summary of the maximum satisfaction probabilities of $\Spec$ estimated by \ourTool{} without the strategy-guided comparison after the number of steps displayed in the column ``\# of steps'' of \cref{table:experiment_result}. The column $n$ shows the parameter $n$ of the LTL formulas in \cref{table:benchmarks}. The columns ``mean'', ``std'', and ``min'' show the mean, the standard deviation, and the minimum value of the results. We highlight the cells if the mean of the estimated probabilities is larger than $97.5\%$ of the true probability in \cref{table:experiment_result}.}%
 \label{table:ProbBBC_vs_only-classic}
 \scriptsize
 \begin{tabular}{llrrr}
  \toprule
  & $n$ & mean & std & min \\
  \midrule
  \multirow[c]{5}{*}{\SlotMachine{}} & 5 & \goodCell{} 8.12e-02 & \goodCell{} 4.67e-03 & \goodCell{} 7.50e-02 \\
  & 8 & 2.43e-01 & 1.48e-01 & 0.00e+00 \\
  & 11 & 4.98e-02 & 1.43e-01 & 0.00e+00 \\
  & 14 & 4.85e-03 & 7.53e-03 & 0.00e+00 \\
  & 17 & 3.03e-02 & 1.12e-01 & 0.00e+00 \\
  \midrule
  \multirow[c]{5}{*}{\SlotMachineWithSuppressedOutputs{}} & 5 & \goodCell{} 8.22e-02 & \goodCell{} 4.03e-03 & \goodCell{} 7.42e-02 \\
  & 8 & \goodCell{} 3.34e-01 & \goodCell{} 8.15e-03 & \goodCell{} 3.21e-01 \\
  & 11 & 4.34e-01 & 1.02e-01 & 2.00e-04 \\
  & 14 & 4.24e-01 & 1.81e-01 & 8.00e-04 \\
  & 17 & 4.84e-01 & 1.13e-01 & 4.40e-03 \\
  \midrule
  \multirow[c]{2}{*}{\MQTT{}} & 5 & \goodCell{} 3.43e-01 & \goodCell{} 5.62e-03 & \goodCell{} 3.26e-01 \\
  & 8 & \goodCell{} 5.20e-01 & \goodCell{} 1.33e-02 & \goodCell{} 4.70e-01 \\
  \bottomrule
 \end{tabular}
 \hfill
 \begin{tabular}{llrrr}
  \toprule
  & $n$ & mean & std & min \\
  \midrule
  \multirow[c]{3}{*}{\MQTT{}} & 11 & \goodCell{} 6.45e-01 & \goodCell{} 1.42e-02 & \goodCell{} 6.16e-01 \\
  & 14 & \goodCell{} 7.43e-01 & \goodCell{} 9.48e-03 & \goodCell{} 7.16e-01 \\
  & 17 & \goodCell{} 8.05e-01 & \goodCell{} 1.52e-02 & \goodCell{} 7.82e-01 \\
  \midrule
  \multirow[c]{5}{*}{\TCP{}} & 5 & \goodCell{} 1.89e-01 & \goodCell{} 5.46e-03 & \goodCell{} 1.77e-01 \\
  & 8 & 1.66e-02 & 3.92e-02 & 0.00e+00 \\
  & 11 & 1.42e-02 & 2.61e-02 & 0.00e+00 \\
  & 14 & 1.27e-02 & 1.79e-02 & 0.00e+00 \\
  & 17 & 1.85e-02 & 2.68e-02 & 2.00e-04 \\
  \midrule
  \FirstGridWorld{} &  & \goodCell{} 6.16e-01 & \goodCell{} 1.14e-02 & \goodCell{} 5.79e-01 \\
  \midrule
  \SecondGridWorld{} &  & 1.00e-01 & 2.43e-01 & 0.00e+00 \\
  \midrule
  \multirow[c]{2}{*}{\SharedCoin{}} & 14 & 7.64e-02 & 5.88e-02 & 1.00e-03 \\
  & 20 & 1.13e-01 & 1.28e-01 & 0.00e+00 \\
  \bottomrule
 \end{tabular}
\end{table}

In the column ``mean'' of \cref{table:ProbBBC_vs_only-classic},
we observe that
\ourTool{} often fails to estimate the maximum satisfaction probability close to the true one 
if we do not use the strategy-guided comparison.
In \cref{figure:estimation_graph}, 
we observe that the initial rising up of the probabilities estimated by \ourToolOnlyClassic{} is often slower than that by \ourTool{}.
These tendencies are likely because, without the strategy-guided comparison, the validation phase in \ourTool{} is solely by uniform random sampling.
In contrast, if we use the strategy-guided comparison, the validation phase can focus on a part of the MDP relevant to maximize the satisfaction probability of the given LTL formula, which tends to accelerate the probability estimation.
We note that, unlike \baselineMethod{}, \ourTool{} remains correct even without the strategy-guided comparison thanks to the convergence of \LstarMDP{}.
Overall, we answer RQ6 as follows.

\rqanswer{RQ6}{The strategy-guided comparison in \cref{subsec:strategy_guided_comparison} usually allows to generate a controller achieving high satisfaction probability with fewer steps on the SUT.}

\subsection{RQ7: Scalability to the system complexity}

\begin{figure}[tbp] 
 \begin{subfigure}{.30\textwidth}
  \centering
  \scalebox{0.275}{\begin{tikzpicture}[gnuplot]
\tikzset{every node/.append style={font={\fontsize{25.0pt}{30.0pt}\selectfont}}}
\path (0.000,0.000) rectangle (12.500,8.750);
\gpcolor{color=gp lt color border}
\gpsetlinetype{gp lt border}
\gpsetdashtype{gp dt solid}
\gpsetlinewidth{1.00}
\draw[gp path] (3.300,2.464)--(3.480,2.464);
\node[gp node right] at (2.840,2.464) {$0$};
\draw[gp path] (3.300,3.383)--(3.480,3.383);
\node[gp node right] at (2.840,3.383) {$40$};
\draw[gp path] (3.300,4.302)--(3.480,4.302);
\node[gp node right] at (2.840,4.302) {$80$};
\draw[gp path] (3.300,5.222)--(3.480,5.222);
\node[gp node right] at (2.840,5.222) {$120$};
\draw[gp path] (3.300,6.141)--(3.480,6.141);
\node[gp node right] at (2.840,6.141) {$160$};
\draw[gp path] (3.300,7.060)--(3.480,7.060);
\node[gp node right] at (2.840,7.060) {$200$};
\draw[gp path] (3.300,7.979)--(3.480,7.979);
\node[gp node right] at (2.840,7.979) {$240$};
\draw[gp path] (3.300,2.464)--(3.300,2.644);
\node[gp node center] at (3.300,1.694) {$0$};
\draw[gp path] (4.864,2.464)--(4.864,2.644);
\node[gp node center] at (4.864,1.694) {$40$};
\draw[gp path] (6.428,2.464)--(6.428,2.644);
\node[gp node center] at (6.428,1.694) {$80$};
\draw[gp path] (7.991,2.464)--(7.991,2.644);
\node[gp node center] at (7.991,1.694) {$120$};
\draw[gp path] (9.555,2.464)--(9.555,2.644);
\node[gp node center] at (9.555,1.694) {$160$};
\draw[gp path] (11.119,2.464)--(11.119,2.644);
\node[gp node center] at (11.119,1.694) {$200$};
\draw[gp path] (3.300,7.979)--(3.300,2.464)--(11.119,2.464)--(11.119,7.979)--cycle;
\node[gp node center,rotate=-270] at (0.730,5.221) {Execution time [min.]};
\node[gp node center] at (7.209,0.539) {Number of states of the SUT $\SUT$};
\gpcolor{rgb color={0.580,0.000,0.827}}
\gpsetlinewidth{7.00}
\draw[gp path] (3.926,2.465)--(5.802,2.512)--(7.210,2.849)--(8.930,4.952)--(10.963,7.313);
\gpsetpointsize{12.00}
\gppoint{gp mark 7}{(3.926,2.465)}
\gppoint{gp mark 7}{(5.802,2.512)}
\gppoint{gp mark 7}{(7.210,2.849)}
\gppoint{gp mark 7}{(8.930,4.952)}
\gppoint{gp mark 7}{(10.963,7.313)}
\gpcolor{rgb color={0.000,0.620,0.451}}
\gpsetdashtype{gp dt 3}
\gpsetlinewidth{5.00}
\draw[gp path] (3.926,2.465)--(3.997,2.465)--(4.068,2.466)--(4.139,2.466)--(4.210,2.467)%
  --(4.281,2.468)--(4.352,2.469)--(4.423,2.470)--(4.494,2.472)--(4.565,2.474)--(4.636,2.476)%
  --(4.707,2.478)--(4.779,2.480)--(4.850,2.483)--(4.921,2.486)--(4.992,2.490)--(5.063,2.494)%
  --(5.134,2.498)--(5.205,2.503)--(5.276,2.508)--(5.347,2.513)--(5.418,2.519)--(5.489,2.526)%
  --(5.560,2.533)--(5.631,2.540)--(5.703,2.549)--(5.774,2.557)--(5.845,2.567)--(5.916,2.577)%
  --(5.987,2.587)--(6.058,2.599)--(6.129,2.611)--(6.200,2.624)--(6.271,2.637)--(6.342,2.652)%
  --(6.413,2.667)--(6.484,2.683)--(6.556,2.700)--(6.627,2.718)--(6.698,2.737)--(6.769,2.757)%
  --(6.840,2.778)--(6.911,2.799)--(6.982,2.822)--(7.053,2.846)--(7.124,2.871)--(7.195,2.897)%
  --(7.266,2.925)--(7.337,2.953)--(7.409,2.983)--(7.480,3.014)--(7.551,3.046)--(7.622,3.079)%
  --(7.693,3.114)--(7.764,3.151)--(7.835,3.188)--(7.906,3.227)--(7.977,3.268)--(8.048,3.310)%
  --(8.119,3.354)--(8.190,3.399)--(8.262,3.445)--(8.333,3.494)--(8.404,3.544)--(8.475,3.595)%
  --(8.546,3.649)--(8.617,3.704)--(8.688,3.761)--(8.759,3.820)--(8.830,3.880)--(8.901,3.943)%
  --(8.972,4.007)--(9.043,4.073)--(9.114,4.142)--(9.186,4.212)--(9.257,4.284)--(9.328,4.359)%
  --(9.399,4.436)--(9.470,4.514)--(9.541,4.595)--(9.612,4.678)--(9.683,4.764)--(9.754,4.851)%
  --(9.825,4.942)--(9.896,5.034)--(9.967,5.129)--(10.039,5.226)--(10.110,5.326)--(10.181,5.428)%
  --(10.252,5.533)--(10.323,5.640)--(10.394,5.750)--(10.465,5.863)--(10.536,5.978)--(10.607,6.096)%
  --(10.678,6.217)--(10.749,6.340)--(10.820,6.467)--(10.892,6.596)--(10.963,6.728);
\gpcolor{color=gp lt color border}
\gpsetdashtype{gp dt solid}
\gpsetlinewidth{1.00}
\draw[gp path] (3.300,7.979)--(3.300,2.464)--(11.119,2.464)--(11.119,7.979)--cycle;
\gpdefrectangularnode{gp plot 1}{\pgfpoint{3.300cm}{2.464cm}}{\pgfpoint{11.119cm}{7.979cm}}
\end{tikzpicture}
  \caption{linear-linear plot}%
  \label{figure:ProbBBC_scalability:linear}
 \end{subfigure}
 \hfill
 \begin{subfigure}{.30\textwidth}
  \centering
  \scalebox{0.275}{\begin{tikzpicture}[gnuplot]
\tikzset{every node/.append style={font={\fontsize{25.0pt}{30.0pt}\selectfont}}}
\path (0.000,0.000) rectangle (12.500,8.750);
\gpcolor{color=gp lt color border}
\gpsetlinetype{gp lt border}
\gpsetdashtype{gp dt solid}
\gpsetlinewidth{1.00}
\draw[gp path] (3.760,2.464)--(3.940,2.464);
\draw[gp path] (11.119,2.464)--(10.939,2.464);
\node[gp node right] at (3.300,2.464) {$0.01$};
\draw[gp path] (3.760,2.796)--(3.850,2.796);
\draw[gp path] (11.119,2.796)--(11.029,2.796);
\draw[gp path] (3.760,2.990)--(3.850,2.990);
\draw[gp path] (11.119,2.990)--(11.029,2.990);
\draw[gp path] (3.760,3.128)--(3.850,3.128);
\draw[gp path] (11.119,3.128)--(11.029,3.128);
\draw[gp path] (3.760,3.235)--(3.850,3.235);
\draw[gp path] (11.119,3.235)--(11.029,3.235);
\draw[gp path] (3.760,3.322)--(3.850,3.322);
\draw[gp path] (11.119,3.322)--(11.029,3.322);
\draw[gp path] (3.760,3.396)--(3.850,3.396);
\draw[gp path] (11.119,3.396)--(11.029,3.396);
\draw[gp path] (3.760,3.460)--(3.850,3.460);
\draw[gp path] (11.119,3.460)--(11.029,3.460);
\draw[gp path] (3.760,3.517)--(3.850,3.517);
\draw[gp path] (11.119,3.517)--(11.029,3.517);
\draw[gp path] (3.760,3.567)--(3.940,3.567);
\draw[gp path] (11.119,3.567)--(10.939,3.567);
\node[gp node right] at (3.300,3.567) {$0.1$};
\draw[gp path] (3.760,3.899)--(3.850,3.899);
\draw[gp path] (11.119,3.899)--(11.029,3.899);
\draw[gp path] (3.760,4.093)--(3.850,4.093);
\draw[gp path] (11.119,4.093)--(11.029,4.093);
\draw[gp path] (3.760,4.231)--(3.850,4.231);
\draw[gp path] (11.119,4.231)--(11.029,4.231);
\draw[gp path] (3.760,4.338)--(3.850,4.338);
\draw[gp path] (11.119,4.338)--(11.029,4.338);
\draw[gp path] (3.760,4.425)--(3.850,4.425);
\draw[gp path] (11.119,4.425)--(11.029,4.425);
\draw[gp path] (3.760,4.499)--(3.850,4.499);
\draw[gp path] (11.119,4.499)--(11.029,4.499);
\draw[gp path] (3.760,4.563)--(3.850,4.563);
\draw[gp path] (11.119,4.563)--(11.029,4.563);
\draw[gp path] (3.760,4.620)--(3.850,4.620);
\draw[gp path] (11.119,4.620)--(11.029,4.620);
\draw[gp path] (3.760,4.670)--(3.940,4.670);
\draw[gp path] (11.119,4.670)--(10.939,4.670);
\node[gp node right] at (3.300,4.670) {$1$};
\draw[gp path] (3.760,5.002)--(3.850,5.002);
\draw[gp path] (11.119,5.002)--(11.029,5.002);
\draw[gp path] (3.760,5.196)--(3.850,5.196);
\draw[gp path] (11.119,5.196)--(11.029,5.196);
\draw[gp path] (3.760,5.334)--(3.850,5.334);
\draw[gp path] (11.119,5.334)--(11.029,5.334);
\draw[gp path] (3.760,5.441)--(3.850,5.441);
\draw[gp path] (11.119,5.441)--(11.029,5.441);
\draw[gp path] (3.760,5.528)--(3.850,5.528);
\draw[gp path] (11.119,5.528)--(11.029,5.528);
\draw[gp path] (3.760,5.602)--(3.850,5.602);
\draw[gp path] (11.119,5.602)--(11.029,5.602);
\draw[gp path] (3.760,5.666)--(3.850,5.666);
\draw[gp path] (11.119,5.666)--(11.029,5.666);
\draw[gp path] (3.760,5.723)--(3.850,5.723);
\draw[gp path] (11.119,5.723)--(11.029,5.723);
\draw[gp path] (3.760,5.773)--(3.940,5.773);
\draw[gp path] (11.119,5.773)--(10.939,5.773);
\node[gp node right] at (3.300,5.773) {$10$};
\draw[gp path] (3.760,6.105)--(3.850,6.105);
\draw[gp path] (11.119,6.105)--(11.029,6.105);
\draw[gp path] (3.760,6.299)--(3.850,6.299);
\draw[gp path] (11.119,6.299)--(11.029,6.299);
\draw[gp path] (3.760,6.437)--(3.850,6.437);
\draw[gp path] (11.119,6.437)--(11.029,6.437);
\draw[gp path] (3.760,6.544)--(3.850,6.544);
\draw[gp path] (11.119,6.544)--(11.029,6.544);
\draw[gp path] (3.760,6.631)--(3.850,6.631);
\draw[gp path] (11.119,6.631)--(11.029,6.631);
\draw[gp path] (3.760,6.705)--(3.850,6.705);
\draw[gp path] (11.119,6.705)--(11.029,6.705);
\draw[gp path] (3.760,6.769)--(3.850,6.769);
\draw[gp path] (11.119,6.769)--(11.029,6.769);
\draw[gp path] (3.760,6.826)--(3.850,6.826);
\draw[gp path] (11.119,6.826)--(11.029,6.826);
\draw[gp path] (3.760,6.876)--(3.940,6.876);
\draw[gp path] (11.119,6.876)--(10.939,6.876);
\node[gp node right] at (3.300,6.876) {$100$};
\draw[gp path] (3.760,7.208)--(3.850,7.208);
\draw[gp path] (11.119,7.208)--(11.029,7.208);
\draw[gp path] (3.760,7.402)--(3.850,7.402);
\draw[gp path] (11.119,7.402)--(11.029,7.402);
\draw[gp path] (3.760,7.540)--(3.850,7.540);
\draw[gp path] (11.119,7.540)--(11.029,7.540);
\draw[gp path] (3.760,7.647)--(3.850,7.647);
\draw[gp path] (11.119,7.647)--(11.029,7.647);
\draw[gp path] (3.760,7.734)--(3.850,7.734);
\draw[gp path] (11.119,7.734)--(11.029,7.734);
\draw[gp path] (3.760,7.808)--(3.850,7.808);
\draw[gp path] (11.119,7.808)--(11.029,7.808);
\draw[gp path] (3.760,7.872)--(3.850,7.872);
\draw[gp path] (11.119,7.872)--(11.029,7.872);
\draw[gp path] (3.760,7.929)--(3.850,7.929);
\draw[gp path] (11.119,7.929)--(11.029,7.929);
\draw[gp path] (3.760,7.979)--(3.940,7.979);
\draw[gp path] (11.119,7.979)--(10.939,7.979);
\node[gp node right] at (3.300,7.979) {$1000$};
\draw[gp path] (3.760,2.464)--(3.760,2.644);
\node[gp node center] at (3.760,1.694) {$0$};
\draw[gp path] (5.232,2.464)--(5.232,2.644);
\node[gp node center] at (5.232,1.694) {$40$};
\draw[gp path] (6.704,2.464)--(6.704,2.644);
\node[gp node center] at (6.704,1.694) {$80$};
\draw[gp path] (8.175,2.464)--(8.175,2.644);
\node[gp node center] at (8.175,1.694) {$120$};
\draw[gp path] (9.647,2.464)--(9.647,2.644);
\node[gp node center] at (9.647,1.694) {$160$};
\draw[gp path] (11.119,2.464)--(11.119,2.644);
\node[gp node center] at (11.119,1.694) {$200$};
\draw[gp path] (3.760,7.979)--(3.760,2.464)--(11.119,2.464)--(11.119,7.979)--cycle;
\node[gp node center,rotate=-270] at (0.730,5.221) {Execution time [min.]};
\node[gp node center] at (7.439,0.539) {Number of states of the SUT $\SUT$};
\gpcolor{rgb color={0.580,0.000,0.827}}
\gpsetlinewidth{5.00}
\draw[gp path] (4.349,3.272)--(6.115,5.019)--(7.440,6.020)--(9.058,6.914)--(10.972,7.234);
\gpsetpointsize{12.00}
\gppoint{gp mark 7}{(4.349,3.272)}
\gppoint{gp mark 7}{(6.115,5.019)}
\gppoint{gp mark 7}{(7.440,6.020)}
\gppoint{gp mark 7}{(9.058,6.914)}
\gppoint{gp mark 7}{(10.972,7.234)}
\gpcolor{rgb color={0.000,0.620,0.451}}
\gpsetdashtype{gp dt 3}
\draw[gp path] (4.349,3.115)--(4.416,3.290)--(4.483,3.447)--(4.549,3.590)--(4.616,3.722)%
  --(4.683,3.844)--(4.750,3.957)--(4.817,4.063)--(4.884,4.162)--(4.951,4.256)--(5.018,4.345)%
  --(5.085,4.428)--(5.152,4.508)--(5.218,4.584)--(5.285,4.657)--(5.352,4.726)--(5.419,4.793)%
  --(5.486,4.857)--(5.553,4.919)--(5.620,4.978)--(5.687,5.035)--(5.754,5.090)--(5.821,5.144)%
  --(5.887,5.196)--(5.954,5.246)--(6.021,5.294)--(6.088,5.342)--(6.155,5.387)--(6.222,5.432)%
  --(6.289,5.475)--(6.356,5.518)--(6.423,5.559)--(6.490,5.599)--(6.556,5.638)--(6.623,5.677)%
  --(6.690,5.714)--(6.757,5.750)--(6.824,5.786)--(6.891,5.821)--(6.958,5.855)--(7.025,5.889)%
  --(7.092,5.922)--(7.159,5.954)--(7.225,5.986)--(7.292,6.017)--(7.359,6.047)--(7.426,6.077)%
  --(7.493,6.106)--(7.560,6.135)--(7.627,6.163)--(7.694,6.191)--(7.761,6.218)--(7.828,6.245)%
  --(7.894,6.271)--(7.961,6.297)--(8.028,6.323)--(8.095,6.348)--(8.162,6.373)--(8.229,6.397)%
  --(8.296,6.421)--(8.363,6.445)--(8.430,6.468)--(8.497,6.491)--(8.563,6.514)--(8.630,6.537)%
  --(8.697,6.559)--(8.764,6.580)--(8.831,6.602)--(8.898,6.623)--(8.965,6.644)--(9.032,6.665)%
  --(9.099,6.685)--(9.166,6.705)--(9.232,6.725)--(9.299,6.745)--(9.366,6.764)--(9.433,6.784)%
  --(9.500,6.803)--(9.567,6.821)--(9.634,6.840)--(9.701,6.858)--(9.768,6.876)--(9.835,6.894)%
  --(9.901,6.912)--(9.968,6.930)--(10.035,6.947)--(10.102,6.964)--(10.169,6.981)--(10.236,6.998)%
  --(10.303,7.015)--(10.370,7.031)--(10.437,7.047)--(10.504,7.063)--(10.570,7.079)--(10.637,7.095)%
  --(10.704,7.111)--(10.771,7.126)--(10.838,7.142)--(10.905,7.157)--(10.972,7.172);
\gpcolor{color=gp lt color border}
\gpsetdashtype{gp dt solid}
\gpsetlinewidth{1.00}
\draw[gp path] (3.760,7.979)--(3.760,2.464)--(11.119,2.464)--(11.119,7.979)--cycle;
\gpdefrectangularnode{gp plot 1}{\pgfpoint{3.760cm}{2.464cm}}{\pgfpoint{11.119cm}{7.979cm}}
\end{tikzpicture}
  \caption{linear-log plot}%
  \label{figure:ProbBBC_scalability:linear_log}
 \end{subfigure}
 \hfill
 \begin{subfigure}{.30\textwidth}
  \centering
  \scalebox{0.275}{\begin{tikzpicture}[gnuplot]
\tikzset{every node/.append style={font={\fontsize{25.0pt}{30.0pt}\selectfont}}}
\path (0.000,0.000) rectangle (12.500,8.750);
\gpcolor{color=gp lt color border}
\gpsetlinetype{gp lt border}
\gpsetdashtype{gp dt solid}
\gpsetlinewidth{1.00}
\draw[gp path] (3.760,2.464)--(3.940,2.464);
\draw[gp path] (11.119,2.464)--(10.939,2.464);
\node[gp node right] at (3.300,2.464) {$0.01$};
\draw[gp path] (3.760,2.796)--(3.850,2.796);
\draw[gp path] (11.119,2.796)--(11.029,2.796);
\draw[gp path] (3.760,2.990)--(3.850,2.990);
\draw[gp path] (11.119,2.990)--(11.029,2.990);
\draw[gp path] (3.760,3.128)--(3.850,3.128);
\draw[gp path] (11.119,3.128)--(11.029,3.128);
\draw[gp path] (3.760,3.235)--(3.850,3.235);
\draw[gp path] (11.119,3.235)--(11.029,3.235);
\draw[gp path] (3.760,3.322)--(3.850,3.322);
\draw[gp path] (11.119,3.322)--(11.029,3.322);
\draw[gp path] (3.760,3.396)--(3.850,3.396);
\draw[gp path] (11.119,3.396)--(11.029,3.396);
\draw[gp path] (3.760,3.460)--(3.850,3.460);
\draw[gp path] (11.119,3.460)--(11.029,3.460);
\draw[gp path] (3.760,3.517)--(3.850,3.517);
\draw[gp path] (11.119,3.517)--(11.029,3.517);
\draw[gp path] (3.760,3.567)--(3.940,3.567);
\draw[gp path] (11.119,3.567)--(10.939,3.567);
\node[gp node right] at (3.300,3.567) {$0.1$};
\draw[gp path] (3.760,3.899)--(3.850,3.899);
\draw[gp path] (11.119,3.899)--(11.029,3.899);
\draw[gp path] (3.760,4.093)--(3.850,4.093);
\draw[gp path] (11.119,4.093)--(11.029,4.093);
\draw[gp path] (3.760,4.231)--(3.850,4.231);
\draw[gp path] (11.119,4.231)--(11.029,4.231);
\draw[gp path] (3.760,4.338)--(3.850,4.338);
\draw[gp path] (11.119,4.338)--(11.029,4.338);
\draw[gp path] (3.760,4.425)--(3.850,4.425);
\draw[gp path] (11.119,4.425)--(11.029,4.425);
\draw[gp path] (3.760,4.499)--(3.850,4.499);
\draw[gp path] (11.119,4.499)--(11.029,4.499);
\draw[gp path] (3.760,4.563)--(3.850,4.563);
\draw[gp path] (11.119,4.563)--(11.029,4.563);
\draw[gp path] (3.760,4.620)--(3.850,4.620);
\draw[gp path] (11.119,4.620)--(11.029,4.620);
\draw[gp path] (3.760,4.670)--(3.940,4.670);
\draw[gp path] (11.119,4.670)--(10.939,4.670);
\node[gp node right] at (3.300,4.670) {$1$};
\draw[gp path] (3.760,5.002)--(3.850,5.002);
\draw[gp path] (11.119,5.002)--(11.029,5.002);
\draw[gp path] (3.760,5.196)--(3.850,5.196);
\draw[gp path] (11.119,5.196)--(11.029,5.196);
\draw[gp path] (3.760,5.334)--(3.850,5.334);
\draw[gp path] (11.119,5.334)--(11.029,5.334);
\draw[gp path] (3.760,5.441)--(3.850,5.441);
\draw[gp path] (11.119,5.441)--(11.029,5.441);
\draw[gp path] (3.760,5.528)--(3.850,5.528);
\draw[gp path] (11.119,5.528)--(11.029,5.528);
\draw[gp path] (3.760,5.602)--(3.850,5.602);
\draw[gp path] (11.119,5.602)--(11.029,5.602);
\draw[gp path] (3.760,5.666)--(3.850,5.666);
\draw[gp path] (11.119,5.666)--(11.029,5.666);
\draw[gp path] (3.760,5.723)--(3.850,5.723);
\draw[gp path] (11.119,5.723)--(11.029,5.723);
\draw[gp path] (3.760,5.773)--(3.940,5.773);
\draw[gp path] (11.119,5.773)--(10.939,5.773);
\node[gp node right] at (3.300,5.773) {$10$};
\draw[gp path] (3.760,6.105)--(3.850,6.105);
\draw[gp path] (11.119,6.105)--(11.029,6.105);
\draw[gp path] (3.760,6.299)--(3.850,6.299);
\draw[gp path] (11.119,6.299)--(11.029,6.299);
\draw[gp path] (3.760,6.437)--(3.850,6.437);
\draw[gp path] (11.119,6.437)--(11.029,6.437);
\draw[gp path] (3.760,6.544)--(3.850,6.544);
\draw[gp path] (11.119,6.544)--(11.029,6.544);
\draw[gp path] (3.760,6.631)--(3.850,6.631);
\draw[gp path] (11.119,6.631)--(11.029,6.631);
\draw[gp path] (3.760,6.705)--(3.850,6.705);
\draw[gp path] (11.119,6.705)--(11.029,6.705);
\draw[gp path] (3.760,6.769)--(3.850,6.769);
\draw[gp path] (11.119,6.769)--(11.029,6.769);
\draw[gp path] (3.760,6.826)--(3.850,6.826);
\draw[gp path] (11.119,6.826)--(11.029,6.826);
\draw[gp path] (3.760,6.876)--(3.940,6.876);
\draw[gp path] (11.119,6.876)--(10.939,6.876);
\node[gp node right] at (3.300,6.876) {$100$};
\draw[gp path] (3.760,7.208)--(3.850,7.208);
\draw[gp path] (11.119,7.208)--(11.029,7.208);
\draw[gp path] (3.760,7.402)--(3.850,7.402);
\draw[gp path] (11.119,7.402)--(11.029,7.402);
\draw[gp path] (3.760,7.540)--(3.850,7.540);
\draw[gp path] (11.119,7.540)--(11.029,7.540);
\draw[gp path] (3.760,7.647)--(3.850,7.647);
\draw[gp path] (11.119,7.647)--(11.029,7.647);
\draw[gp path] (3.760,7.734)--(3.850,7.734);
\draw[gp path] (11.119,7.734)--(11.029,7.734);
\draw[gp path] (3.760,7.808)--(3.850,7.808);
\draw[gp path] (11.119,7.808)--(11.029,7.808);
\draw[gp path] (3.760,7.872)--(3.850,7.872);
\draw[gp path] (11.119,7.872)--(11.029,7.872);
\draw[gp path] (3.760,7.929)--(3.850,7.929);
\draw[gp path] (11.119,7.929)--(11.029,7.929);
\draw[gp path] (3.760,7.979)--(3.940,7.979);
\draw[gp path] (11.119,7.979)--(10.939,7.979);
\node[gp node right] at (3.300,7.979) {$1000$};
\draw[gp path] (4.160,2.464)--(4.160,2.554);
\draw[gp path] (4.783,2.464)--(4.783,2.554);
\draw[gp path] (5.297,2.464)--(5.297,2.554);
\draw[gp path] (5.734,2.464)--(5.734,2.554);
\draw[gp path] (6.114,2.464)--(6.114,2.554);
\draw[gp path] (6.451,2.464)--(6.451,2.644);
\node[gp node center] at (6.451,1.694) {$40$};
\draw[gp path] (3.760,7.979)--(3.760,2.464)--(11.119,2.464)--(11.119,7.979)--cycle;
\node[gp node center,rotate=-270] at (0.730,5.221) {Execution time [min.]};
\node[gp node center] at (7.439,0.539) {Number of states of the SUT $\SUT$};
\gpcolor{rgb color={0.580,0.000,0.827}}
\gpsetlinewidth{5.00}
\draw[gp path] (3.760,3.272)--(7.832,5.019)--(9.142,6.020)--(10.213,6.914)--(11.119,7.234);
\gpsetpointsize{12.00}
\gppoint{gp mark 7}{(3.760,3.272)}
\gppoint{gp mark 7}{(7.832,5.019)}
\gppoint{gp mark 7}{(9.142,6.020)}
\gppoint{gp mark 7}{(10.213,6.914)}
\gppoint{gp mark 7}{(11.119,7.234)}
\gpcolor{rgb color={0.000,0.620,0.451}}
\gpsetdashtype{gp dt 3}
\draw[gp path] (3.760,3.115)--(3.834,3.156)--(3.909,3.197)--(3.983,3.238)--(4.057,3.279)%
  --(4.132,3.320)--(4.206,3.361)--(4.280,3.402)--(4.355,3.443)--(4.429,3.484)--(4.503,3.525)%
  --(4.578,3.566)--(4.652,3.607)--(4.726,3.648)--(4.801,3.689)--(4.875,3.730)--(4.949,3.771)%
  --(5.024,3.812)--(5.098,3.853)--(5.172,3.894)--(5.247,3.935)--(5.321,3.976)--(5.395,4.017)%
  --(5.470,4.058)--(5.544,4.099)--(5.618,4.140)--(5.693,4.181)--(5.767,4.222)--(5.841,4.263)%
  --(5.916,4.304)--(5.990,4.345)--(6.064,4.386)--(6.139,4.427)--(6.213,4.468)--(6.287,4.509)%
  --(6.362,4.550)--(6.436,4.591)--(6.510,4.632)--(6.585,4.673)--(6.659,4.714)--(6.733,4.755)%
  --(6.808,4.796)--(6.882,4.836)--(6.956,4.877)--(7.031,4.918)--(7.105,4.959)--(7.179,5.000)%
  --(7.254,5.041)--(7.328,5.082)--(7.402,5.123)--(7.477,5.164)--(7.551,5.205)--(7.625,5.246)%
  --(7.700,5.287)--(7.774,5.328)--(7.848,5.369)--(7.923,5.410)--(7.997,5.451)--(8.071,5.492)%
  --(8.146,5.533)--(8.220,5.574)--(8.294,5.615)--(8.369,5.656)--(8.443,5.697)--(8.517,5.738)%
  --(8.592,5.779)--(8.666,5.820)--(8.740,5.861)--(8.815,5.902)--(8.889,5.943)--(8.963,5.984)%
  --(9.038,6.025)--(9.112,6.066)--(9.186,6.107)--(9.261,6.148)--(9.335,6.189)--(9.409,6.230)%
  --(9.484,6.271)--(9.558,6.312)--(9.632,6.353)--(9.707,6.394)--(9.781,6.435)--(9.855,6.476)%
  --(9.930,6.517)--(10.004,6.558)--(10.078,6.599)--(10.153,6.639)--(10.227,6.680)--(10.301,6.721)%
  --(10.376,6.762)--(10.450,6.803)--(10.524,6.844)--(10.599,6.885)--(10.673,6.926)--(10.747,6.967)%
  --(10.822,7.008)--(10.896,7.049)--(10.970,7.090)--(11.045,7.131)--(11.119,7.172);
\gpcolor{color=gp lt color border}
\gpsetdashtype{gp dt solid}
\gpsetlinewidth{1.00}
\draw[gp path] (3.760,7.979)--(3.760,2.464)--(11.119,2.464)--(11.119,7.979)--cycle;
\gpdefrectangularnode{gp plot 1}{\pgfpoint{3.760cm}{2.464cm}}{\pgfpoint{11.119cm}{7.979cm}}
\end{tikzpicture}
  \caption{log-log plot}%
  \label{figure:ProbBBC_scalability:log}
 \end{subfigure}
 \caption{State size of the SUT and the mean execution time to estimate a probability larger than $97.5\%$ of the true one, per SUT size in \RandomGridWorld{}. The dashed curve is $y = 1.99 \times 10^{-4} x^{3.38}$, obtained by regression.}%
 \label{figure:ProbBBC_scalability}
\end{figure}
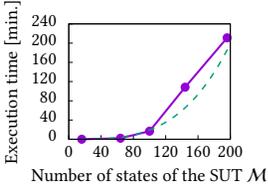
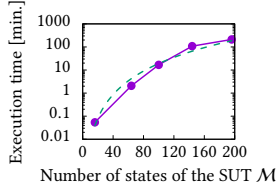
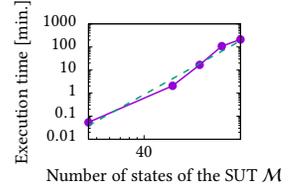
In \cref{figure:ProbBBC_scalability}, we observe that, on average, \ourTool{} can find a near-optimal controller for SUTs with 196 states within around $3.5$ hours.
In \cref{figure:ProbBBC_scalability}, we also observe that the time to find a near-optimal controller nearly follows a polynomial curve, which coincides with the polynomial complexity of \Lstar{}~\cite{DBLP:journals/iandc/Angluin87}.
Therefore, we answer RQ7 as follows.

\rqanswer{RQ7}{Experiments suggest that \ourTool{}  can generate near-optimal controllers in polynomial time with respect to system size.}

In the execution log, we find that ProbBBC often takes quite a long time to make the observation table satisfy the requirements to construct an MDP.\@
These requirements prevent us from generating MDPs different from the target one, for example, if the transition function is inconsistent.
Nevertheless, It is a possible future direction to construct an MDP even if the observation table does not satisfy some of the requirements, \eg{} using passive MDP learning~\cite{DBLP:journals/corr/abs-1212-3873,DBLP:journals/ml/MaoCJNLN16}.
Such an MDP is imprecise but may be still useful for generating a near-optimal strategy.

\section{Conclusions and future work}\label{sec:conclusion}

We propose \emph{probabilistic black-box checking (ProbBBC)}, an extension of BBC for stochastic systems\LongVersion{, by combining
active MDP learning, probabilistic model checking, and statistical hypothesis testing}.
We conducted experiments to evaluate the performance of ProbBBC using benchmarks related to CPS or IoT scenarios.
Our experiment results suggest that ProbBBC outperforms an existing approach~\cite{DBLP:journals/fmsd/AichernigT19}, especially when the SUT is complex (\eg{} \SecondGridWorld{}) or the observability is limited (\eg{} \SlotMachineWithSuppressedOutputs{}).

As formulas of specifications, ProbBBC uses safety LTL.\@
Using other logic, such as probabilistic computation tree logic (PCTL)~\cite{DBLP:reference/mc/BaierAFK18}, or PCTL*~\cite{DBLP:conf/fsttcs/BiancoA95}, is one of the future directions.
Another future direction is conducting a case study with larger and more practical benchmarks.

%
%
\begin{FinalVersionBlock}
\begin{acks}
This work was partially supported by
JST CREST Grant No.\ JPMJCR2012,
JST PRESTO Grant No.\ JPMJPR22CA,
JST ACT-X Grant No.\ JPMJAX200U, and
JSPS KAKENHI Grant No.\ 22K17873 \& 19H04084.
\end{acks}
\end{FinalVersionBlock}

\bibliographystyle{ACM-Reference-Format}
\ifdefined\VersionLong%
	\newcommand{\IJFCS}{International Journal of Foundations of Computer Science}
	\newcommand{\JLAP}{Journal of Logic and Algebraic Programming}
	\newcommand{\LNCS}{Lecture Notes in Computer Science}
	\newcommand{\STTT}{International Journal on Software Tools for Technology Transfer}
	\newcommand{\ToPNoC}{Transactions on Petri Nets and Other Models of Concurrency}
\else
	\newcommand{\IJFCS}{International Journal of Foundations of Computer Science}
	\newcommand{\JLAP}{Journal of Logic and Algebraic Programming}
	\newcommand{\LNCS}{LNCS}
	\newcommand{\STTT}{International Journal on Software Tools for Technology Transfer}
	\newcommand{\ToPNoC}{Transactions on Petri Nets and Other Models of Concurrency}
\fi
\bibliography{refs}


\ifdefined\VersionWithComments%
\newpage
\listoftodos{}
\fi

\ifdefined\WithReply%
	\clearpage
	\newpage
	\input{letter2.tex}
\fi

\end{document}
\endinput
